\begin{document}



\section{Introduction}

Characterization of the similarities between elements of a domain of sensory or semantic information is important for many reasons. First, these similarities, and the relationships between them, \cite{Edelman1998, Kemp2008, Tversky1977}, reveal the cognitive structure of the domain. Similarities are functionally important as they are the substrate for learning, generalization, and categorization \cite{Kemp2008, Saxe2019, Shepard1958, Zaidi2013}. At a mechanistic level, the quantification of similarities provides a way to test hypotheses concerning their neural substrates \cite{Kriegeskorte2013}. Thus, measuring of perceptual similarities, and using these judgments to make inferences about the geometry of the underlying perceptual spaces, plays an important role in cognitive and systems neuroscience. \par
The goal of this work is to present a novel approach that complements the standard strategies used for this purpose. The starting point for the present approach, in common with standard strategies, is a set of triadic similarity judgments: is stimulus $x$ or stimulus $y$ more similar to a reference stimulus $r$? To make geometric inferences from such data, one standard approach is to make use of a variant of multidimensional scaling \cite{deLeeuw1982, Knoblauch2008, Maloney2003, Tsogo2000, Victor2017, Waraich2022}, i.e., to associate the stimuli with points in a space, so that the distances between the points account for the perceptual similarities. Once these points are determined, inferences can be made about the dimensionality of the space, its curvature, and its topology. A second approach, topological data analysis, makes use of the distances directly, and then invokes graph-theoretic procedures \cite{Dabaghian2012, Giusti2015, Guidolin2022, Singh2008, Zhou2018} to infer these geometric features. \par
In applying these approaches to experimental data, one must deal with the fact that even if a forced-choice response is required, the response likely represents an underlying choice probability -- and that this choice probability may depend on sensory noise, noise in how distances are mentally computed and transformed into dis-similarities, and noise in the decision process in which dis-similarities are compared. As a consequence, analysis of an experimental dataset requires, at least implicitly, substantial assumptions. Such assumptions are not always benign: a monotonic transformation of distances -- which preserves binary similarity judgments -- can alter the dimensionality and curvature of a multidimensional scaling model \cite{Kruskal1978}. Topological data analysis via persistent homologies, which also only makes use of rank orders of distances, is invariant to a global monotonic transformation of distances, but makes other assumptions (for example, that this transformation is the same across the domain), and does not typically take into account a noise model. \par
With these considerations in mind, here we pursue an approach to make inferences from the choice probabilities themselves, as estimated from repeated triadic judgments. Our main assumption is that if, for any particular triad, comparison stimulus $x$ is chosen more often than comparison stimulus $y$ as closer to a reference stimulus $r$, then the distance between $x$ and $r$ is less than the distance between $y$ and $r$. Note that we do not make any assumptions about how relative or absolute distances are transformed into choice probabilities within an individual triad, or whether this transformation is the same across the domain.\par
As we show, despite the relative paucity of assumptions, the approach nevertheless provides indices that characterize a set of similarity judgments in three useful ways. The first index quantifies the extent to which the similarity judgments are consistent with conditions necessary for symmetry  -- i.e., that in judging similarity, reference and comparison stimuli are treated in the same way. While at first glance one might expect that similarity judgments always reflect symmetric distances and most models (including those considered here) make this assumption, this need not be the case \cite{Tversky1977, Tversky1982} -- so it is useful to have a way to determine whether an experimental dataset implies violations of symmetry. Applying our approach to test for symmetry is also the simplest of the three we consider. \par
The second index quantifies consistency with an ultrametric model, a geometry that is a formalization of strict hierarchical structure \cite{Semmes2007}. In an ultrametric model, elements of the space correspond to the leaf nodes on a rooted tree, and distance between two nodes is determined by the height of the first common ancestor. Such structure has been postulated for perceptual domains with semantic content \cite{Kemp2008, Saxe2019, Treves1997, Tversky1986}. Consideration of ultrametric models is also motivated by evidence that representation of olfactory \cite{Zhou2018} and physical \cite{Zhang2023} space may have a hyperbolic geometry, as hyperbolic geometries provide natural embeddings of trees \cite{DeSa2018}. \par
The third index quantifies consistency with an addtree (or additive tree) model, a generalization of the ultrametric model. In an addtree model, the distance between two points is determined by the length of the path on an acyclic graph. An addtree model allows for more flexible clustering and may be characteristic of semantic domains \cite{Kemp2008, Sattath1977}. Moreover, since a one-dimensional domain is a special case of an addtree \cite{Sattath1977}, exclusion of addtree structure implies that a one-dimensional model cannot account for the rank order of similarity judgments -- and therefore, cannot account for any model in which perceptual distances are monotonically related to these judgments. Addtree models are in a sense piecewise linear, and thus also may be appropriate models for olfactory perceptual spaces \cite{Wilson2017}. \par
The organization of this paper is as follows. The first section sets out a formal framework of the approach and includes rigorous mathematical results. The second section moves from this framework to procedures that can be applied to experimental data, yielding indices that characterize consistency with symmetry, ultrametric structure and addtree structure. To deal with the fact that the choice probabilities are estimated quantities, we use a Bayesian approach, and details of its implementation differ for the three indices. In particular, integration over a local prior extends exactly to a global prior for the symmetry index and the ultrametric index, but is only approximate for addtree. We then apply the method to synthetic datasets. These examples demonstrate the ability of the indices to characterize similarity structure, that the characterization is largely insensitive to the Bayesian prior, and illustrate how interpretation can be augmented by analysis of surrogate datasets. The final results section applies the method to experimental data from three visual domains. The Discussion considers caveats, limitations, alternative strategies, implications for experimental design, and avenues for further development.

\section{Theory}
\subsection*{Overview, key terms, and preliminaries}
Our goal is to develop indices that characterize a dataset of triadic similarity judgments, in a way
that provides insight into the structure of the underlying perceptual space. Our central assumption is that,
within a triadic judgment, the probability that a participant chooses one pair of stimuli as more similar
than an alternative is monotonically related to the similarity. Typical datasets include large numbers of
similarity judgments of overlapping triads, and the relationship between these judgments contains
information about the underlying perceptual space. We show how this information can be accessed,
without further making assumptions about the specifics of the monotonic relationship between choice
probability and (dis)-similarity, whether it is constant throughout the space, or the decision process
itself.\par

\begin{figure}[htbp]
    \centering 
    \includegraphics[width=0.7\linewidth]{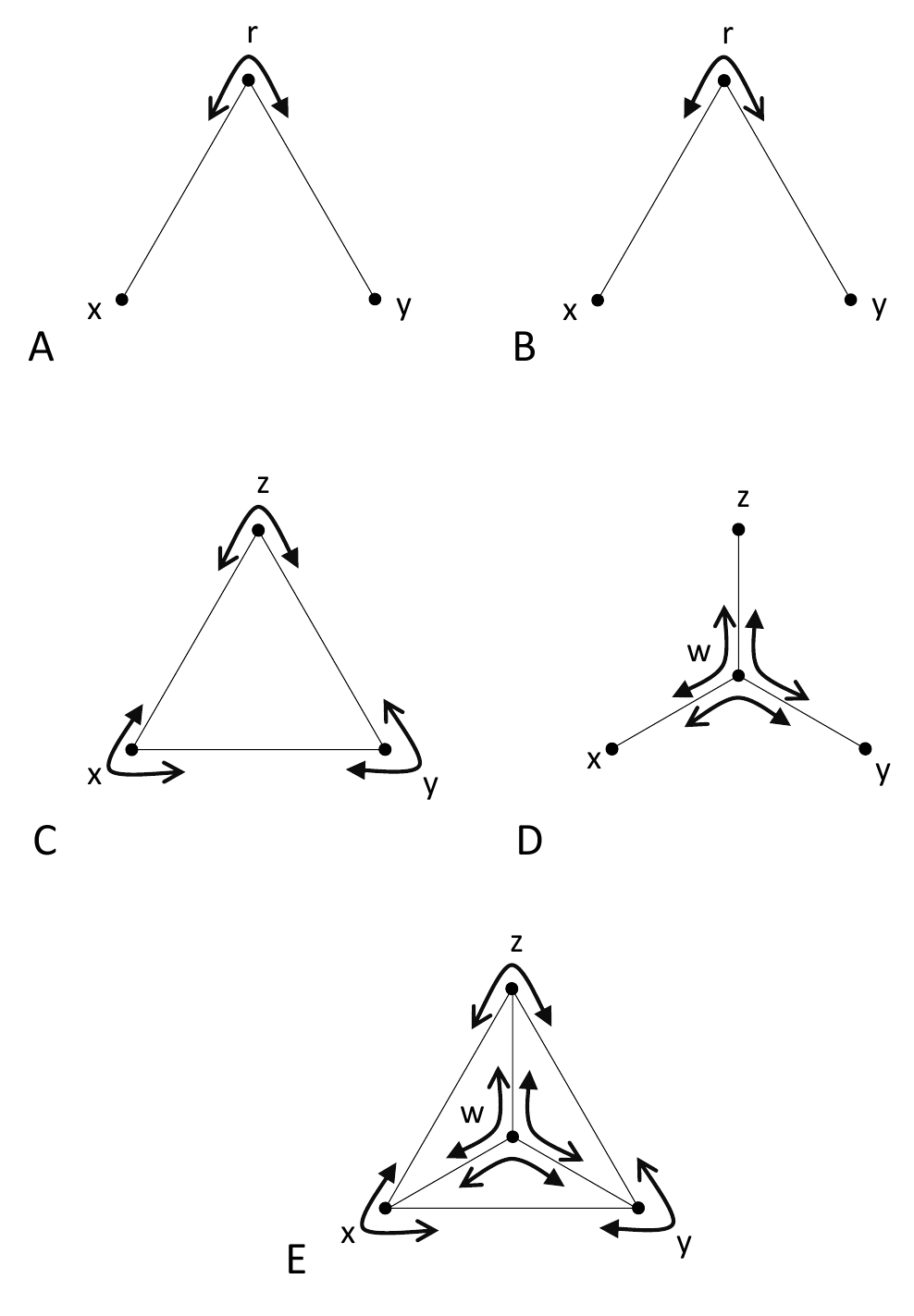}
    \caption{Panel A. A triad $(r;x,y)$ consists of a reference stimulus $r$ and two comparison stimuli, $x$ and $y$; the first comparison stimulus is indicated by the open arrow and the second by the closed arrow. Panel B. In the mirror triad $(r;y,x)$, the roles of the two comparison stimuli are reversed (note the arrowheads near the reference stimulus $r$). Panel C. A triplet is a set of three triads $\{(z;x,y),(x;y,z),(y;z,x)\}$ that can be formed from three stimuli, with each stimulus serving as a reference for the other two. Panel D. A tripod (shown here in ``top view'') is a set of three triads constructed with a single stimulus (here, $w$) serving as a reference for the other three stimuli taken in pairs. Panel E. A tent is a union of a triplet and a tripod.}
    \label{fig:1}
\end{figure}

We define a \textit{triad}, denoted $(r;x,y)$, to be an ordered triple of stimuli, consisting of a reference stimulus $r$ and two comparison stimuli, $x$ and $y$, all drawn from a set $S$ of all stimuli (Figure~\ref{fig:1}A). A triad forms the basic unit of data collection in an experiment: the participant is asked to decide, in a
forced-choice response, which of the two comparison stimuli is more similar to the reference. The \textit{mirror} (Figure~\ref{fig:1}B) of the triad $(r;x,y)$ is defined as the triad $(r;y,x)$. Several groupings of triads will be useful below. A \textit{triplet} (Figure~\ref{fig:1}C) is defined as a set of the triads $\{(z;x,y),(x;y,z),(y ;z,x)\}$ that can be formed from three stimuli, with each stimulus serving as the reference for one of the triads. A \textit{tripod}
(Figure~\ref{fig:1}D) is defined as a set of triads $\{(w;x,y),(w;y,z),(w;z,x)\}$. A \textit{tent} (Figure~\ref{fig:1}E) is defined as the
union of a triplet (the \textit{base} of the tent) and a tripod whose comparison stimuli are the stimuli in the triplet and whose common reference stimulus is the \textit{vertex} of the tent, and is denoted $\{w;x,y,z\}$, where $w$ is the vertex and the base is the triplet $\{(z;x,y),(x;y,z),(y;x,z)\}$. \par

We define the \textit{choice probability} for the triad $(r;x,y)$, denoted $R(r;x,y)$, to be the probability
that a participant judges $x$ as more similar to $r$ than $y$ is to $r$. We assume that the two comparison
stimuli in a triad are treated equivalently, i.e., that the response to a triad can also be considered to be the
alternative response to its mirror. That is, 
\begin{equation}\label{eq:1.1}
    R(r;x,y) + R(r;y,x) = 1.
\end{equation}

We view the choice probability $R(r;x,y)$ as an unknown to be estimated from an experiment in which the triad $(r;x,y)$, among others, is presented; this is discussed in detail in the
Implementation section below. We do not require that the experiment explore each triad or its mirror (in an experiment with $M$ stimuli, there are $M\binom{M-1}{2} = \frac{1}{2}M(M-1)(M-2)$ pairs of triads and their mirrors), though clearly greater coverage of the triads, and more repeats of each, will lead to better estimates of the choice probabilities. 
We also assume that an experimenter chooses, randomly, whether to present a triad or its mirror. \par
We assume that a triadic judgment is the result of a two-step process: first, estimation of the dis-similarity between the reference and each of the comparison stimuli, and second, comparison of these dis-similarities. 
We denote the dis-similarity of a comparison stimulus $z$ to a reference stimulus $r$ by $D(r,z)$. Our central assumption is that a participant is more likely to judge that $x$ is more similar to $r$ than $y$ is to $r$ if, and only if, $D(r,x) < D(r, y)$. That is,
\begin{equation}\label{eq:1.2}
    R(r;x,y) > \frac{1}{2} \iff D(r,x) < D(r,y).
\end{equation}
Because $\eqref{eq:1.1}$ holds for every triad and its mirror, an immediate consequence is that a choice probability of exactly $\frac{1}{2}$ only occurs when dis-similarities are exactly equal: combining $\eqref{eq:1.1}$ with $\eqref{eq:1.2}$ yields

\begin{equation}\label{eq:1.3}
    R(r;x,y) = \frac{1}{2} \iff D(r,x) = D(r,y).    
\end{equation}
We also assume that dis-similarities are non-negative, and that a dis-similarity of zero only occurs for
stimuli that are identical:
\begin{equation}\label{eq:1.4}
    D(x,y) \geq 0 \quad \text{and} \quad D(x,y) = 0 \iff x=y.
\end{equation}
Note, however, that we do not assume that dis-similarities are symmetric, i.e., that $D(x,y) = D(y,x).$ Rather, we will test whether a dataset is compatible with this constraint -- essentially, whether the
reference and comparison stimuli are treated equivalently. We use the term \textit{``symmetric distance''} for a function $d(x,y)$ that satisfies $\eqref{eq:1.4}$ and also $d(x,y)=d(y,x)$. Note that a symmetric distance need not satisfy the triangle inequality, and therefore need not be a metric (see Appendix~\ref{appendix1}). \par

Eqs. \eqref{eq:1.2} and \eqref{eq:1.3} formalize our focus on rank order of dis-similarities. That is, rather than assume (or attempt to infer) a quantitative relationship between the choice probability and perceived dis-
similarities, we only make use of the sign of comparisons -- for example, that an alligator and toothpaste are more dis-similar than an alligator and a panda -- but we do not attempt to infer the size of this difference, in absolute terms, relative to other dis-similarities not included in triads that have been presented, or relative to an internal noise. \par
The analyses below will ask whether the choice probabilities, and hence, the dis-similarities, suffice to rule out a model whose distances conform to a specific kind of geometric space. We consider two ways of formalizing this notion, each based on a set of stimuli, a set $T$ of triads formed from them, and a function $f$ on (reference, comparison) pairs within the triads. Given this set-up, we define a basic notion of compatibility: a set of dis-similarities $D$ is \textit{pointwise compatible} with $f$ if, for all the triads $(r;x,y) \in T$, the rank-order of pairwise dis-similarities is the same as the rank-order of values assigned by $f$:
\begin{equation}\label{eq:1.5}
    D(r,x) < D(r,y) \iff f(r,x) < f(r,y).
\end{equation}
This contrasts with a stronger definition: the dis-similarities $D$ is \textit{setwise compatible} with $f$ if for all
$(r;x,y) \in T$, there is a monotonic-increasing function $F$ for which
\begin{equation}\label{eq:1.6}
    f(x,y) = F(D(x,y)).
\end{equation}
We focus primarily on pointwise compatibility, a focus that considers only the relationships between dis-similarities that jointly occur in triadic comparisons. Since setwise compatibility implies pointwise compatibility, a set of choice probabilities that rules out pointwise compatibility necessarily rules out setwise compatibility. \par
The present analysis, which applies most directly to a paradigm in which each trial is devoted to judgment of a single triad, is applicable to other paradigms in which individual trials yield judgments about more than one triad, provided that each judgment can be considered to be independent of the context in which it is made. For example, in the ``odd one out'' paradigm (also known as the ``oddity task'' \cite{Kingdom2016}), three stimuli are presented and the participant is asked to choose which one is the outlier. Here, a selection of a stimulus $x_j$ out of a set $\{x_j, x_k, x_l\}$ can be interpreted as a judgment that $D(x_k, x_j) > D(x_k, x_l)$ and also that $D(x_l, x_j) > D(x_l, x_k)$, and thus contributes to estimates of choice probabilities for two triads, $(x_k; x_j, x_l)$ and $(x_l;x_j,x_k)$. The analysis is also applicable to paradigms in which they are to rank $m$ comparison stimuli $x_1,...,x_m$ in order of similarity to a shared reference stimulus $r$ \cite{Waraich2022}. The ranking obtained on each trial then contributes to estimation of choice probabilities for $\binom{m}{2}$ triads $(r;x_k, x_l)$, one for each pair of comparison stimuli.\par

\subsection*{Choice probabilities and conditions for compatibility with distance-based models}
The elements introduced above enable characterization of the choice probabilities in three ways: compatibility of dis-similarities with a symmetric distance, compatibility with an ultrametric model, and compatibility with an addtree model. The characterizations for symmetry and ultrametric structure are the most straightforward since these properties do not make use of additive structure.

\subsubsection*{Symmetry}

We consider symmetry first, as it is a fundamental property of distances and it is also the simplest of the three characterizations. We focus on the choice probabilities among members of a triplet of stimuli, $\{x, y, z\}$, which, for brevity, we denote $R_1 \triangleq R(x;y,z), R_2 \triangleq R(y;z,x), R_3 \triangleq R(z;x,y)$. These values fully characterize the choice probabilities among the three stimuli, as the three other triads
consisting of these stimuli are all mirrors of one of the $R_i$.\par
\begin{proposition}[ordinal conditions for symmetry]\label{pr:1}
    For any triplet composed of the stimuli $\{x, y, z\}$, the dis-similarities are pointwise compatible with a symmetric distance if, and only if, one of the following conditions hold. With $n_{half}$ as the number of $R_i$ that are exactly equal to $\frac{1}{2}$,
    \begin{enumerate}
        \item $n_{half} \in \{0, 1\}$ and the $R_i-\frac{1}{2}$ include both positive and negative values \\
        or
        \item $n_{half} = 3.$
    \end{enumerate}
\end{proposition}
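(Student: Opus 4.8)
The plan is to reduce pointwise compatibility to a purely algebraic statement about the signs of three real numbers that sum to zero. First I would posit a candidate symmetric distance on the triplet and record its three values, writing $a = d(y,z)$, $b = d(z,x)$, $c = d(x,y)$; since $d$ is symmetric these are the only quantities in play. Reading off the three triads that define $R_1,R_2,R_3$, the reference stimulus in each compares exactly two of these values: $(x;y,z)$ compares $c$ with $b$, $(y;z,x)$ compares $a$ with $c$, and $(z;x,y)$ compares $b$ with $a$. Combining the central assumptions \eqref{eq:1.2}--\eqref{eq:1.3}, which give $\operatorname{sign}\!\ABS{R_i-\tfrac12}$ in terms of the $D$-differences, with the definition \eqref{eq:1.5} of pointwise compatibility, I would show that $D$ is pointwise compatible with $d$ if and only if
\[
\operatorname{sign}\!\ABS{R_1-\tfrac12}=\operatorname{sign}(b-c),\quad
\operatorname{sign}\!\ABS{R_2-\tfrac12}=\operatorname{sign}(c-a),\quad
\operatorname{sign}\!\ABS{R_3-\tfrac12}=\operatorname{sign}(a-b).
\]
Thus the three triads of a triplet furnish precisely the three pairwise comparisons among $a$, $b$, $c$, and the question becomes: which sign patterns of $(b-c,\,c-a,\,a-b)$ are realizable by positive reals?

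The key structural fact is that these three differences sum to zero. For necessity, suppose $D$ is pointwise compatible; then the signs of the $R_i-\tfrac12$ are the signs of three reals summing to zero. Consequently they cannot all be strictly positive, nor all strictly negative (either would force a strict $3$-cycle such as $b>c>a>b$); and if two of the differences vanish the third must vanish as well, which rules out $n_{half}=2$. Hence $n_{half}\in\{0,1,3\}$, and whenever $n_{half}\in\{0,1\}$ at least one positive and one negative value must occur. This is exactly the disjunction of conditions~1 and~2.

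For sufficiency I would simply exhibit values realizing an admissible pattern. If $n_{half}=3$, take $a=b=c=1$: by \eqref{eq:1.3} all three pairwise dis-similarities are equal and every comparison is a tie on both sides, so compatibility holds trivially. If $n_{half}\in\{0,1\}$ with both signs present, the prescribed comparisons describe a consistent weak order on $\{a,b,c\}$ — the excluded all-same-sign patterns are precisely those that would force a strict $3$-cycle — so I can assign values drawn from $\{1,2,3\}$ respecting that order and define $d$ accordingly, whereupon the induced signs match the data by construction. Verifying that this $d$ satisfies \eqref{eq:1.4} is immediate, since the assigned values are positive and the three stimuli are distinct.

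I expect the main obstacle to be bookkeeping rather than conceptual depth: getting the orientation of each inequality correct when passing from \eqref{eq:1.2} through \eqref{eq:1.5}, and treating the equality cases carefully so that the count $n_{half}$ on the probability side lines up exactly with the count of vanishing differences on the distance side. Once the reduction to ``signs of three reals summing to zero'' is in place, both directions are elementary, and the sum-to-zero identity does essentially all of the work.
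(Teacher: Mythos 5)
Your proof is correct, but it takes a genuinely different route from the paper's. The paper argues by direct case analysis on the excluded sign patterns: for $n_{half}=2$ it chases a chain of equalities (eq.~\eqref{eq:2.1}) to force the third choice probability to $\tfrac12$, and for the all-same-sign case it assembles the cyclic chain of inequalities \eqref{eq:2.4}, using symmetry of $d$ at each link, to reach $d(x,y)<d(y,x)$; sufficiency is then an explicit construction anchored to the observed $D$-values with additive offsets $k$. You instead parametrize a symmetric $d$ on the triplet by its three values $a,b,c$, observe that the three triads realize exactly the three pairwise comparisons among them, and reduce everything to the identity $(b-c)+(c-a)+(a-b)=0$: the admissible sign patterns of a zero-sum triple are precisely ``all zero or both signs present,'' which handles the $n_{half}=2$ case and the same-sign case in one stroke and makes the shape of $\Omega_{\text{sym}}$ in eq.~\eqref{eq:2.5} transparent; your sufficiency step (values from $\{1,2,3\}$) is also simpler than the paper's. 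What the paper's chain-of-inequalities formulation buys is that it is exactly the template that generalizes to the $n$-cycle obstruction of eqs.~\eqref{eq:2.6}--\eqref{eq:2.7} discussed immediately afterward (though your telescoping identity generalizes there too). Two small points to tighten: the parenthetical claiming the excluded patterns ``force a strict $3$-cycle'' is imprecise for patterns like $(+,+,0)$, which force a cycle with only one strict inequality --- harmless, since your actual argument is the zero-sum identity, not the cycle; and the claim that every both-signs-present pattern yields a consistent weak order on $\{a,b,c\}$ deserves either the short enumeration or the cleaner observation that $(a,b,c)\mapsto(b-c,c-a,a-b)$ surjects onto the zero-sum plane, after which positivity is restored by adding a constant.
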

\begin{proof}
    To show that these conditions are required for pointwise compatibility with symmetry: If none of them hold, then either $n_{half} = 2$, or $n_{half} \in \{0, 1\}$, and the nonzero values of $R_i - \frac{1}{2}$ are all of the same sign. \par
    Case 1: $n_{half} = 2.$ Without loss of generality (WLOG), say $R_1 = R_2$ but $R_3 \neq \frac{1}{2}.$ Then $D(x, y) = D(x, z), D(y, z) = D(y, x).$ If $D$ were pointwise compatible with a symmetric $d$, then
    \begin{equation}\label{eq:2.1}
        d(z,x) = d(x,z) = d(x,y) = d(y,x) = d(y,z) = d(z,y).
    \end{equation}
    Pointwise compatibility then requires that $D(z,x)=D(z,y),$ which contradicts $R_3 \neq \frac{1}{2}.$\par
    Case 2: $n_{half} \in \{0, 1\}.$ There are at least two nonzero values of $R_i - \frac{1}{2},$ and these nonzero values are all the same sign. WLOG assume that two of the $R_i$ are $> \frac{1}{2}$, and $R_2$ is one of these. Then, $R_1 \geq \frac{1}{2}, R_2 > \frac{1}{2}, R_3 \geq \frac{1}{2}$ requires that
    \begin{equation}\label{eq:2.2}
    \left.\begin{aligned}
        D(x, y) \leq D(x, z) \\
        D(z, x) \leq D(z,y) \\
        D(y, z) < D(y, x)
    \end{aligned}\right\}.
    \end{equation}
    If $D$ is pointwise compatible with a symmetric distance $d$, then \eqref{eq:2.2} implies that
        \begin{equation}\label{eq:2.3}
    \left.\begin{aligned}
        d(x, y) \leq d(x, z) \\
        d(z, x) \leq d(z,y) \\
        d(y, z) < d(y, x)
    \end{aligned}\right\}.
    \end{equation}
    Symmetry of $d$ leads to a contradiction,
    \begin{equation}\label{eq:2.4}
        d(x,y) \leq d(x, z) = d(z, x) \leq d(z, y) = d(y, z) < d(y, x). 
    \end{equation}
    
    To show that these conditions suffice for pointwise compatibility with a symmetric distance: $n_{half} = 3$ is trivial; all dis-similarities are equal. For $n_{half} \in \{0, 1\}$:, WLOG assume that $R_1 - \frac{1}{2}$ and $R_3-\frac{1}{2}$ have opposite signs; $R_2$, which compares $D(y, x)$ with $D(y, z)$ is unconstrained. Choose $d(x,y) = d(y, x) = D(y, x)$, $d(z, y) = d(y, z) = D(y, z)$; this guarantees that the rank-order of dis-similarities required by $R_2$ is respected by $d$. 
    If $R_1 > \frac{1}{2}$ and $R_3 < \frac{1}{2}$, we have $D(x, y) < D(x, z)$, $D(z, x) > D(z, y)$. Thus choosing $d(x, z) = d(z, x) = max\{d(x, y), d(z, y)\}+k$ (where $k > 0$) yields pointwise compatibility with $D$ on $\{x, y, z\}$. If instead, $R_1 < \frac{1}{2}$ and $R_3 > \frac{1}{2}$, a similar argument applies: these require $D(x, y) > D(x, z)$, $D(z, x) < D(z, y)$. Set $d(x, z) = max\{D(x, z), D(z, x)\}$ and put $d(x, y) = d(y, x) = D(y, x) + k$, $d(z, y) = d(y, z) = D(y, z) + k$, where $k$ is large enough to ensure that $d(x, y)$ and $d(z, y)$ are both larger than $d(x, z)$. 
\end{proof}
\begin{remark}\label{rem:1}
    Appendix~\ref{appendix1} shows that compatibility with a symmetric distance implies compatibility with a metric. 
\end{remark}

Proposition~\ref{pr:1} can be rephrased as follows: a necessary and sufficient condition for a triad to be pointwise compatible with a symmetric distance is that either all choice probabilities are $\frac{1}{2}$, or if not, at least one of the choice probabilities $R_i$ is strictly greater than $\frac{1}{2}$ and at least one of them is strictly less than $\frac{1}{2}$. That is, the triplet of choice probabilities $R_i$ is compatible with a symmetric distance if the three choice probabilities lie in a subset $\Omega_{\text{sym}}$ of $[0, 1]^3$ consisting of the $[0, 1]^3$ cube from which two smaller cubes, $[0, \frac{1}{2}]^3$ ($R_i \leq \frac{1}{2}$) and $[\frac{1}{2},1]^3$ ($R_i \geq \frac{1}{2}$), are removed:
    \begin{equation}\label{eq:2.5}
        \Omega_{\text{sym}} = [0, 1]^3 \setminus\left([0, \frac{1}{2}]^3 \cup [\frac{1}{2}, 1]^3\right).
    \end{equation}

\begin{figure}[htbp]
    \centering 
    \includegraphics[width=\linewidth]{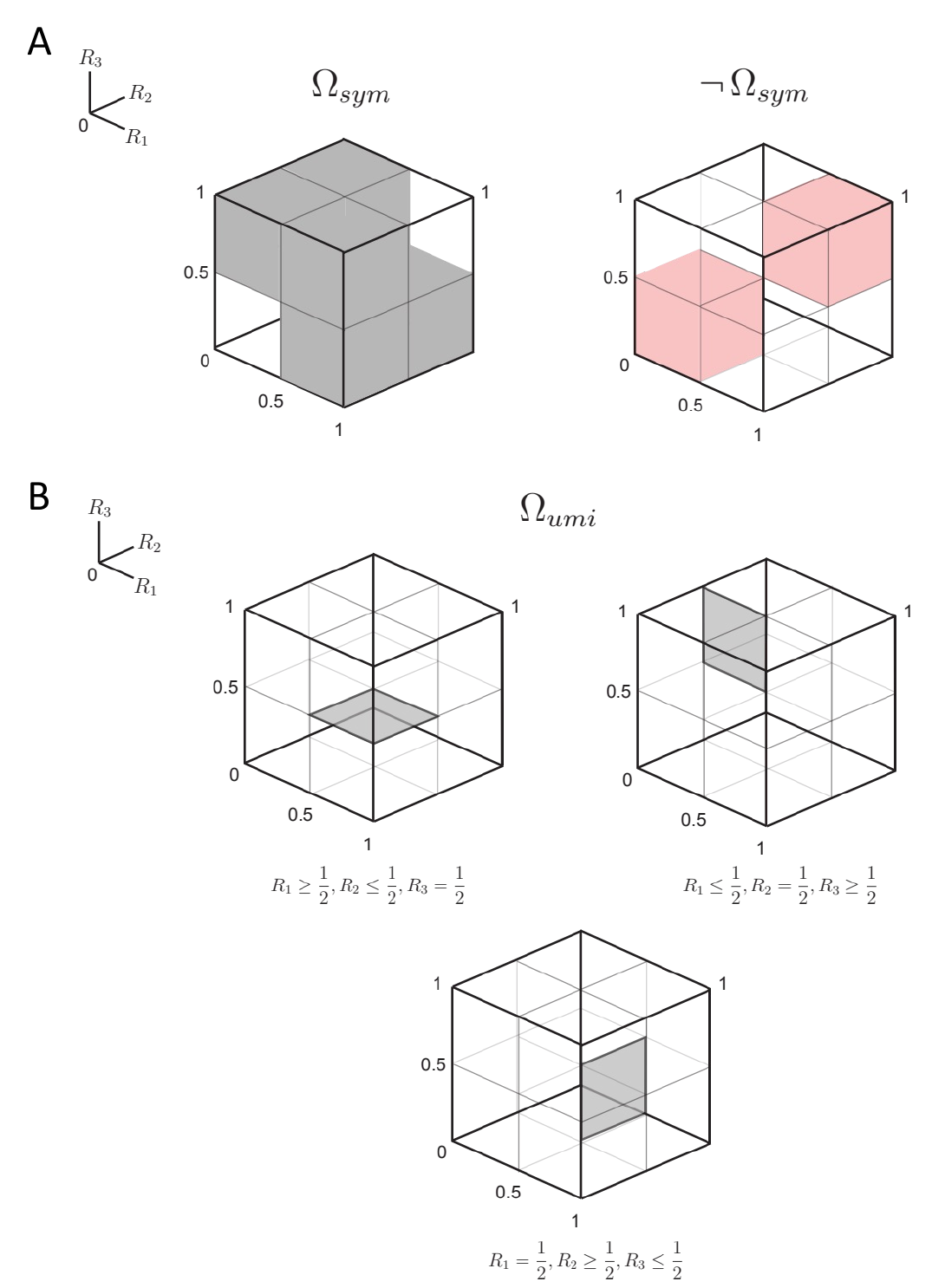}
    \caption{Panel A. The domain $\Omega_{\text{sym}}$ (eq. $\eqref{eq:2.5}$) in which the choice probabilities in a triplet are compatible with a symmetric distance (grey volumes, left) and its complement in $[0,1]^3$ (pink volumes, right). See Proposition $\ref{pr:1}$ for further details. Panel B. The domain $\Omega_{\text{umi}}$ (eq. \eqref{eq:2.9}, in which the choice probabilities in a triplet are compatible with an ultrametric distance. $\Omega_{\text{umi}}$ is the union of three orthogonal components (gray), each of which is a square of side length $\frac{1}{2}$ and its interior.}
    \label{fig:2}
\end{figure}

This domain is diagrammed in Figure~\ref{fig:2}A. \par
Note that the condition in Proposition~\ref{pr:1} is not sufficient for setwise compatibility across all stimuli, or even pointwise compatibility for sets with more than 3 stimuli. The chain of inequalities of \eqref{eq:2.2} or \eqref{eq:2.3} is the simplest of a series of necessary conditions for setwise compatibility: more generally, if, for any  $n$-cycle $(a_1, a_2, ..., a_n)$, the choice probabilities satisfy
\begin{equation}\label{eq:2.6}
     \left.\begin{aligned}
        R(a_1;a_2, a_n) \geq \frac{1}{2} \\
        R(a_n;a_1, a_{n-1}) \geq \frac{1}{2} \\
        \vdots \\
        R(a_2;a_3, a_1) \geq \frac{1}{2}
    \end{aligned}\right\}, \text{at least one inequality strict},
\end{equation}
then pointwise compatibility in $\{a_1, a_2, ..., a_n\}$ with a symmetric distance is impossible. For if \eqref{eq:2.6} holds, with the final inequality strict, then (generalizing \eqref{eq:2.4}), there would be a contradiction:
\begin{equation}\label{eq:2.7}
    d(a_1, a_2) \leq d(a_1, a_n) = d(a_n, a_1) \leq d(a_n, a_{n-1}) = d(a_{n-1}, a_n) \leq ... = d(a_2, a_3) < d(a_2, a_1).
\end{equation}

\begin{figure}
    \centering 
    \includegraphics[scale=0.5]{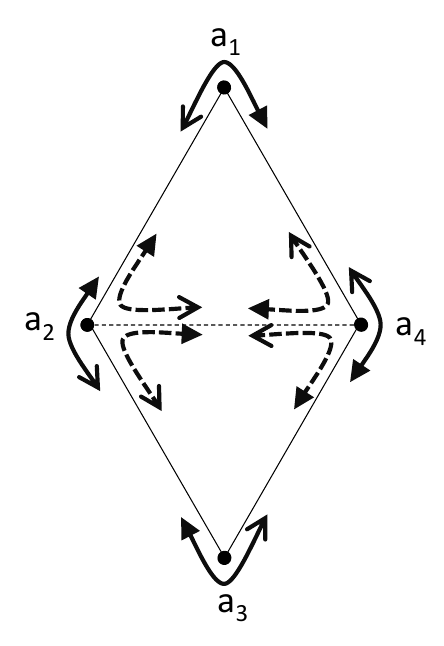}
    \caption{A configuration of four stimuli and four triads (solid arrows) corresponding to the comparisons of eq. \eqref{eq:2.6} for $n=4$. Dis-similarities among the four stimuli may fail to be pointwise compatible with a symmetric distance even though the dis-similarities among the two triplets (upper and lower triangles, and their associated solid and dashed arrows) are compatible with a symmetric distance. For further details, see text following eq. \eqref{eq:2.6}.}
    \label{fig:3}
\end{figure}

Figure~\ref{fig:3} illustrates the case of $n=4$, and shows that pointwise compatibility for comparisons among three stimuli does not imply pointwise compatibility for comparisons among four stimuli. In this diagram, eq. \eqref{eq:2.6} means that traversing the perimeter of the configuration yields four choice probabilities that are not all exactly $\frac{1}{2}$, but also not on both sides of $\frac{1}{2}$. As this leads to the contradiction \eqref{eq:2.7}, these dis-similarities among the four points and four triads cannot be pointwise compatible with symmetric distances. This scenario can occur even though Proposition~\ref{pr:1} holds for all triplets. To see this, set $D(a_2, a_4)$ and $D(a_4, a_2)$ to be larger than the any other dis-similarity.  This guarantees that the choice probabilities for triplets $\{(a_1;a_2, a_4), (a_2;a_4, a_1), (a_4;a_1, a_2)\}$ and $\{(a_3;a_4, a_2), (a_4;a_2, a_3), (a_2;a_3, a_4)\}$ include values that are both $< \frac{1}{2}$ and $> \frac{1}{2}$, which satisfies the conditions of Proposition~\ref{pr:1}.    

\subsubsection*{Ultrametric}

The motivation for considering compatibility with ultrametric distances begins with the observation that pointwise compatibility with a symmetric dis-similarity guarantees pointwise compatibility with a metric-space structure (Appendix~\ref{appendix1}). It is therefore natural to ask whether the dis-similarities have further properties associated with specific kinds of metric spaces. Ultrametric spaces \cite{Semmes2007} are one important such kind, as they abstract the notion of a hierarchical organization -- and have therefore been proposed as models for perceptual representations \cite{Kemp2008,Saxe2019, Treves1997, Tversky1986}. Points in an ultrametric space correspond to the terminal nodes of a tree, and the distance between two points corresponds to the height of their first common ancestor. Formally, a distance $d$ is said to satisfy the ultrametric inequality if, for any three points $x, y$, and $z$,
\begin{equation}\label{eq:2.8}
    d(x, y) \leq max\left(d(x, z), d(y, z)\right), 
\end{equation}
a condition that implies the triangle inequality (Appendix~\ref{appendix1}). Eq. \eqref{eq:2.8} states that for the three pairwise distances among three points, none can be strictly greater than the other two, i.e., at least two of the distances must be equal and the third cannot be longer.

\begin{proposition}[ordinal conditions for ultrametric]\label{pr:2}
    For a triplet composed of the stimuli $\{x, y, z\}$, the dis-similarities are pointwise compatible with an ultrametric distance if and only if the dis-similarities are symmetric and at least one of the following three hold:
    \begin{equation}\label{eq:2.9}
        \begin{aligned}
            R_1 \geq \frac{1}{2}, R_2 \leq \frac{1}{2}, R_3 = \frac{1}{2} \\
            R_1 = \frac{1}{2}, R_2 \geq \frac{1}{2}, R_3 \leq \frac{1}{2}\\
            R_1 \leq \frac{1}{2}, R_2 = \frac{1}{2}, R_3 \geq \frac{1}{2}
        \end{aligned}
    \end{equation}
    where, as before, $R_1 \triangleq R(x;y,z), R_2 \triangleq R(y;z,x), R_3\triangleq R(z;x,y).$ This domain, denoted $\Omega_{\text{umi}}$ is illustrated in Figure~\ref{fig:2}B.
\end{proposition}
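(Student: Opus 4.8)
The plan is to reduce the statement to the rank order of the three symmetric pairwise distances, using the fact that on three points a symmetric distance satisfies the ultrametric inequality \eqref{eq:2.8} if and only if its two largest pairwise distances are equal (an ``isosceles'' configuration). Writing $a=d(x,y)$, $b=d(y,z)$, $c=d(x,z)$ for a symmetric distance $d$, pointwise compatibility \eqref{eq:1.5} together with \eqref{eq:1.2}--\eqref{eq:1.3} supplies a dictionary: $R_1\ge\tfrac12\iff a\le c$, $R_2\ge\tfrac12\iff b\le a$, and $R_3\ge\tfrac12\iff c\le b$, with the reversed and equality versions immediate. I would emphasize at the outset that the biconditional on strict inequalities in \eqref{eq:1.5} transfers ties as well, so $R_i=\tfrac12$ corresponds exactly to an equality of the two distances sharing that reference; this is what lets me read the exact $R_i$ configuration off any compatible symmetric $d$.

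For the forward direction, suppose $D$ is pointwise compatible with an ultrametric $d$. Since any ultrametric is symmetric, $D$ is pointwise compatible with a symmetric distance, so by Proposition~\ref{pr:1} the dis-similarities are symmetric, which disposes of the symmetry half of the claim. For the three alternatives, I would use that $d$ ultrametric on $\{x,y,z\}$ forces the maximum of $\{a,b,c\}$ to be attained at least twice, and split into cases by which pair realizes this maximum. If $b=c\ge a$, the dictionary yields $R_1\ge\tfrac12$, $R_2\le\tfrac12$, $R_3=\tfrac12$, i.e.\ the first line of \eqref{eq:2.9}; the cases $a=c\ge b$ and $a=b\ge c$ give the second and third lines by the identical bookkeeping. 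Geometrically, the reference whose $R_i=\tfrac12$ is the apex from which the two equal, longest sides emanate. Since some pair always attains the maximum, at least one alternative holds.

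For the converse, assume the dis-similarities are symmetric and that one alternative of \eqref{eq:2.9} holds. By Proposition~\ref{pr:1} there is a symmetric distance $d$ pointwise compatible with $D$, and I claim $d$ is already an ultrametric on $\{x,y,z\}$: translating, say, the first alternative through the dictionary gives $a\le c$, $a\le b$, and $b=c$, hence $a\le b=c$, precisely the isosceles condition; the other two alternatives are symmetric, and the fully degenerate case $R_1=R_2=R_3=\tfrac12$ gives $a=b=c$. The point requiring care -- and what I expect to be the main obstacle -- is the role of the symmetry hypothesis, which must be used rather than derived: conditions \eqref{eq:2.9} alone do \emph{not} force symmetry, since e.g.\ $R_1>\tfrac12,\ R_2=R_3=\tfrac12$ satisfies the first alternative yet has $n_{half}=2$, a configuration Proposition~\ref{pr:1} excludes from symmetric compatibility. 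Invoking symmetry is therefore essential: it restricts the admissible triples to the all-equal case or to exactly one $R_i=\tfrac12$ with the other two strictly on opposite sides of $\tfrac12$, which is exactly what makes the isosceles inequalities come out non-strict in the right places. Beyond this, the remaining work is routine tracking of strict versus non-strict comparisons and of the role reversals in $R_2$ and $R_3$.
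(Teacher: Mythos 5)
Your proof is correct, and its core mechanism is the same as the paper's: translate each $R_i\gtrless\tfrac12$ into a rank-order comparison of the three pairwise distances (noting that pointwise compatibility transfers ties as well as strict inequalities), and reduce the ultrametric inequality on three points to the ``two largest distances are equal'' condition. Your necessity argument matches the paper's almost verbatim. Where you diverge is in the sufficiency half: the paper uses the symmetry of $D$ to derive $D(x,y)\le D(x,z)=D(y,z)$ and then exhibits the explicit monotone transformation $G$ of eq.~\eqref{eq:2.12}, which maps all nonzero dis-similarities into $[1,2]$ and therefore produces an actual ultrametric, establishing \emph{setwise} compatibility; you instead invoke Proposition~\ref{pr:1} to obtain some compatible symmetric $d$ and observe via the dictionary that it is automatically isosceles, hence ultrametric. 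Both arguments prove the proposition, but the paper's explicit, triplet-independent $G$ is what powers the Remark immediately following the proposition (global setwise compatibility on all of $S$ when the conditions hold for every triplet), which your route does not deliver. Your observation that the symmetry hypothesis is genuinely needed --- since \eqref{eq:2.9} alone admits $n_{half}=2$ configurations that Proposition~\ref{pr:1} excludes --- is a worthwhile point that the paper leaves implicit.
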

\begin{proof}
    To show that \eqref{eq:2.9} and symmetry implies pointwise compatibility with an ultrametric: WLOG, assume the first set holds. Then 
    \begin{equation}\label{eq:2.10}
       \begin{aligned}
        R_1 \geq \frac{1}{2} \iff D(x, y) \leq D(x,z),\\
        R_2 \leq \frac{1}{2} \iff D(y, z) \geq D(y, x),\\
        R_3 = \frac{1}{2} \iff D(z, x) = D(z, y).
        \end{aligned}
    \end{equation}
    Using symmetry, this implies
    \begin{equation}\label{eq:2.11}
        D(x, y) \leq D(x, z) = D(y, z).
    \end{equation}
    Consider the transformation $d= G(D)$ where
    \begin{equation}\label{eq:2.12}
    G(D) = \left\{ \begin{aligned}
        0&,  D=0, \\
        1 + &\frac{D}{D+1}, D>0
    \end{aligned}\right.
    \end{equation}
    As this is strictly monotonic-increasing, it demonstrates \eqref{eq:2.8}. Moreover, $G$ satisfies the triangle inequality (see Appendix~\ref{appendix1}, eq. \eqref{eq:4.6}). Thus $G$ is a transformation that demonstrates setwise compatibility with an ultrametric distance on the triad, and, \textit{a fortiori}, pointwise compatibility. \par
    Conversely, if $d$ is an ultrametric distance and $D$ is pointwise compatible with $d$, then \eqref{eq:2.8}, which holds for $d$, must also hold for $D$. Therefore, of the three values $D(x, y), D(y, z), D(z, x)$, two must be equal and the third can be no larger than the others. WLOG, assume $D(y, z) = D(x, z)$ and $D(x, y)$ is no larger. Then the first alternative of \eqref{eq:2.9} holds.
\end{proof}
\begin{remark}
    If the conditions of Proposition~\ref{pr:2} hold for all triads composed of stimuli in $S$, the dis-similarities are setwise compatible with an ultrametric distance on $S$. This follows immediately from the proof of Proposition~\ref{pr:2}, as setwise compatibility with an ultrametric distance on each triplet is demonstrated via a transformation \eqref{eq:2.12} that is the same for all triplets.
\end{remark}

\subsubsection*{Addtree}

Like the ultrametric model, the additive similarity tree model \cite{Sattath1977} is a metric space model that places constraints on the properties of the distance, but these constraints are less-restrictive than the constraints of the ultrametric model (Appendix~\ref{appendix2}) and therefore may have greater suitability as a model for perceptual representations. In this model, here referred to as ``addtree'', the distance between two points is determined by a graph that has a tree structure, in which each link has a specified nonzero weight. The distance between two points is given by the total weight of the path that connects the points. Because of the requirement that the graph is a tree structure, there are no loops -- and this places constraints on the inter-relationships of the distances.\par
To determine the extent to which the dis-similarities implied by a set of triadic judgments are compatible with the distances in an addtree model, the starting point is a necessary and sufficient condition for distances in a metric space to be compatible with an addtree structure \cite{Buneman1974, Dobson1974, Sattath1977}. This condition, known as the ``four-point condition,'' is that given any four points $u, v, w$, and $x$,

\begin{equation}\label{eq:2.13}
    \text{none of the three quantities} \left\{ \begin{aligned}
        d(u, v) + d(w, x) \\
        d(u, w) + d(v, x) \\
        d(u, x) + d(v, w)
    \end{aligned} \right\} \text{ is strictly greater than the other two}.
\end{equation}
Put another way, of the three pairwise sums in eq. \eqref{eq:2.13}, two must be equal, and the third can be no larger. Appendix~\ref{appendix2} shows that this condition is weaker than the ultrametric inequality and stronger than the triangle inequality, and that a one-dimensional arrangement of points is always compatible with an addtree model. \par
Since the four-point condition is based on adding distances, we cannot apply it directly to dis-similarities -- as distances are linked to dis-similarity via an unknown monotonic function. However, there are conditions on the dis-similarities that are necessary for the four-point inequality to hold.
\begin{proposition}[necessary ordinal conditions for addtree]\label{pr:3}
    If, in a tent $\{z;a, b, c\}$, the inequalities
    \begin{equation}\label{eq:2.14}
        R(z;c,b) \leq \frac{1}{2}, R(z;c,a) \leq \frac{1}{2}, R(a;b,c) \leq \frac{1}{2}, R(b;a,c) \leq \frac{1}{2}.
    \end{equation}
    hold, along with strict inequalities 
    \begin{equation}\label{eq:2.15}
        \begin{aligned}
            & \left(R(z;c,b) < \frac{1}{2} \quad \text{or} \quad R(a;b,c) < \frac{1}{2} \right) \\
            & \text{and} \\
            & \left(R(z;c,a) < \frac{1}{2} \quad \text{or}\quad R(b;a,c) < \frac{1}{2} \right)
        \end{aligned}
    \end{equation}
    then the dis-similarities in the tent $\{z;a,b,c\}$ are not pointwise compatible with an addtree distance.
\end{proposition}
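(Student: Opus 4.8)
The plan is to argue by contradiction. Suppose the dis-similarities in the tent $\{z;a,b,c\}$ are pointwise compatible with an addtree distance $d$; that is, $d$ is a symmetric distance satisfying the four-point condition \eqref{eq:2.13} on the four points $z,a,b,c$, and its rank orders agree with those of $D$ on every triad of the tent. I would first translate the hypotheses \eqref{eq:2.14} into dis-similarity inequalities via \eqref{eq:1.2}--\eqref{eq:1.3}, and then transfer them to $d$ using pointwise compatibility. Since pointwise compatibility preserves both strict inequalities and equalities of each (reference, comparison) pair — and each relevant pair appears in one of the tent's triads — the four non-strict hypotheses become
\begin{equation*}
    d(z,c) \geq d(z,b), \quad d(z,c) \geq d(z,a), \quad d(a,b) \geq d(a,c), \quad d(a,b) \geq d(b,c),
\end{equation*}
where symmetry of $d$ identifies $d(b,a)$ with $d(a,b)$. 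The strict conditions \eqref{eq:2.15} likewise become: $d(z,c) > d(z,b)$ or $d(a,b) > d(a,c)$; and $d(z,c) > d(z,a)$ or $d(a,b) > d(b,c)$.

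The key step is to examine the three pairwise sums of the four-point condition,
\begin{equation*}
    S_1 = d(z,a) + d(b,c), \quad S_2 = d(z,b) + d(a,c), \quad S_3 = d(z,c) + d(a,b),
\end{equation*}
and show that $S_3$ is the strict maximum. From the transferred inequalities, $d(z,c) \geq d(z,a)$ together with $d(a,b) \geq d(b,c)$ gives $S_3 \geq S_1$, while $d(z,c) \geq d(z,b)$ together with $d(a,b) \geq d(a,c)$ gives $S_3 \geq S_2$. I would then upgrade both to strict inequalities using \eqref{eq:2.15}: writing $S_3 - S_2 = [d(z,c)-d(z,b)] + [d(a,b)-d(a,c)]$ as a sum of two nonnegative terms, the first strict alternative forces at least one term to be positive, so $S_3 > S_2$; writing $S_3 - S_1 = [d(z,c)-d(z,a)] + [d(a,b)-d(b,c)]$ likewise, the second strict alternative forces $S_3 > S_1$. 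Hence $S_3$ is strictly greater than both $S_1$ and $S_2$, directly contradicting \eqref{eq:2.13}.

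The heart of the argument — and where I expect the only genuine thought to be required — is recognizing the correct pairing: the hypotheses single out $d(z,c)$ as the largest of the three edges incident to the vertex $z$ and $d(a,b)$ as the largest edge within the base, so that the sum $S_3$ combining these two ``largest'' edges is forced to dominate. Once this pairing is identified, the nonnegativity of the four differences and the disjunctive (``or'') structure of \eqref{eq:2.15} make the two strict comparisons immediate, and the contradiction with the four-point condition is automatic. The remaining bookkeeping — the translation between choice probabilities and dis-similarities via \eqref{eq:1.2}--\eqref{eq:1.3}, and the use of symmetry of $d$ — is routine.
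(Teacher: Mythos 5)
Your proof is correct and follows essentially the same route as the paper's: translate \eqref{eq:2.14}--\eqref{eq:2.15} into dis-similarity inequalities, observe that the pairsum $d(z,c)+d(a,b)$ pairing the largest tripod edge with the largest base edge strictly dominates the other two pairsums term-by-term (with strictness supplied by the disjunctions), and contradict the four-point condition \eqref{eq:2.13}. The only difference is bookkeeping order --- you transfer the individual edge comparisons to $d$ via pointwise compatibility before summing, whereas the paper derives the strict pairsum inequalities \eqref{eq:2.18}--\eqref{eq:2.19} for $D$ first and then transfers them; your ordering is, if anything, the more literal application of \eqref{eq:1.5}.
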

\begin{proof}
    Via eqs \eqref{eq:1.2} and \eqref{eq:1.3}, these conditions may be rewritten as inequalities among dis-similarities. The inequalities \eqref{eq:2.14} become
     \begin{equation}\label{eq:2.16}        
     	\begin{aligned}        D(z, c) \geq D(z, b) \quad\text{and}\quad D(z,c) \geq D(z, a) \quad\text{and}\\ 
     		       D(a, b) \geq D(c, a) \quad\text{and}\quad D(a, b) \geq D(b, c).       
     	 \end{aligned}    
      \end{equation}
    The inequalities \eqref{eq:2.15} become 
        \begin{equation}\label{eq:2.17}
        \begin{aligned}
            & \left(D(z, c) > D(z, b) \quad\text{or}\quad D(a,b) > D(a,c) \right) \\ 
            & \text{and}\\
            & \left(D(z, c) > D(z, a) \quad\text{or}\quad D(a,b) > D(b,c) \right).
        \end{aligned}
    \end{equation}
    Either alternative of the first portion of eq. \eqref{eq:2.17} leads to
    \begin{equation}\label{eq:2.18}
        D(z,c) + D(a,b) > D(z,b) + D(c,a).
    \end{equation}
    This follows from eqs. \eqref{eq:2.16} and \eqref{eq:2.17} via a term-by-term comparison of the two sides of eq. \eqref{eq:2.18}: \eqref{eq:2.16} implies that each term on the left of \eqref{eq:2.18} is no smaller than the corresponding term on the right, and \eqref{eq:2.17} implies that at least one of these inequalities is strict. Further, note that conditions \eqref{eq:2.16} are unchanged by swapping $a$ with $b$, and the two portions of \eqref{eq:2.17} are interchanged by this swap. Similarly, either alternative of the second portion of \eqref{eq:2.17} leads to
    \begin{equation}\label{eq:2.19}
        D(z,c) + D(a,b) > D(z,a) + D(c,b).
    \end{equation}
    If the dis-similarities $D$ were pointwise-compatible with an addtree distance $d$, then \eqref{eq:2.18} and \eqref{eq:2.19} would imply (via \eqref{eq:1.5}) that 
    \begin{equation}\label{eq:2.20}
        \begin{aligned}
            & d(z,c) + d(a,b) > d(z,b) + d(c,a) \\
            & \text{and} \quad \\
            & d(z,c) + d(a,b) > d(z,a) + d(c, b),
        \end{aligned}
    \end{equation}
    in contradiction to \eqref{eq:2.13}. 
\end{proof}
It is helpful to think of Proposition~\ref{pr:3} geometrically. If all dis-similarities are unequal, its conditions state that, for any tent, the largest dis-similarity in a tripod cannot be opposite the largest dis-similarity in the base. If some dis-similarities are equal, then  its conditions state that there is
 strict inequality for comparisons between one of the oppositely-paired dis-similarities and the other two oppositely-paired dis-similarities, either in the tripod or in the base (Figure~\ref{fig:1}C-E). \par
Proposition~\ref{pr:4} (Appendix~\ref{appendix3}) is a partial converse to Proposition~\ref{pr:3}: if the dis-similarities among four points are all unequal and the conjunction \eqref{eq:2.14} is false, we construct a monotonic transformation   that demonstrates setwise (and therefore pointwise) compatibility between the dis-similarities and an addtree distance. But note that even though the ``4-point'' condition \eqref{eq:2.13} on distances suffices to ensure a global addtree model, Proposition~\ref{pr:4} (Appendix~\ref{appendix3}) stops short of showing that the monotonic transformations needed to transform dis-similarities to distances for each quadruple of points can be made in a globally-consistent way -- though we do not have examples to the contrary. 

\section{Implementation}

In this section, we move from the results described above to procedures that can be applied to experimental data. 
Specifically, we develop indices, computable from a set of triadic dis-similarity judgments, that express the likelihood that these judgments are compatible with an underlying symmetric distance (the index \(I_{\text{sym}}\)), an ultrametric distance (the index \(I_{\text{umi}}\)), 
and an addtree distance (the index \(I_{\text{add}}\)). 
\(I_{\text{sym}}\) is the simplest and illustrates the basic strategy; \(I_{\text{umi}}\) and \(I_{\text{add}}\) each build on this strategy in different ways.

Common to all three indices is the hurdle that the choice probabilities \(R(r;x,y)\) are unknown, and must be estimated from experimental data. 
We denote the number of such trials in which the triad \((r;x,y)\) is presented by \(N(r;x,y)\), and the number of trials in which the participant judges \(x\) as more similar to \(r\)  than \(y\) is to \(r\)  by \(C(r;x,y)\). This provides a na\"ive estimate  of the choice probability: 

\begin{equation}\label{eq:3.1}
R_{\text{obs}}(r; x, y) = \frac{C(r;x,y)}{N(r;x,y)}.
\end{equation}
To ensure that the estimated choice probabilities obey \eqref{eq:1.1}, we consider any presentation of a triad \((r;x,y)\) to also be a presentation of its mirror, so that any response contributes to
\(C(r;x,y)\) (if \(x\) is judged as more similar to \(r\) than \(y\) is to \(r\)) or to \(C(r;y,x)\) (if otherwise). With this convention, 

\begin{equation}\label{eq:3.2}
    C(r;x,y) + C(r;y,x) = N(r;x,y) = N(r;y,x)
\end{equation}
and
\begin{equation}\label{eq:3.3}
    R_{\text{obs}}(r;x,y) + R_{\text{obs}}(r;y,x) = 1.
\end{equation}

However, \(R_{\text{obs}}(r;x,y)\) is only an estimate of the choice probability \(R(r;x,y)\). 
Thus, rather than determine whether the naïve estimates \(R_{\text{obs}}(r;x,y)\) satisfy the conditions of Propositions~\ref{pr:1}, \ref{pr:2}, \ref{pr:3} and \ref{pr:4}, we take a Bayesian approach: given the observed data, what is the likelihood that the underlying choice probabilities \(R(r;x,y)\) are consistent with the requisite inequalities?

The Bayesian approach requires a prior for the distribution of the set of choice probabilities. 
For the assessment of compatibility with symmetry between reference and comparison stimuli, the prior
assumes that the choice probabilities are independently drawn from a specified
univariate distribution (see below). 
For the assessment of compatibility with an ultrametric or an addtree structure, we then
modify the prior by eliminating combinations of choice probabilities that are inconsistent with
symmetry.

We implement this strategy using beta functions as a family of priors for choice probabilities.
Since we have assumed that an experimenter randomly chooses whether to present a triad or its mirror, 
\eqref{eq:1.1} means that appropriate priors for the distribution of choice probabilities, 
\(p(R)\), should satisfy \(p(R) = p(1-R)\). (This ``symmetry'' refers to interchangeability of the two comparison 
stimuli \(x\) and \(y\) in the triad \((r;x,y)\); not to interchange between reference and comparison, which is our focus.).
Thus, it suffices to consider the family of priors

\begin{equation}\label{eq:3.4}
p_{a}(R) \triangleq \frac{1}{B(a,a)} R^{a-1}(1-R)^{a-1},
\end{equation}
where \(B(a,a)\) is the symmetric specialization of the beta function 
\(B(a,b)\), defined in the standard fashion by
\begin{equation}\label{eq:3.5}
B(a,b) \triangleq \int\limits_0^1 u^{a-1}(1-u)^{b-1} du = \frac{\Gamma(a)\Gamma(b)}{\Gamma(a+b)}.
\end{equation}

As the parameter \(a\) varies over the positive reals, the shape of the prior \eqref{eq:3.4} 
changes from heavily weighted near the extremes of \(R=0\) and \(R=1\) (\(a\) near zero), 
to heavily weighted near \(R=1/2\) (\(a \gg 1\)), thus capturing scenarios ranging from those in which most judgments are near certainty, to those in which most judgments are close to equivocal.
In between, \(a=1\) corresponding to the scenario in which choice probabilities are 
evenly distributed in \([0,1]\).

While this choice of priors is fundamentally a heuristic one, it has both theoretical motivations and practical advantages. 
The beta distribution is the univariate case of the Dirichlet distribution, a distribution that has theoretical justification as a prior for multivariate probabilities \cite{Ferguson1973} -- so the beta distribution is a natural choice for a decision model in which an internal multivariate state is collapsed to a binary choice. 
In addition to encompassing a range of qualitatively different shapes, the family includes important special cases.
For \(a=1\), it is flat. For \(a=1/2\), \eqref{eq:3.4} is the Jeffreys prior for probabilities in 
the \([0,1]\) interval, i.e., the ``uninformative'' prior in an information-theoretic sense \cite{Jeffreys1946}. 
At \(a=0\), the \(R\)-dependence of \eqref{eq:3.4}, \(\frac{1}{R(1-R)}\), is the (improper) Haldane prior,
the unique prior for which the expected value of \(R\) is equal to the naïve estimate \(\frac{C}{N}\) \cite{Jeffreys1946}.
At a practical level, this family of priors has shown its utility as a prior distribution for choice
probabilities in the context of improving the estimation of psychometric functions \cite{Schutt2016}.
Finally, the key integrals involving \eqref{eq:3.4} are easy to compute 
(e.g., eq. \eqref{eq:3.7}, leading to eq. \eqref{eq:3.8}), providing a computationally efficient means to select the parameter
\(a\) that is most appropriate given the observed data. 
Note also that the prior \eqref{eq:3.4} is symmetric about \(R=1/2\), so that Bayesian inference of \(R\)
from an \(R_{\text{obs}}\) that satisfies \eqref{eq:3.3} will necessarily satisfy \eqref{eq:1.1}.

We use a maximum likelihood approach to determine the parameter \(a\) 
(alternative choices for \(a\) will be considered in the example applications below).
Specifically, we maximize the likelihood of the observed set of responses across the entire experiment, 
assuming that the individual responses for the \(i\)-th triad \((r_i; x_i, y_i)\) are independently drawn from a Bernoulli distribution with parameter
\(R_i = R(r_i; x_i, y_i)\), and that each \(R_i\) is independently drawn 
from the distribution \eqref{eq:3.4}. That is, for a given \(R_i\),
the probability that the subject reports \(D(r,x) < D(r, y)\) in \(C_i\) of \(N_i\) 
presentations is 

\begin{equation}\label{eq:3.6}
p(C_i | R_i, N_i) = \binom{N_i}{C_i} R_i^{C_i}(1-R_i)^{N_i - C_i}. 
\end{equation}
Integrating over the prior \eqref{eq:3.4} for \(R_i\) yields the probability of observing
\(C_i\) reports of \(D(r,x) < D(r, y)\) in \(N_i\) presentations, given the parameter \(a\):
\begin{equation}\label{eq:3.7}
\begin{aligned}
p(C_i | N_i, a) & = \binom{N_i}{C_i} \int\limits^1_0 \left(\frac{1}{B(a,a)}R^{a-1}(1-R_i)^{a-1} \right) R_i^{C_i}(1-R_i)^{N_i - C_i} dR_i  \\
& = \binom{N_i}{C_i} \frac{B(a+C_i, a+N_i-C_i)}{B(a,a)} 
\end{aligned}
\end{equation}
Making use of the independence of each triad yields the overall log-likelihood:

\begin{equation}\label{eq:3.8}
LL(a) = \log{\left( \prod_i p(C_i | N_i, a) \right)} = \log{K} + \sum_i \log{\frac{B(a+C_i, a+N_i-C_i)}{B(a,a)} },
\end{equation}
where \(K\) is a combinatorial factor independent of \( a \), and the sum ranges over all triads.
Maximizing \eqref{eq:3.8} then determines the value of \(a\) for 
which independent draws of choice probabilities are most likely to yield the experimental data.

We then use this value of \(a\) to determine the posterior likelihood for choice probabilities within
a triplet or a tent. 
That is, for a set \(\overrightarrow{R}\) of choice probabilities \(R_i\),
the prior is

\begin{equation}\label{eq:3.9}
P(\overrightarrow{R}) = \prod_i \left( \frac{1}{B(a,a)} R_i^{a-1}(1-R_i)^{a-1}\right).
\end{equation}
Via Bayes rule, this prior determines the posterior likelihood of a set of choice probabilities:

\begin{equation}\label{eq:3.10}
p(\overrightarrow{R} | \overrightarrow{C}, \overrightarrow{N}) = \frac{p(\overrightarrow{C} | \overrightarrow{R}, \overrightarrow{N}) P(\overrightarrow{R})}{p(\overrightarrow{C}|\overrightarrow{N})}.
\end{equation}
where \(\overrightarrow{C}\) denotes the responses \(C_i\) to each of the triads,
\(\overrightarrow{N}\) denotes the number of times that each triad \(N_i\) was presented,
and \(p(\overrightarrow{C} | \overrightarrow{R}, \overrightarrow{N})\) denotes a product of the terms specified by eq. \eqref{eq:3.6}.
\(p(\overrightarrow{C}|\overrightarrow{N})\), the \textit{a priori} probability of
\(\overrightarrow{C}\) and \(\overrightarrow{N}\), is unknown, but as is standard in Bayesian analyses, it is eliminated when likelihood ratios are calculated.

\subsubsection*{Symmetry}

We now use this machinery to estimate the probability that the choice probabilities underlying a
set of observations are compatible with a symmetric distance. We first focus on the data within a single
triplet, and then consider extension of the analysis to the entire dataset.

For the analysis within a single triplet (i.e., given the observations \(\overrightarrow{C}_T\) and \(\overrightarrow{N}_T\) for the triads in a triplet \(T\)), 
we compare the posterior likelihood for the corresponding choice probabilities \(\overrightarrow{R}_T\)
for which the inequalities of Proposition~\ref{pr:1} hold, to the likelihood that the observations
result from choice probabilities within the entire space of choice probabilities.
We denote the posterior likelihoods by
\begin{equation}\label{eq:3.11}
L_{\text{sym}}(T) = \int\limits_{\Omega_{\text{sym}}} p(\overrightarrow{R}_T | \overrightarrow{C}_T, \overrightarrow{N}_T) d\overrightarrow{R}_T,
\end{equation}
and
\begin{equation}\label{eq:3.12}
L(T) = \int\limits_{\Omega} p(\overrightarrow{R}_T | \overrightarrow{C}_T, \overrightarrow{N}_T) d\overrightarrow{R}_T,
\end{equation}
and their ratio by
\begin{equation}\label{eq:3.13}
LR_{\text{sym}}(T) = \frac{L_{\text{sym}}(T)}{L(T)},
\end{equation}
where \(\Omega\) is the space in which all choice probabilities in the triplet \(T\) 
range independently over \([0, 1]\), and \(\Omega_{\text{sym}}\) is the subset of the space consistent with the conditions of Proposition~\ref{pr:1}, namely, the cube \([0, 1]^3\) from which 
\([0, \frac{1}{2}]^3\) and \([\frac{1}{2}, 1]^3\) are removed (eq. \eqref{eq:2.5}, Figure~\ref{fig:2}A)).

Both quantities \eqref{eq:3.11} and \eqref{eq:3.12} can be expressed in terms of the prior via Bayes’ rule:
\begin{equation}\label{eq:3.14}
\begin{aligned}
p(\overrightarrow{R}_T | \overrightarrow{C}_T, \overrightarrow{N}_T)  & = \frac{p(\overrightarrow{C}_T | \overrightarrow{R}_T, \overrightarrow{N}_T) P(\overrightarrow{R}_T)}{p(\overrightarrow{C}_T | \overrightarrow{N}_T)} \\
& = \frac{1}{p(\overrightarrow{C}_T, \overrightarrow{N}_T)} \prod_{i \in T} \left( \binom{N_i}{C_i} R_i^{C_i} (1-R_i)^{N_i - C_i} p_{a}(R_i) \right),
\end{aligned}
\end{equation}
where the prior for each choice probability, \(p_{a}(R_i)\), is given by \eqref{eq:3.4}. 
The likelihood ratio \eqref{eq:3.13} \(LR_{\text{sym}}(T)\) is thus
\begin{equation}\label{eq:3.15}
LR_{\text{sym}}(T) = \frac{ \int\limits_{\Omega_{\text{sym}}} \prod_{i \in T} \left( R_i^{C_i} (1-R_i)^{N_i - C_i} p_{a}(R_i)\right) d\overrightarrow{R}_T }{\prod_{i \in T} \left( \int\limits_{0}^{1} R_i^{C_i} (1-R_i)^{N_i - C_i} p_{a}(R_i) dR_i\right)} 
\end{equation}
Based on \eqref{eq:2.5}, the numerator of \eqref{eq:3.15} is a combination of three terms,
\begin{equation}\label{eq:3.16}
\begin{aligned}
\int\limits_{\Omega_{\text{sym}}} \prod_{i \in T} \left( R_i^{C_i} (1-R_i)^{N_i - C_i} p_{a}(R_i)\right) d\overrightarrow{R}_T  = & \int\limits_{[0, 1]^3} \prod_{i \in T} \left( R_i^{C_i} (1-R_i)^{N_i - C_i} p_{a}(R_i)\right) d\overrightarrow{R}_T \\
& - \int\limits_{[\frac{1}{2}, 1]^3} \prod_{i \in T} \left( R_i^{C_i} (1-R_i)^{N_i - C_i} p_{a}(R_i)\right) d\overrightarrow{R}_T \\
& - \int\limits_{[0, \frac{1}{2}]^3} \prod_{i \in T} \left( R_i^{C_i} (1-R_i)^{N_i - C_i} p_{a}(R_i)\right) d\overrightarrow{R}_T,
\end{aligned}    
\end{equation}
each of which can be written in terms of incomplete beta functions:
\begin{equation}\label{eq:3.17}
\begin{aligned}
\int\limits_{v}^{w} R^C (1-R)^{N-C}p_a(R)dR & = \frac{1}{B(a,a)} \int\limits_{v}^{w}R^{C+a-1}(1-R)^{N-C+a-1} dR \\
& = \frac{B(w; a+C, a+N -C) - B(v; a+C, a+N-C)}{B(a,a)}
\end{aligned}
\end{equation}
where 
\begin{equation}\label{eq:3.18}
B(w; a,b) = \int\limits_{0}^{w} u^{a-1}(1-u)^{b-1} du.
\end{equation}
Each factor of the denominator of \eqref{eq:3.15} can also be expressed in terms of beta functions:
\begin{equation}\label{eq:3.19}
\begin{aligned}
\int\limits_{0}^{1} R_i^{C_i} (1-R_i)^{N_i - C_i} p_{a}(R_i) dR_i & = \frac{1}{B(a,a)} \int\limits_{0}^{1} \left( R_i^{C_i} (1-R_i)^{N_i - C_i} R_i^{a-1}(1-R_i)^{a-1} \right)dR_i \\
& = \frac{1}{B(a,a)} \int\limits_{0}^{1} \left( R_i^{C_i+a-1} (1-R_i)^{N_i - C_i +a -1} \right)dR_i \\
& = \frac{B(a+C_i, a+N_i-C_i)}{B(a,a)}
\end{aligned}
\end{equation}
Combining \eqref{eq:3.15} through \eqref{eq:3.19} yields

\begin{equation}\label{eq:3.20}
LR_{\text{sym}}(T) = 1 - \prod_{i \in T} \left( 1 - \frac{B(\frac{1}{2}; a+C_i, a+N_i-C_i)}{B(a+C_i, a+N_i-C_i)}\right) - \prod_{i \in T} \left( \frac{B(\frac{1}{2}; a+C_i, a+N_i-C_i)}{B(a+C_i, a+N_i-C_i)}\right).
\end{equation}
Many software packages (e.g., MATLAB) provide the normalized beta function

\begin{equation}\label{eq:3.21}
B_{\text{norm}}(w;a,b) \triangleq \frac{B(w;a,b)}{B(a,b)} = \frac{1}{B(a,b)} \int\limits_0^w u^{a-1}(1-u)^{b-1} du,
\end{equation}
which simplifies \eqref{eq:3.20} to

\begin{equation}\label{eq:3.22}
\begin{aligned}
LR_{\text{sym}}(T) = \quad & 1 - \prod_{i \in T} \left( 1 - B_{\text{norm}}(\frac{1}{2}; a+C_i, a+N_i-C_i) \right) \\
& - \prod_{i \in T} \left( B_{\text{norm}}(\frac{1}{2}; a+C_i, a+N_i-C_i)\right).
\end{aligned}
\end{equation}

To extend this analysis to the entire set of observations \(\overrightarrow{C}\) and \(\overrightarrow{N}\), 
we note that the conditions of Proposition~\ref{pr:1} concern the triads within a single triplet, and that the triads within distinct triplets are non-overlapping. 
Thus, the posterior likelihood for choice probabilities consistent with Proposition~\ref{pr:1} at all
triplets, \(L_{\text{sym}}(\overrightarrow{C}, \overrightarrow{N})\), is an integral over a product space
\(\Omega^*_{\text{sym}} = \prod_{T \in \mathfrak{trip}} \Omega_{\text{sym}}(T)\) with one component for each triplet.
The integral over the component corresponding to the triplet \(T\) is given by \eqref{eq:3.11}. 
The unrestricted posterior likelihood \(L(\overrightarrow{C}, \overrightarrow{N})\) 
is an integral over a product space \(\Omega^*\) of choice probabilities for all triads. 
Since each triad is a member of exactly one triplet, this integral also factors into a product
of terms corresponding to \eqref{eq:3.12}, one for each triplet.
Consequently, the likelihood ratio for the entire dataset is given by

\begin{equation}\label{eq:3.23}
\begin{aligned}
LR_{\text{sym}}(\overrightarrow{C}, \overrightarrow{N}) = \frac{L_{\text{sym}}(\overrightarrow{C}, \overrightarrow{N})}{ L(\overrightarrow{C}, \overrightarrow{N})} = 
 \prod_{T \in \mathfrak{trip}}  \left( 1 - \prod_{i \in T} \left( 1 - B_{\text{norm}}(\frac{1}{2}; a+C_i, a+N_i-C_i) \right) \right.& \\
 \left. - \prod_{i \in T} \left( B_{\text{norm}}(\frac{1}{2}; a+C_i, a+N_i-C_i) \right) \right) &
\end{aligned}
\end{equation}
In view of \eqref{eq:3.10}, \(\frac{LR_{\text{sym}}}{1-LR_{\text{sym}}}\) (and the analogous likelihood ratios for the indices below) may be interpreted as a Bayes factor for the triplet: 
the ratio between the likelihood of the data \((\overrightarrow{C}, \overrightarrow{N})\) given that
all of the choice probabilities are pointwise-compatible with symmetry, vs. the likelihood of the data
given that the choice probabilities are drawn from the complementary portion of the prior.

While it is thus tempting to interpret \eqref{eq:3.23} as a global measure of compatibility with symmetry,
there is an important caveat: as mentioned above, pointwise compatibility at each triplet does not
guarantee compatibility across the entire stimulus space \(S\). To emphasize this point, we normalize
\(LR_{\text{sym}}\) by the number of triplets, yielding a quantity that can rigorously be interpreted an average
measure of pointwise compatibility within triplets:

\begin{equation}\label{eq:3.24}
I_{\text{sym}}(\overrightarrow{C}, \overrightarrow{N}) = \frac{1}{\#(\mathfrak{trip})} \log LR_{\text{sym}}(\overrightarrow{C}, \overrightarrow{N}).
\end{equation}
Values of \( I_{\text{sym}}(\overrightarrow{C}, \overrightarrow{N})\) that are close to zero indicate that nearly all of the posterior distribution of choice probabilities lies within the product space
\(\Omega^*_{\text{sym}} = \prod_{T \in \mathfrak{trip}} \Omega_{\text{sym}}(T)\)
(i.e., is pointwise compatible with a symmetric distance); progressively more negative values indicate that the posterior shifts into its complement
\( \Omega \backslash \Omega_{\text{sym}} \) in which symmetry is necessarily violated.

A useful benchmark in interpreting \( I_{\text{sym}}(\overrightarrow{C}, \overrightarrow{N})\) 
is its value in the absence of any data (i.e., \( \overrightarrow{C} = \overrightarrow{N} = \overrightarrow{0}\)).
In that case, each of the normalized beta functions has a value of 
\(B_{\text{norm}}(\frac{1}{2}; a, a) = \frac{1}{2}\), so

\begin{equation}\label{eq:3.25}
I_{\text{sym}}(\overrightarrow{0}, \overrightarrow{0}) = \frac{1}{\#(\mathfrak{trip})} \log \prod_{T \in \mathfrak{trip}} \left( 1 - \left(  1 - \frac{1}{2}\right)^3 - \left( \frac{1}{2} \right)^3 \right) = \log \frac{3}{4} \approx -0.2877.
\end{equation}
Thus, values of \(I_{\text{sym}}(\overrightarrow{C}, \overrightarrow{N})\) greater than
\(\log \frac{3}{4}\) are more compatible with symmetry than an index derived from choice probabilities drawn randomly from the prior.

Note also that deviations from this \textit{a priori} value can only be driven by triplets in which there are
observations for at least two of the triads.
This follows from the fact that \(B_{\text{norm}}(\frac{1}{2}; a, a) = \frac{1}{2}\), so that if
only one triad \(k \in T\) has a nonzero number of observations,

\begin{equation}\label{eq:3.26}
\begin{aligned}
& 1 - \prod_{i \in T} \left( 1 - B_{\text{norm}}(\frac{1}{2}; a+C_i, a+N_i-C_i) \right)
- \prod_{i \in T} \left( B_{\text{norm}}(\frac{1}{2}; a+C_i, a+N_i-C_i) \right) \\
& = 1 - \frac{1}{2} \cdot \frac{1}{2} \cdot \left( 1 - B_{\text{norm}}(\frac{1}{2}; a+C_k, a+N_k - C_k) \right) \\
& \quad - \frac{1}{2} \cdot \frac{1}{2} \left( B_{\text{norm}}(\frac{1}{2}; a+C_k, a+N_k - C_k) \right) \\
& = \frac{3}{4}.
\end{aligned}
\end{equation}
This is a reassuring result: we should only be able to make inferences about the structure of the 
dis-similarity judgments if there is experimental data about more than one triad within a triplet. 
With data about only one triad, knowing the sign of the comparison is useless since this sign is 
arbitrarily determined by how the triad is labeled, i.e., \((r; x, y)\) vs \((r; y, x)\).

\subsubsection*{Ultrametric}

We use a parallel strategy to apply the present approach to evaluate compatibility with an
ultrametric distance. Necessary and sufficient criteria for compatibility with an ultrametric distance
concern a single triplet (Proposition~\ref{pr:2}), as is the case for symmetry (Proposition~\ref{pr:1}).
Thus, we first focus on a single triplet and then extend to the entire dataset.

An immediate difficulty arises: the conditions of Proposition~\ref{pr:2} are only satisfied on a set of measure zero, since at least one of the \(R_i\) must be exactly equal to \(\frac{1}{2}\).
So a Bayesian analysis based on a continuous prior (including the beta-function prior of \eqref{eq:3.4}) 
will always lead to a likelihood ratio of zero, since a posterior derived from a continuous prior via Bayes rule and Bernoulli trials cannot have a discrete mass at \(\frac{1}{2}\).

It is nevertheless possible to capture the spirit of ultrametric behavior in a rigorous way, and at
the same time, address a way in which beta-function prior may be unrealistic. 
To do this, we posit that there is a fraction of triads for which the underlying
choice probability is exactly \(\frac{1}{2}\).
For example, such triads could consist of stimuli for which there is no basis for comparison: is a toothbrush or a mountain more similar to an orange? 
But we don’t know, \textit{a priori}, what fraction \(h\) of the triads have this property.
To take this into account, we generalize the prior for choice probabilities to be a sum of two 
components: one component is the beta-function prior used above \eqref{eq:3.4}, normalized to \(1-h\);
the second component is a point mass at \(\frac{1}{2}\), normalized to \(h\):

\begin{equation}\label{eq:3.27}
p_{a,h}(R) \triangleq (1-h)p_a(R) + h\delta(R-\frac{1}{2}) = \frac{1-h}{B(a,a)}R^{a-1}(1-R)^{a-1} + h\delta(R - \frac{1}{2}).
\end{equation}
For fixed \(h\), the parameter \(a\) is determined by maximizing the likelihood of the observed responses
(generalizing \eqref{eq:3.8}):
\begin{equation}\label{eq:3.28}
\begin{aligned}
LL(a;h) & = \log \left( \prod_i p(C_i | N_i, a;h)\right) \\ 
& = \log K + \sum_i \log \left( (1-h) \frac{B(a+C_i, a+N_i-C_i)}{B(a,a)} + 2^{-N_i}h\right),
\end{aligned}
\end{equation}

With this prior, we can then determine the likelihood ratio as a function of \(h\), in the limit that 
\(h\) approaches zero.
We anticipate (and will confirm below) that for small values of \(h\), the likelihood ratio
will be proportional to \(h\), since the mass in the prior at \(\frac{1}{2}\) is proportional to \(h\).
This proportionality serves as an index of compatibility with the ultrametric property: how quickly does the likelihood ratio increase, if a small fraction of the underlying choice probabilities are exactly \(\frac{1}{2}\).
An alternative approach (not taken here) is that if the experimental dataset suggests that a prior
\(p_{a,h}(R)\) with \(h>0\) is a substantially better fit to the distribution of choice probabilities than
\(p_{a,0}(R)\), this prior can be used directly to calculate a likelihood ratio, and the best-fitting value of
\(h\) then provides an additional descriptor of the dataset.

To implement this strategy for individual triads, we define
\begin{equation}\label{eq:3.29}
L_{\text{umi}}(T;h) = \int\limits_{\Omega_{\text{umi}}} p(\overrightarrow{R}_T | \overrightarrow{C}_T, \overrightarrow{N}_T) d \overrightarrow{R}_T \quad,
\end{equation}
and consider the likelihood ratio,

\begin{equation}\label{eq:3.30}
LR_{\text{umi}}(T;h)= \frac{L_{\text{umi}}(T;h)}{L_{\text{sym}}(T;h)},
\end{equation}
where \(\Omega_{\text{umi}}\) is the space in which all choice probabilities in the triplet \(T\)
satisfy the conditions \eqref{eq:2.9} of Proposition~\ref{pr:2} (Figure~\ref{fig:2}B). 
Note that since ultrametric behavior is only defined with respect to a
symmetric distance, the denominator in the likelihood ratio is given by \eqref{eq:3.11} 
(but using the more general prior \eqref{eq:3.27}), 
which only considers combinations of choice probabilities in \(\Omega_{\text{sym}}\).

In view of Proposition~\ref{pr:2}, \(\Omega_{\text{umi}}\) is a union of regions defined 
by combinations of the signs of 
\(R_i-\frac{1}{2}\), for \(i\in \{1,2,3\}\) (Figure~\ref{fig:2}B), and 
Proposition~\ref{pr:1} implies that the same is true for \(\Omega_{\text{sym}}\).
Thus, numerator and denominator of \eqref{eq:3.30} can be written
\begin{equation}\label{eq:3.31}
\begin{aligned}
    L_q(T; h) = \int\limits_{[0, 1]^k} & V_q \left( sgn(R_1 - \frac{1}{2}), \dots, sgn(R_k - \frac{1}{2})\right) R_1^{C_1} (1-R_1)^{N_1-C_1} \bullet \dots \\ 
    & \bullet R_k^{C_k} (1-R_k)^{N_k - C_k} \bullet p_{a,h}(R_1) \bullet \dots \bullet p_{a,h}(R_k) dR_1 \dots dR_k
\end{aligned}
\end{equation}
where \(k=3\), \(R_i\) are the choice probabilities in the triad \(T\), \(C_i\) and \(N_i\) tally the responses, and \(V_q(\sigma_1, \dots , \sigma_k)\), 
for \(q=\text{umi}\) or \(q=\text{sym}\), is an indicator function over the configuration of signs, which defines either \(\Omega_{\text{umi}}\)
or \(\Omega_{\text{sym}}\). (We write \eqref{eq:3.31} in a general form; \(k=3\) suffices for analyzing ultrametric behavior and symmetry but the analysis of addtree behavior will require
\(k=6\).)

For the numerator of \eqref{eq:3.30}, Proposition~\ref{pr:2} dictates that the nonzero values of
\(V_{\text{umi}}\) are:

\begin{equation}\label{eq:3.32}
\begin{aligned}
V_{\text{umi}}(+1, -1, 0) & = 1, \quad \text{corresponding to} \quad R_1 >\frac{1}{2}, R_2 <\frac{1}{2}, R_3 = \frac{1}{2} \\
V_{\text{umi}}(0, +1, -1) &= 1, \quad \text{corresponding to} \quad R_1 = \frac{1}{2}, R_2 >\frac{1}{2}, R_3 < \frac{1}{2}\\
V_{\text{umi}}(-1, 0, +1) &= 1, \quad \text{corresponding to} \quad R_1 < \frac{1}{2}, R_2 =\frac{1}{2}, R_3 > \frac{1}{2} \\
V_{\text{umi}}(0, 0, 0) &= 1, \quad \text{corresponding to} \quad R_1 = 
\frac{1}{2}, R_2 =\frac{1}{2}, R_3 = \frac{1}{2} .
\end{aligned}
\end{equation}
All other values of \(V_{\text{umi}}( \overrightarrow{\sigma} )\) are zero, since either they don’t correspond to any of the conditions, or to exactly two of those conditions.
The latter is impossible, as it would require two equalities and one strict
inequality among the dis-similarities.

For the denominator of \eqref{eq:3.30}, we find the nonzero values of \(V_{\text{sym}}\) from Proposition~\ref{pr:1}:

\begin{equation}\label{eq:3.33}
\begin{aligned}
n_{half} =0: & \quad V_{\text{sym}}(\pm1, \pm1, \mp1) = V_{\text{sym}}(\pm1, \mp1, \pm1) = V_{\text{sym}}(\mp1, \pm1, \pm1) = 1 \\
n_{half} =1: & \quad V_{\text{sym}}(\pm1, \mp1, 0) = V_{\text{sym}}(0, \pm1, \mp1) = V_{\text{sym}}(\mp1, 0, \pm1) = 1\\
n_{half} =3: & \quad V_{\text{sym}}(0, 0, 0) = 1,
\end{aligned}
\end{equation}
where, as before, \(n_{half}\) is the number of \(R_i\) that are exactly equal to \(\frac{1}{2}\).

To establish the behavior of the likelihood ratio \eqref{eq:3.30} as \(h \rightarrow 0\), we use
\eqref{eq:3.27} to isolate the dependence of integrals \eqref{eq:3.31} on \(h\). This is a polynomial:

\begin{equation}\label{eq:3.34}
L_q(T; h) = \sum_{\overrightarrow{\sigma}} h^{Z(\overrightarrow{\sigma})} (1-h)^{k-Z(\overrightarrow{\sigma})} V_q(\sigma_1, \dots, \sigma_k) W(\sigma_1; C_1, N_1) \bullet \dots \bullet W(\sigma_k; C_k, N_k),
\end{equation}
where the sum is over all \(3^k\) assignments of the elements of \(\overrightarrow{\sigma} = (\sigma_1, \dots, \sigma_k)\) to \(\{-1,0,+1\}\), 
\(Z(\overrightarrow{\sigma})\) is the number of entries in \(\overrightarrow{\sigma}\)
that are equal to zero (each such entry incurring a factor of \(h\)), 
and \(W(\sigma; C, N)\) is the integral of the prior \eqref{eq:3.27}, weighted by the experimental data, over one segment of the domain:

\begin{equation}\label{eq:3.35}
    W(\sigma; C, N) = \left\{ 
    \begin{aligned}
        \int\limits_0^{1/2} R^C (1-R)^{N-C} p_a(R) dR, \quad & \sigma = -1\\
        \left. R^C (1-R)^{N-C} \right\rvert_{R=\frac{1}{2}} ,  \quad & \sigma = 0\\
        \int\limits_{1/2}^{1} R^C (1-R)^{N-C} p_a(R) dR, \quad & \sigma = +1\\
    \end{aligned}
    \right.
\end{equation}
These evaluate to

\begin{equation}\label{eq:3.36}
    W(\sigma; C, N) = \left\{ 
    \begin{aligned}
        \frac{1}{B(a,a)}B(\frac{1}{2}; a+C, a+N-C), \quad & \sigma = -1\\
        \frac{1}{2^N} ,  \quad & \sigma = 0\\
         \frac{1}{B(a,a)}\left( 1 -B(\frac{1}{2}; a+C, a+N-C) \right), \quad & \sigma = +1\\
    \end{aligned}
    \right.
\end{equation}
Consequently,
\begin{equation}\label{eq:3.37}
L_q(T; 0) = \sum_{Z(\overrightarrow{\sigma})=0} V_q(\sigma_1, \dots, \sigma_k) W(\sigma_1; C_1, N_1) \bullet \dots \bullet W(\sigma_k; C_k, N_k),
\end{equation}

\begin{equation}\label{eq:3.38}
\begin{aligned}
\frac{d}{dh} L_q(T; h) = \sum_{\overrightarrow{\sigma}} &
\left( Z(\overrightarrow{\sigma}) h^{Z(\overrightarrow{\sigma}) - 1} (1-h)^{k-Z(\overrightarrow{\sigma})}
- (k-Z(\overrightarrow{\sigma}) )h^{Z(\overrightarrow{\sigma})} (1-h)^{k-Z(\overrightarrow{\sigma})-1} \right) \bullet \\
&V_q(\sigma_1, \dots, \sigma_k) W(\sigma_1; C_1, N_1) \bullet \dots \bullet W(\sigma_k; C_k, N_k),
\end{aligned}
\end{equation}
and
\begin{equation}\label{eq:3.39}
\begin{aligned}
\left.\frac{d}{dh} L_q(T; h) \right\rvert_{h=0} = & \sum_{Z(\overrightarrow{\sigma})=1} V_q(\sigma_1, \dots, \sigma_k) W(\sigma_1; C_1, N_1) \bullet \dots \bullet W(\sigma_k; C_k, N_k)\\
& -k \sum_{Z(\overrightarrow{\sigma})=0} V_q(\sigma_1, \dots, \sigma_k) W(\sigma_1; C_1, N_1) \bullet \dots \bullet W(\sigma_k; C_k, N_k).
\end{aligned}
\end{equation}

For the numerator of \eqref{eq:3.30}, the small- \(h\) behavior is proportional to \eqref{eq:3.39},
because \(L_{\text{umi}}(T;0)=0\), as the nonzero values of \(V_{\text{umi}}(\overrightarrow{\sigma})\)
all have \(Z(\overrightarrow{\sigma}) \geq 1\). 
The denominator of \eqref{eq:3.30}, \(L_{\text{sym}}(T; 0)\) is nonzero, because \(V_{\text{sym}}(\overrightarrow{\sigma})=1\) for six triplets of nonzero arguments
(the cases \(n_{half}=0\) in eq. \eqref{eq:3.33}).
Thus, for small \(h\), the likelihood ratio \eqref{eq:3.30} is proportional to \(h\).
This proportionality indicates to what extent adding a small amount of mass to the prior at
\(R=\frac{1}{2}\) leads to triplets of choice probabilities that are compatible with the ultrametric property.

Since the triplets in each triad form non-overlapping sets (as was the case for the analysis of symmetry),
we can combine the likelihood ratios for each triplet to form a likelihood ratio for the entire dataset:

\begin{equation}\label{eq:3.40}
LR_{\text{umi}}(\overrightarrow{C}, \overrightarrow{N}; h) = \prod_{T \in \mathfrak{trip}} LR_{\text{umi}} (T; h).
\end{equation}
The analysis of the limiting behavior of \(LR_{\text{umi}} (T; h)\) then motivates an index of the extent to which a set of observations is compatible with an ultrametric distance:

\begin{equation}\label{eq:3.41}
I_{\text{umi}}(\overrightarrow{C}, \overrightarrow{N}) = \frac{1}{\#\mathfrak{trip}} \lim_{h\rightarrow0} \left( \log LR_{\text{umi}}(\overrightarrow{C}, \overrightarrow{N}; h)\right) - \log h.
\end{equation}

This index is an average measure of pointwise compatibility of choice probabilities with an
ultrametric distance across all triads. Moreover, in view of the remark following Proposition~\ref{pr:2} and noting that the triads in each triplet are independent, it can also be considered as a measure of setwise compatibility across the entire stimulus space \(S\).
In applying \eqref{eq:3.41} to data, the limiting behavior can be determined by setting
\(h\) to a small nonzero value, e.g., 0.01 or 0.001, as we will show below.

As is the case for symmetry \(I_{\text{sym}}\) (eq. \eqref{eq:3.24}), a useful benchmark is the
\textit{a priori} value, \(I_{\text{umi}}(\overrightarrow{0}, \overrightarrow{0})\). 
To calculate this, it suffices to consider a single triad \(T_0\) for which there are no observations:

\begin{equation}\label{eq:3.42}
I_{\text{umi}}(\overrightarrow{0}, \overrightarrow{0}) = \lim_{h\rightarrow0} (\log LR_{\text{umi}}(T_0;h)) -  \log h .
\end{equation}
The numerator of \(LR_{\text{umi}}(T_0;h)\) (eq. \eqref{eq:3.30}), and its behavior for small \(h\) is given by \(\left.\frac{d}{dh} L_{\text{umi}}(T_0; h) \right\rvert_{h=0}\). 
This can be computed from \eqref{eq:3.39}, noting that (from eq. \eqref{eq:3.36})

\begin{equation}\label{eq:3.43}
    W(\sigma; 0, 0) = \left\{ 
    \begin{aligned}
        \frac{1}{2}, \quad & \sigma = \pm1\\
        1,  \quad & \sigma = 0\\
    \end{aligned}
    \right.
\end{equation}
and that there are three nonzero contributors to \(V_{\text{umi}}\) with \(Z(\overrightarrow{\sigma}) = 1\) (eq. \eqref{eq:3.32}). Thus, \eqref{eq:3.39} yields

\begin{equation}\label{eq:3.44}
L_{\text{umi}}(T_0;h) = \frac{3}{2^2}h + O(h^2).
\end{equation}

The denominator of \(LR_{\text{umi}}(T_0;h)\) is \(L_{\text{sym}}(T_0;h)\) (eq. \eqref{eq:3.30}), 
which, as mentioned above, has a nonzero value at \(h=0\). 
Using \eqref{eq:3.43} and noting that there are six nonzero contributors to \(V_{\text{sym}}\) with
\(Z(\overrightarrow{\sigma}) = 0\) (the cases \(n_{half}=0\) in eq. \eqref{eq:3.33}), \eqref{eq:3.37} yields

\begin{equation}\label{eq:3.45}
L_{\text{sym}}(T_0; 0) = \frac{6}{2^3}.
\end{equation}

Combining \eqref{eq:3.30}, \eqref{eq:3.42}, \eqref{eq:3.44}, and \eqref{eq:3.45} yields the \textit{a priori} value of the index:

\begin{equation}\label{eq:3.46}
I_{\text{umi}}(\overrightarrow{0}, \overrightarrow{0}) = \lim_{h\rightarrow0} 
\left( \log \left( \frac{\frac{3}{4}h+O(h^2)}{\frac{3}{4}}\right) -  \log h \right) = 0.
\end{equation}

In sum, the index \(I_{\text{umi}}(\overrightarrow{C}, \overrightarrow{N})\) (eq. \eqref{eq:3.41}) 
evaluates whether an experimental set of dis-similarity responses is compatible with an ultrametric model, and does so in a way that recognizes the intrinsic limitation that experimental data can never show that a choice probability is exactly \(\frac{1}{2}\). 
If the index is greater than 0, the observed data are more likely to be compatible with an ultrametric model than a set of
unstructured choice probabilities; values less than 0 indicate progressively greater deviations from an ultrametric model.

\subsubsection*{Addtree}

Formulation of the addtree index follows along similar lines, but with a focus on tents rather than triplets -- corresponding to the necessary and sufficient conditions for pointwise compatibility 
(respectively Propositions~\ref{pr:3} and \ref{pr:4}).
For consistency with the machinery developed above for \(I_{\text{umi}}\), we continue with the more general prior
\eqref{eq:3.27}, though the specialization to \(h=0\) 
(the beta-function prior \eqref{eq:3.4}), is of primary interest.
For each tent \(\mathcal{T}\), we consider the likelihood ratio

\begin{equation}\label{eq:3.47}
LR_{\text{addtree}}(\mathcal{T}; h) = \frac{L_{\text{addtree}}(\mathcal{T}; h)}{L_{\text{symtent}}(\mathcal{T}; h)}. 
\end{equation}
Here,
\begin{equation}\label{eq:3.48}
L_{\text{addtree}}(\mathcal{T}; h) = \int\limits_{\Omega_{\text{addtree}}} p(\overrightarrow{R}_\mathcal{T}| \overrightarrow{C}_\mathcal{T}, \overrightarrow{N}_\mathcal{T}) d\overrightarrow{R}_\mathcal{T},
\end{equation}
where \(\Omega_{\text{addtree}}\) is the space in which the six choice probabilities in the tent \(\mathcal{T}\) are compatible with 
symmetry but \textit{falsify} the conditions \eqref{eq:2.14} and 
\eqref{eq:2.15} of Proposition~\ref{pr:3}, and
\begin{equation}\label{eq:3.49}
L_{\text{symtent}}(\mathcal{T}; h) = \int\limits_{\Omega_{\text{symtent}}} p(\overrightarrow{R}_\mathcal{T}| \overrightarrow{C}_\mathcal{T}, \overrightarrow{N}_\mathcal{T}) d\overrightarrow{R}_\mathcal{T},
\end{equation}
where \(\Omega_{\text{symtent}}\) is the space in which the six choice probabilities in the tent \(\mathcal{T}\) are merely compatible with symmetry.

These integrals have the same form as \eqref{eq:3.31}, so it suffices to specify the values of 
\(V_q \left( sgn(R_1-\frac{1}{2}), \cdots, sgn(R_6-\frac{1}{2})\right)\).
For definiteness, given a tent \(\mathcal{T} = \{z; a, b, c\}\) with \(z\) at the vertex and \(\{a, b, c\}\) at the base, we specify the six choice probabilities needed to compute \(V\) as follows:
for the tripod component, 
\(R_1 \triangleq R(z; b,c)\), 
\(R_2 \triangleq R(z; c,a)\),
\(R_3 \triangleq R(z; a,b)\);
for the base, 
\(R_4 \triangleq R(a; b,c)\), 
\(R_5 \triangleq R(b; c,a)\),
\(R_6 \triangleq R(c; a,b)\).
These choice probabilities or those of the mirror triads (but no other choice probabilities) enter into determining whether the conditions \eqref{eq:2.14} and \eqref{eq:2.15} are falsified for this 
tent: the choice probabilities \(R_1\), \(R_2\), \(R_4\), and \(R_5\) 
are explicit in \eqref{eq:2.14} and \eqref{eq:2.15} and all of the 
\(R_i\) are used equally as the base elements \(\{a, b, c\}\) are permuted.
Since \(V_{\text{addtree}}\) and \(V_{\text{symtent}}\) has six arguments, each of which can take on any of three values 
\(\{-1, 0, 1\}\), there are \(3^6 = 729\) values to specify.

For \(V_{\text{addtree}}\), these values may be determined as follows. For the choice probabilities to be pointwise compatible with an addtree distance (i.e., for \(V_{\text{addtree}}=1\)),
the conditions \eqref{eq:2.14} and \eqref{eq:2.15} cannot hold for any of the permutations of \(\{a, b, c\}\). 
Since these conditions are symmetric under interchange of \(a\) and \(b\), it suffices to consider the cyclic permutations.
So the region of \([0,1]^6\) in which \(V_{\text{addtree}}=1\)
is the intersection of the region that falsifies the conditions
\eqref{eq:2.14} and \eqref{eq:2.15}, which we denote \(V^{[c]}_{\text{addtree}}\), with the regions that falsify these conditions after cyclic permutation of \((a, b, c)\), which we denote 
\(V^{[a]}_{\text{addtree}}\) and \(V^{[b]}_{\text{addtree}}\).
Additionally, \(V_{\text{addtree}}=0\) for sets of choice probabilities that are incompatible with a symmetric distance. Thus,

\begin{equation}\label{eq:3.50}
V_{\text{addtree}} = V^{[a]}_{\text{addtree}} V^{[b]}_{\text{addtree}} V^{[c]}_{\text{addtree}} V_{\text{symtent}}.
\end{equation}
\(V^{[c]}_{\text{addtree}}=1\) except when all of the inequalities \eqref{eq:2.14} hold, or, as specified by \eqref{eq:2.15}, when
\(D(z, c) > D(z, a)\) or  \(D(a,b) > D(b,c)\) (but not both) is replaced by equality, and
\(D(z, c) > D(z, b)\) or  \(D(a,b) > D(c,a)\) (but not both) is replaced by equality.
Thus, \(V^{[c]}_{\text{addtree}}=0\) as follows:

\begin{equation}\label{eq:3.51}
\begin{aligned}
& V^{[c]}_{\text{addtree}}(+\tau_1, -\tau_2, \sigma_3, -\tau_4, +\tau_5, \sigma_6) = 0 \\
& \text{for} \, \, (\tau_1, \tau_4) \, \, \text{and} \, \, (\tau_2, \tau_5) \in \{(1,1), (1,0), (0,1)\}; \, \, \sigma_3 \, \, \text{and} \, \, \sigma_6 \in \{-1, 0, +1\}.
\end{aligned}
\end{equation}
Here, the paired \(\tau_i\)'s -- not  both of which can be zero -- handle the allowed equalities specified by \eqref{eq:2.15} and
the \(\sigma\)'s handle the lack of a dependence on the third and sixth arguments.
\(V^{[a]}_{\text{addtree}}\) and \(V^{[b]}_{\text{addtree}}\) are
then determined by cyclic permutation:
\begin{equation}\label{eq:3.52}
\begin{aligned}
V^{[a]}_{\text{addtree}}(\sigma_1, \sigma_2,\sigma_3,\sigma_4, \sigma_5, \sigma_6) & = V^{[c]}_{\text{addtree}}(\sigma_2, \sigma_3,\sigma_1,\sigma_5, \sigma_6, \sigma_4)\\
V^{[b]}_{\text{addtree}}(\sigma_1, \sigma_2,\sigma_3,\sigma_4, \sigma_5, \sigma_6) & = V^{[c]}_{\text{addtree}}(\sigma_3, \sigma_1,\sigma_2,\sigma_6, \sigma_4, \sigma_5)
\end{aligned}
\end{equation}

\(V_{\text{symtent}}\) occurs both in the likelihood 
\(L_{\text{addtree}}(\mathcal{T}; h)\) as a factor via \eqref{eq:3.50} 
and alone in the likelihood \(L_{\text{symtent}}(\mathcal{T}; h)\).
The choice probabilities of the three triads in the base depend on dis-similarities between the elements of the triplet \(\{a, b, c\}\), 
so the choice probabilities compatible with symmetry 
correspond to \(V_{\text{sym}}(\sigma_4, \sigma_5, \sigma_6)=1\) 
(eq. \eqref{eq:3.33}). 
The three triads in the tripod are comparisons between
\(D(z,a)\), \(D(z,b)\), and \(D(z,c)\). 
While these are unconstrained by symmetry, they must be consistent with
transitivity. That is, all of the inequalities:

\begin{equation}\label{eq:3.53}
\left.\begin{aligned}
    D(z, a) < D(z, b) \\
    D(z, b) < D(z, c) \\
    D(z, c) < D(z, a)
\end{aligned}\right\}
\end{equation}
cannot hold, nor can it hold if up to two of the inequalities are non-strict, nor if all signs of comparison are inverted.
This precisely matches the constraints on three dis-similarities required for compatibility with symmetry in Proposition~\ref{pr:1}, 
so it is captured by  \(V_{\text{sym}}(\sigma_1, \sigma_2, \sigma_3)\).
Thus,

\begin{equation}\label{eq:3.54}
V_{\text{symtent}}(\sigma_1, \sigma_2,\sigma_3,\sigma_4, \sigma_5, \sigma_6) = V_{\text{sym}}(\sigma_1, \sigma_2,\sigma_3) V_{\text{sym}}(\sigma_4, \sigma_5,\sigma_6)
\end{equation}
In sum, the likelihood ratio \(LR_{\text{addtree}}(\mathcal{T}; h) = \frac{L_{\text{addtree}}(\mathcal{T}; h)}{L_{\text{symtent}}(\mathcal{T}; h)}\) is determined by
\begin{equation}\label{eq:3.55}
\begin{aligned}
L_{\text{addtree}}(\mathcal{T}; h) = \int_{[0,1]^6} & V_{\text{addtree}} 
\left( sgn(R_1 - \frac{1}{2}), \cdots , sgn(R_6 - \frac{1}{2})\right) R_1^{C_1}(1-R_1)^{N_1 - C_1} \bullet \cdots \\
& \bullet R_6^{C_6}(1-R_6)^{N_6 - C_6} \bullet p_{a,h}(R_1) \bullet \cdots \bullet p_{a,h}(R_6) dR_1 \cdots dR_6
\end{aligned}
\end{equation}
and
\begin{equation}\label{eq:3.56}
\begin{aligned}
L_{\text{symtent}}(\mathcal{T}; h) = \int_{[0,1]^6} & V_{\text{symtent}} 
\left( sgn(R_1 - \frac{1}{2}), \cdots , sgn(R_6 - \frac{1}{2})\right) R_1^{C_1}(1-R_1)^{N_1 - C_1} \bullet \cdots \\
& \bullet R_6^{C_6}(1-R_6)^{N_6 - C_6} \bullet p_{a,h}(R_1) \bullet \cdots \bullet p_{a,h}(R_6) dR_1 \cdots dR_6
\end{aligned}
\end{equation}
where \(V_{\text{addtree}}\) and \(V_{\text{symtent}}\) are given by eqs. \eqref{eq:3.50} and \eqref{eq:3.54}.

As with the other indices, we can now average the likelihood ratios across all tents to form an index of compatibility with an addtree distance: 

\begin{equation}\label{eq:3.57}
I_{\text{addtree}}(\overrightarrow{C}, \overrightarrow{N}; h) = \frac{1}{\#(\mathfrak{tent})} \sum_{\mathcal{T} \in \mathfrak{tent}} \log  LR_{\text{addtree}}(\mathcal{T}; h),
\end{equation}
where the sum is over all tents \(\mathcal{T}\). 
By virtue of Propositions~\ref{pr:3} and \ref{pr:4}, for \(h=0\), this is the average log likelihood for pointwise compatibility with an addtree distance at each tent (\(h=0\) is needed since Proposition~\ref{pr:4} only considers strict inequalities).
Interpreted as a global measure of compatibility, the caveat mentioned for \(I_{\text{sym}}\) applies here too, as setwise compatibility for the stimuli in every tent considered individually does not imply setwise compatibility on the entire stimulus space \(S\).
Here, though, there is an additional caveat: averaging the likelihood ratios in \eqref{eq:3.57} is tantamount to assuming that each tent’s contribution to the log likelihood is independent. 
This is only an approximation since tents may have overlapping triads.

Finally, we calculate the benchmark value of \(I_{\text{addtree}}\) based on the prior alone, for \(h=0\). The two instances of 
\(V_{\text{sym}}\) in its denominator (eq. \eqref{eq:3.54}) each contribute a factor of \(\frac{3}{4}\) (for each, six of \(2^3\) combinations of nonzero signs yield values of \(1\), as in the calculation of \eqref{eq:3.25}). 
In the numerator, by direct enumeration, 24 of $2^6$ nonzero sign combinations yield a value of \(V_{\text{addtree}} =1\). So 

\begin{equation}\label{eq:3.58}
I_{\text{addtree}}(\overrightarrow{0}, \overrightarrow{0}; 0) = 
\log \left( \frac{\frac{24}{64}}{\frac{3}{4}\cdot\frac{3}{4}}\right) = \log \frac{2}{3} \approx -0.4055. 
\end{equation}

\section{Application to simulated datasets}
\subsection*{Methods}
To explore the utility of the indices $I_{\text{sym}}$, $I_{\text{umi}}$, and $I_{\text{addtree}}$  we applied them to a range of simulated datasets. The main simulations considered a domain $S$ with 15 stimuli, with distances assigned according to a range of geometries (Figure~\ref{fig:4}, top). In four of the geometries, stimuli were nodes in a four-level binary tree, and the geometries are distinguished by how the distances are calculated: $Tree - UM$, an ultrametric space in which distances are given by height of the first common ancestor, $Tree - Add$, an addtree space in which distances are given by the graph distance, i.e., the number of links in the shortest path, $Tree - AddWt$, an addtree space in which distances are given by a weighted graph distance, i.e., the sum of the lengths of the links in the shortest path, and $Tree - Eucl$, a non-addtree space in which distances are given by the Euclidean distance between the node as embedded in the plane. The fifth geometry, $Line$, is an addtree space in which the stimuli are arranged in a straight line and distances are given by the graph distance (which is equivalent to the Euclidean distance). The final geometry, $Circle$, is a non-addtree space in which the stimuli are arranged in a circle and distances are given by the graph distance. Within each geometry, distances were scaled to have a root-mean-squared value of 1. \par

\begin{figure}
    \centering
    \includegraphics[width=\linewidth]{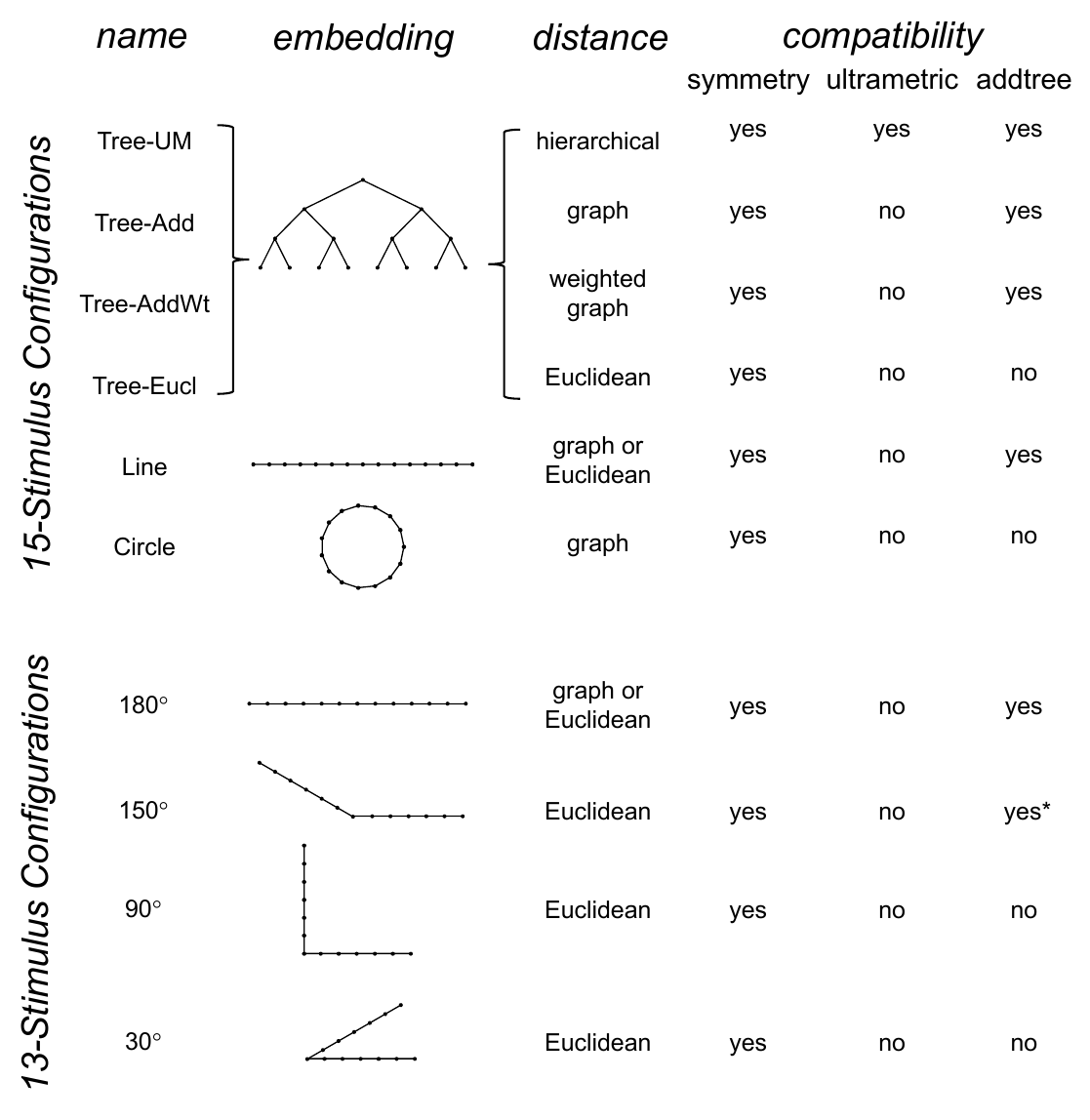}
    \caption{Configurations used for generation of simulated datasets. Simulated dis-similarity judgments are determined by comparing the distances between two stimuli, followed by adding a Gaussian noise to simulate uncertainty. The distance types are: hierarchical, the height of the first common ancestor; graph, number of links in the shortest path; weighted graph: total length of links in the shortest path; and Euclidean, the Euclidean distance in the illustrated embedding. The final three columns indicate setwise compatibility based on the rank order of dis-similarities with a symmetric distance, an ultrametric distance, and an addtree distance. Distances within each configuration are normalized to have a root-mean-squared value of 1. The asterisk for the 150$^{\circ}$ configuration indicates that, while the dis-similarities are setwise compatible with an addtree distance, the distances themselves are not addtree.}
    \label{fig:4}
\end{figure}
For each geometry, we simulated an experiment in which all possible triads were presented an equal number of times $N$, for $N \in \{1, 2, 4, 8, 16, 32\}$. The decision rule compared the distances and then added a random draw from a Gaussian of standard deviation $\sigma$ for a range of values $\sigma \in \{0.0625, 0.125, 0.25, 0.5, 1, 2\}$, to model uncertainty. (In typical experiments, $N$ is in the range of 4-8, and a value of $\sigma = 0.25$ yields a distribution of choice probabilities that is similar to those observed \cite{Waraich2022}). \par

These simulated responses were then used to compute the indices, using the beta-function prior \eqref{eq:3.4} for $I_{\text{sym}} $(eq. \eqref{eq:3.24}) and $I_{\text{addtree}}$ (eq. \eqref{eq:3.57} with $h=0$), and the modified prior \eqref{eq:3.27} for $I_{\text{umi}}$, with $h=0.001$ to approximate the limit in eq. \eqref{eq:3.42}. For most calculations, the parameter $a$ was determined by maximizing the log likelihood (eq. \eqref{eq:3.8}) or its generalization for the modified prior (eq. \eqref{eq:3.28}). In these cases, indices were only computed for $N \geq 2$. This is because for $N=1$, all empiric choice probabilities are 0 or 1 and the likelihood function is independent of $a$, so this parameter cannot be determined. Alternatively, to test sensitivity to the shape parameter $a$, we recalculated the indices for a range of fixed values for $a$ (including the flat prior, $a=0$). To test sensitivity to the point mass parameter of the prior, we recalculated indices for a range of fixed values of $h$, and also with maximum likelihood fitted values for $h$, fitting $a$ by maximum likelihood in all cases. When both $a$ and $h$ are determined by maximum likelihood, indices are only computed for $N \geq 4$.

\subsubsection*{Surrogates}

To provide insight as to whether the indices are dominated by the structure of the data and not
merely by the prior on choice probabilities, we take a heuristic approach based on parallel calculations
with surrogate datasets. The logic is as follows. If the values of the indices are dominated by the prior,
then surrogate datasets, drawn from the prior, would yield values of the indices similar to the values that
were derived from the experimental data. To construct these surrogate datasets, we make use of the fact
that the prior on choice probabilities is an independent one: the \textit{a priori} likelihood of a set of choice
probabilities is the product of the likelihoods assigned to each choice probability. Moreover, the prior
for each choice probability $R$,  either the beta-function prior \eqref{eq:3.4} or its modification \eqref{eq:3.27}  that includes
a discrete mass, is symmetric about an exchange of $R$ for $1-R$. Thus, exchanging any of the observed
choice probabilities $R$ for $1-R$ yields a set of choice probabilities which is \textit{a priori} as likely as the
experimental observations to result from the prior. The range of indices obtained from surrogates with random replacements of $R$ for $1-R$ is thus the range of indices expected when the relationships among
the choice probabilities in the data are removed, but the choice probability distribution is retained.
Conversely, values of the indices outside of this range indicate that the index values are dominated by
relationships in the data that are not inherent in the prior. \par
We implemented this strategy with the ``flip any'' surrogate: the triads were independently
selected with a probability of $\frac{1}{2}$, and the responses to the chosen triads were replaced by their
complement (so that $C$ out of $N$ responses that $D(r, x) < D(r, y)$ were replaced by $N-C$ out of $N$
responses that $D(r, x) < D(r, y)$). Note also that these surrogates match the original data in terms of
whether ``easy'' ($R$ near 0 or 1) and ``hard'' ($R$ near $\frac{1}{2}$ ) triads tend to co-occur in triplets, so this
approach tests not only the effect of the distribution of choice probabilities, but also, the effect of the co-occurrence of easy and hard choice probabilities within triplets.\par
Since (as shown below) indices calculated from the data typically deviated substantially from the
``flip any'' surrogate, we then extended this logic to a second kind of surrogate, to test whether the index
values could be accounted for by a more refined alteration of the choice probabilities. In this surrogate,
called ``flip all,'' the choice probabilities have the same marginal distribution as the independent prior,
but they are no longer independent. Instead, for the ``flip all'' surrogate, the choice probabilities within a
triplet retain all of the pairwise correlation present in the data. To achieve this, we select triplets with a
probability of $\frac{1}{2}$, and, for each selected triplet, replace \textit{all} of the responses of the three triads within the
triplet by their complements. These surrogate sets of choice probabilities have the same likelihood as
the data in the independent prior. They also match the pairwise correlations of choice probabilities
within triplets, since the choice probabilities corresponding to two triads within a triplet are either both
inverted, or both unchanged. However, this surrogate destroys any systematic relationship between the choice probabilities in different triplets, as well as any third-order correlations of the choice probabilities
within triplets. \par
The range of indices obtained from the ``flip all'' surrogates is thus the range of indices expected
when the pairwise correlations among the choice probabilities in a triplet are retained, as well as the
choice probability distribution. For $I_{\text{umi}}$ and $I_{\text{addtree}}$, if the indices calculated from the data deviate from the range of values calculated from the surrogates, this indicates that the extent of ultrametricity or
addtree structure in the data cannot be accounted for merely by pairwise correlations within triplets.
(For $I_{\text{sym}}$, the value calculated from the ``flip all'' surrogate necessarily matches the value calculated
from the data, since the likelihood ratio \eqref{eq:3.15}  is unchanged.) \par
Since the indices are sums of values that are independently computed either from triplets or tents,
the exact means and standard deviations of the surrogate indices could be computed efficiently by
exhaustive sampling of each triplet or tent separately, rather than approximating them via a random
sampling procedure.

\subsection*{Results}

\begin{figure}
    \centering
    \includegraphics[width=0.95\linewidth]{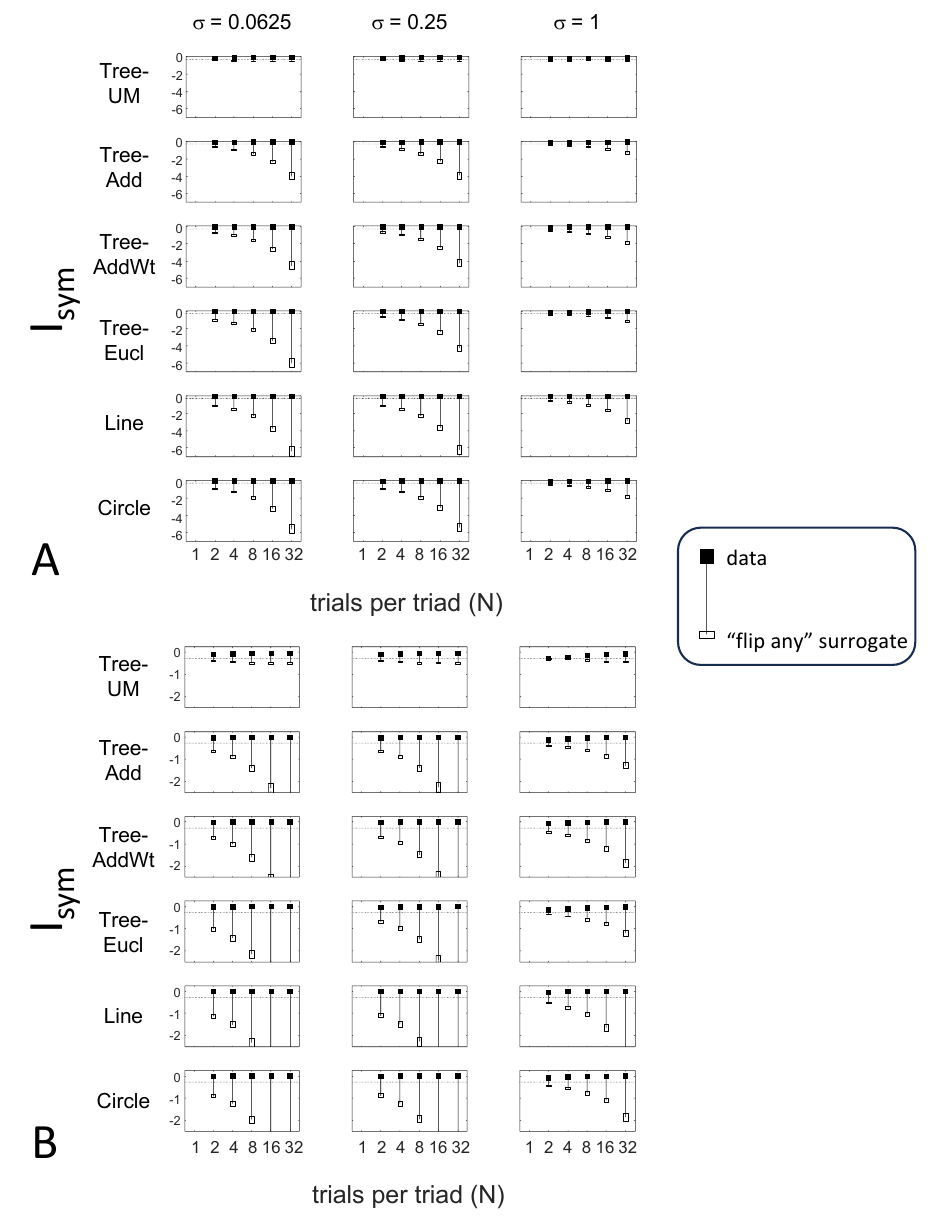}
    \caption{Panel A. Behavior of $I_{\text{sym}}$ for the 15-point configurations, a range of trials per triad ($N$, abscissa), and a range of decision rules ($\sigma$, standard deviation of added noise). Solid symbols indicate values of $I_{\text{sym}}$ computed from simulated decisions. The dashed horizontal line indicates the \textit{a priori} value for $I_{\text{sym}}$. Hollow boxes indicate values of $I_{\text{sym}}$ computed from ``flip any'' surrogates, showing the $\pm1$ standard deviation range. The thin vertical lines are to aid visualization, and do not represent ranges. The beta-function prior \eqref{eq:3.4} is used, with values of the parameter $a$ determined by maximum likelihood. Panel B expands the vertical scale. See Supplementary Figure~\ref{fig:s1} for additional values of $\sigma$ and the values of the prior parameter $a$.}
    \label{fig:5}
\end{figure}

\begin{figure}
    \centering
    \includegraphics[width=0.93\linewidth]{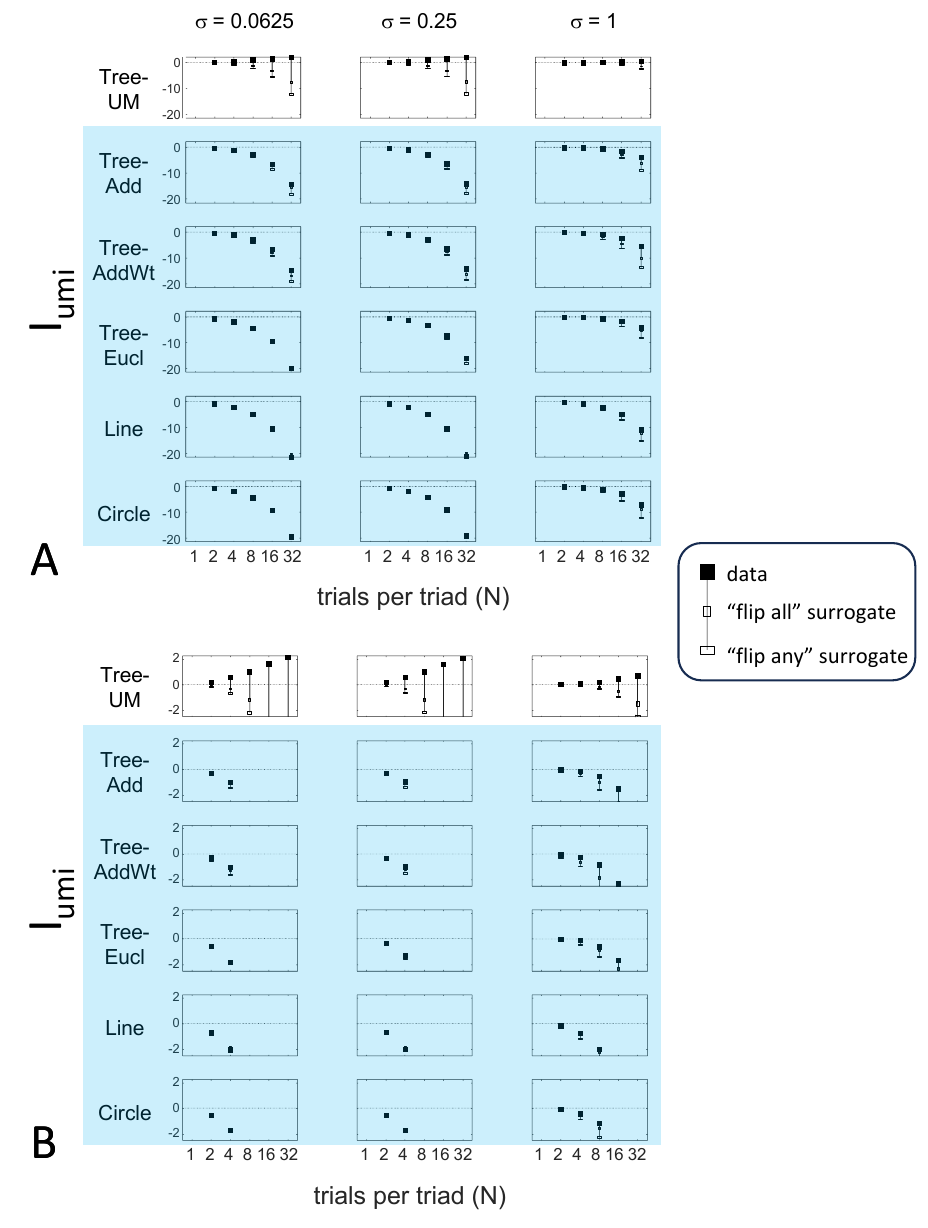}
    \caption{Panel A. Behavior of $I_{\text{umi}}$ for the 15-point configurations, a range of trials per triad ($N$, abscissa), and a range of decision rules ($\sigma$, standard deviation of added noise). Solid symbols indicate values of $I_{\text{umi}}$ computed from simulated decisions. The dashed horizontal line indicates the \textit{a priori} value for $I_{\text{umi}}$. Hollow boxes indicate values of $I_{\text{umi}}$ computed from surrogates, showing the $\pm1$ standard deviation range: wider boxes for ``flip any'' surrogates, narrower boxes for ``flip all'' surrogates. The thin vertical lines are to aid visualization, and do not represent ranges. The modified prior \eqref{eq:3.27} is used with the value of the parameter $a$ determined by maximum likelihood and $h=0.001$. Panel B expands the vertical scale. See Supplementary Figure~\ref{fig:s2} for additional values of $\sigma$ and the values of the prior parameter $a$. Blue overlay indicates the simulated datasets that are incompatible with the ultrametric property.}
    \label{fig:6}
\end{figure}

\subsubsection*{Identifying compatibility with underlying distances}

The results for $I_{\text{sym}}$ (Figure~\ref{fig:5}) demonstrate that the approach efficiently distinguishes choice data that are compatible with symmetric distances from their ``flip any'' surrogates, which are not. 
For all geometries, as the number of observations for each triad ($N$) increases, $I_{\text{sym}}$ ascends from its \textit{a priori} value (approximately $-0.2877$, eq. \eqref{eq:3.25}), indicating compatibility. 
When judgments are reliable ($\sigma \leq 0.5$, first two columns of Figure~\ref{fig:5} and first four columns of Supplementary Figure~\ref{fig:s1}), $I_{\text{sym}}$ reaches its maximal value of 0 with only a few trials ($N=2$ or $4$) per triad. For $\sigma=1$ (third column of Figure~\ref{fig:5}), the maximal value is reached for $N=8$ or $16$ trials per triad, and for highly unreliable judgments ($\sigma=2$, sixth column of Supplementary Figure~\ref{fig:s1}), the maximal value is not reached even for $N=32$ judgments. 
The increasing values of $I_{\text{sym}}$ as calculated from the simulated data are paralleled by decreasing values of $I_{\text{sym}}$ as calculated from the ``flip any'' surrogate, since this manipulation destroys the necessary relationships within each triad ($I_{\text{sym}}$ calculated from the ``flip all'' surrogate is identical to the calculation from the original data because the requirements for symmetry in Proposition~\ref{pr:1} are preserved when triads are replaced by their complements). 
These findings are all as expected and reassuring: the datasets are constructed from symmetric distances, and the ability to determine that the responses are compatible with symmetric distances increases with the number of observations $N$ and decreases with judgment uncertainty $\sigma$. \par

The results for $I_{\text{umi}}$ (Figure~\ref{fig:6}) demonstrate that the approach is able to distinguish a dissimilarity structure that is compatible with ultrametric structure ($Tree - UM$) from those that are not (all other geometries). Specifically, $I_{\text{umi}}$ rises progressively from its \textit{a priori} value of 0 for this geometry (first row of Figure~\ref{fig:6}), but falls for all of the others. Additionally, for the $Tree - UM$ geometry, analyses of both kinds of surrogates show a fall from the \textit{a priori} value. The distinction between the $Tree - UM$ geometry and the surrogate datasets, or between the $Tree - UM$ geometry and the other geometries, is evident for $N=4$ trials per triad when judgments are reliable ($\sigma \leq 0.5$, first two columns of Figure~\ref{fig:6} and first four columns of Supplementary Figure~\ref{fig:s2}), but requires $8$ or more trials per triad when they are unreliable ($\sigma \geq 1$, third column of Figure~\ref{fig:6} and last two columns of Supplementary Figure~\ref{fig:s2}). \par
\begin{figure}
    \centering
    \includegraphics[width=0.96\linewidth]{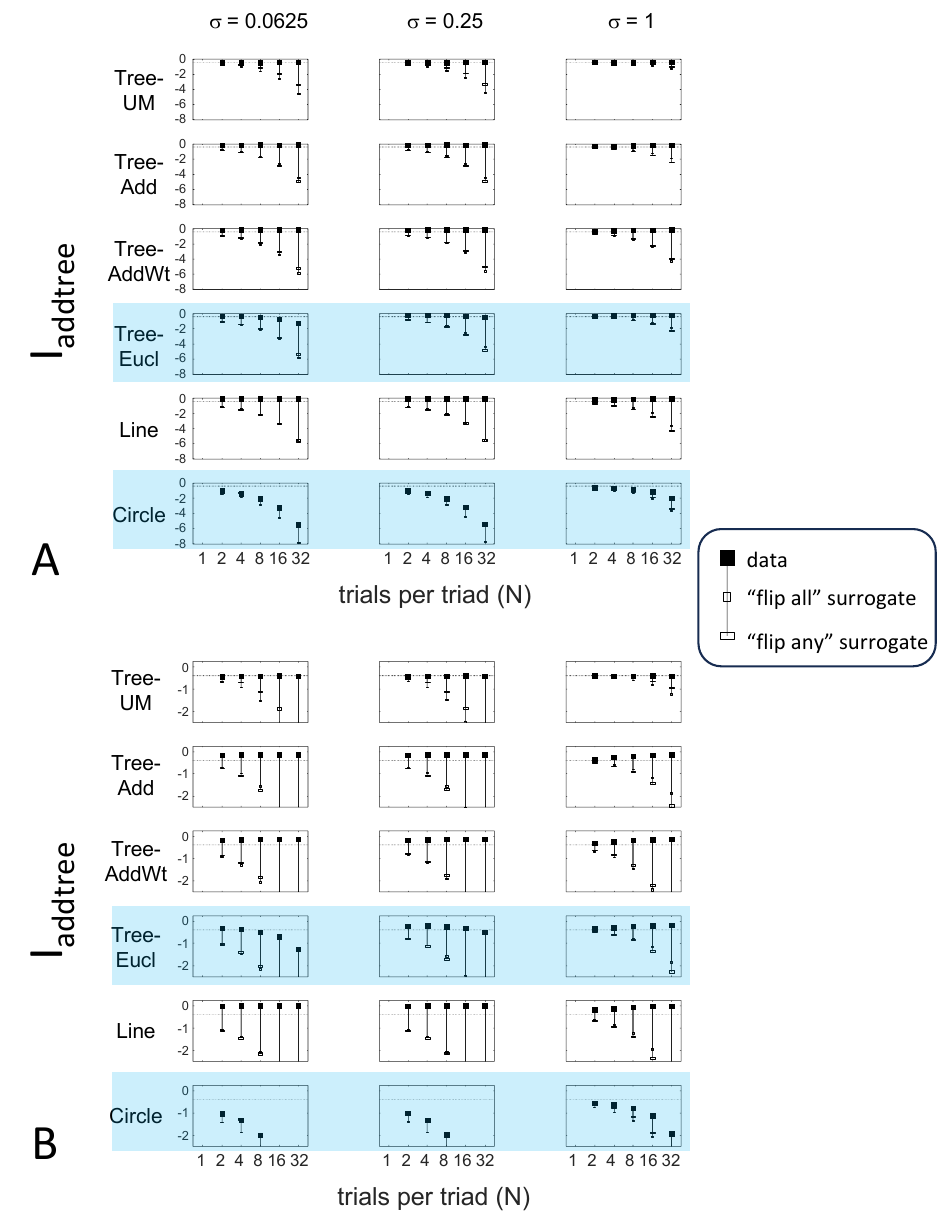}
    \caption{Panel A. Behavior of $I_{\text{addtree}}$ for the 15-point configurations, a range of trials per triad ($N$, abscissa), and a range of decision rules ($\sigma$, standard deviation of added noise). Solid symbols indicate values of $I_{\text{addtree}}$ computed from simulated decisions. The dashed horizontal line indicates the \textit{a priori} value for $I_{\text{addtree}}$. Hollow boxes indicate values of $I_{\text{addtree}}$ computed from surrogates, showing the $\pm1$ standard deviation range: wider boxes for ``flip any'' surrogates, narrower boxes for ``flip all'' surrogates. The thin vertical lines are to aid visualization, and do not represent ranges. The beta-function prior \eqref{eq:3.4} is used, with the value of the parameter $a$ determined by maximum likelihood. Panel B expands the vertical scale. See Supplementary Figure~\ref{fig:s3} for additional values of $\sigma$ and the values of the prior parameter $a$. Blue overlay indicates the simulated datasets that are incompatible with the addtree property.}
    \label{fig:7}
\end{figure}

The results for $I_{\text{addtree}}$ (Figure~\ref{fig:7}) demonstrate that the approach is able to distinguish dissimilarity data that are compatible with an addtree structure ($Tree - UM$, $Tree - Add$, $Tree - AddWt$, $Line$), from those that are not ($Tree- Eucl$, $Circle$). 
For the graph and weighted-graph geometries ($Tree - Add$, $Tree - AddWt$) as well as the $Line$, $I_{\text{addtree}}$ ascends from its \textit{a priori} (approximately $-0.4055$, eq. \eqref{eq:3.58}) to a saturating value of 0, with similar dependence on the number ($N$) and reliability ($\sigma$) of judgments seen above. For the surrogates and for $Circle$, $I_{\text{addtree}}$ has the opposite behavior, descending from its \textit{a priori} value as the amount of data increases. The behavior of $I_{\text{addtree}}$ for the other two non-addtree geometries also contrasts with addtree behavior, but also suggests some caveats. For the Euclidean distance on the tree ($Tree - Eucl$), values of $I_{\text{addtree}}$ below the \textit{a priori} value -- indicating incompatibility with an addtree distance -- are only prominent when judgments are highly reliable ($\sigma \leq 0.25$) and with a larger number of judgments $N \geq 8$. This is not surprising, since the Euclidean and weighted graph distances are often identical or similar. \par
\begin{figure}
    \centering
    \includegraphics[width=\linewidth]{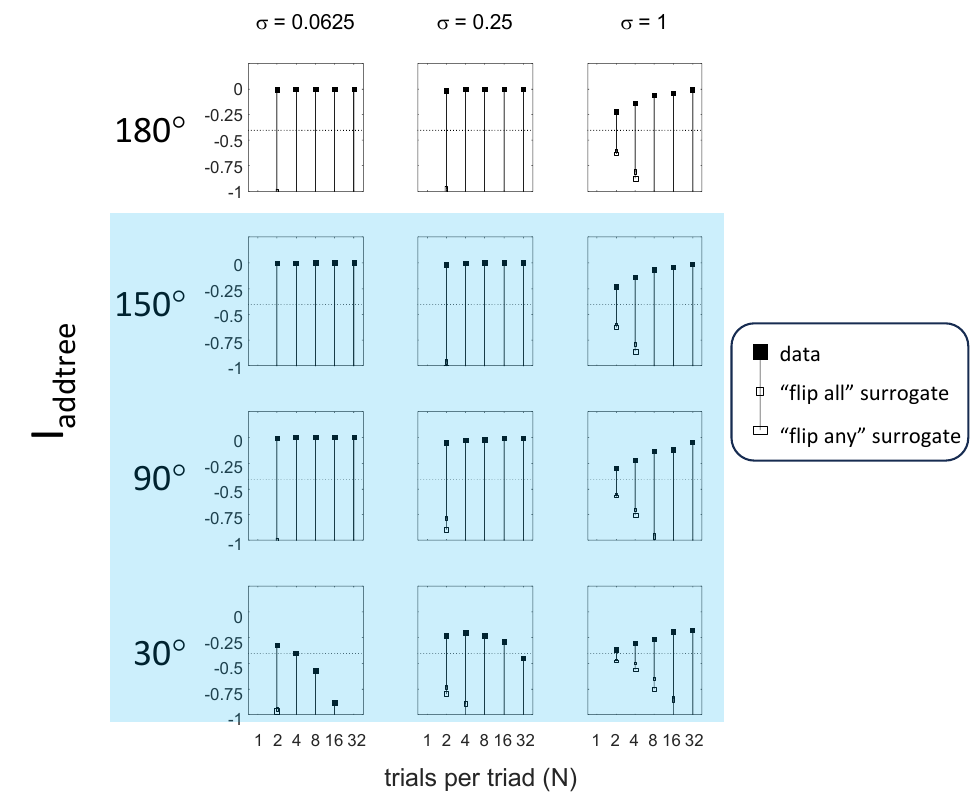}
    \caption{Behavior of $I_{\text{addtree}}$ for the 13-point configurations, a range of trials per triad (abscissa), and a range of decision rules  ($\sigma$, standard deviation of added noise). Graphical conventions as in Figure~\ref{fig:7}. See Supplementary Figure~\ref{fig:s4} for additional values of $\sigma$ and the values of the prior parameter $a$. Blue overlay indicates the simulated datasets that are incompatible with the addtree property.}
    \label{fig:8}
\end{figure}

This point is explored in greater detail in Figure~\ref{fig:8} and shows a limitation of $I_{\text{addtree}}$. We consider configurations of a stimulus set consisting of 13 points, arranged along two rays with a common origin, as shown in Figure~\ref{fig:4}, bottom. (These specifics are motivated by the textures dataset below). The angle between the rays is set at 180$^{\circ}$, 150$^{\circ}$, 90$^{\circ}$, and 30$^{\circ}$, producing a series of simulations in which the
distinction between the graph distance and the Euclidean distance is parametrically varied. For an angle of 180$^{\circ}$, which is equivalent to a straight line of 13 points, the graph distance and the Euclidean distance
are identical. The graph distance is independent of the angle, but the Euclidean distance departs
progressively from the graph distance as the angle decreases. For an angle of 150$^{\circ}$, the underlying choice probabilities for the Euclidean distance differs from that of the graph distance, but the rank orders of the distances are identical. All triadic choice probabilities based on a Euclidean distance are thus pointwise compatible with an addtree metric. In fact, they are setwise compatible: a monotonic transformation that rounds up each Euclidean distance to the nearest integer  multiple of the nearest-neighbor distance yields the graph distance, which is addtree. In this scenario, the values of $I_{\text{addtree}}$ are nearly identical to those for the 180$^{\circ}$ configuration (they are not precisely identical, because the maximum likelihood beta-function prior for the two datasets are slightly different since the choice probabilities are not identical). 
For an angle of 90$^{\circ}$, some of rank-orders of distances are different in the two geometries, and at this point, $I_{\text{addtree}}$ appears to show a slight difference in behavior for the Euclidean model at $\sigma = 0.25$. However, this is more a manifestation of uncertainty in determining the choice probabilities, rather than evidence that the choice probabilities are incompatible with an addtree geometry -- since $I_{\text{addtree}}$ never falls below the \textit{a priori} value, and rises with increasing amounts of data ($N$), or decreasing $\sigma$. In contrast, for an angle of 30$^{\circ}$, where there are major differences between rank orders of graph and Euclidean distances, $I_{\text{addtree}}$ clearly identifies the incompatibility with an addtree geometry, as its value remains below 0.2 and, with sufficient data of sufficient quality (e.g., $N = 8$ with $\sigma = 0.0625$ in Figure~\ref{fig:8}, or $N=8$ with $\sigma = 0.125$ in Supplementary Figure~\ref{fig:s4}), it falls below the \textit{a priori} value. In sum, these examples underscore the fact that $I_{\text{addtree}}$ is only sensitive to rank orders of dis-similarities, not to metrical structure, and show that modest deviations of rank orderings from compatibility with addtree distances may not be detectable.\par

The second notable point in Figure~\ref{fig:7} is that for the $Tree - UM$, values of $I_{\text{addtree}}$ remain close to the \textit{a priori} value across the entire range of $N$ and $\sigma$, rather than rise, even though $Tree- UM$ has an ultrametric geometry. This might appear surprising, since the ultrametric geometry is a special case of an addtree geometry, and thus $Tree - UM$ is compatible with an addtree distance. The reason for this behavior is a severe mismatch between the best-fitting prior and the choice probability distribution.
That is, the simulated data had some choice probabilities are \textit{exactly} $\frac{1}{2}$, while the analysis used the unmodified beta-function prior (eq. \eqref{eq:3.4}), which does not have a point mass at $\frac{1}{2}$. 
Consequently, the prior often forced the inferred choice probabilities to one side or the other of $\frac{1}{2}$ with little evidence, and thus the pattern of \textit{a posteriori} probabilities is spuriously close to the pattern of \textit{a priori} probabilities. 
This scenario is unlikely to arise with real data (where underlying choice probabilities are not likely to be exactly $\frac{1}{2}$). 
More importantly, this potential confound can be simply addressed by using the modified prior with a point mass at $\frac{1}{2}$ (eq. \eqref{eq:3.27}), as we show below. \par
\begin{figure}
    \centering
    \includegraphics[width=\linewidth]{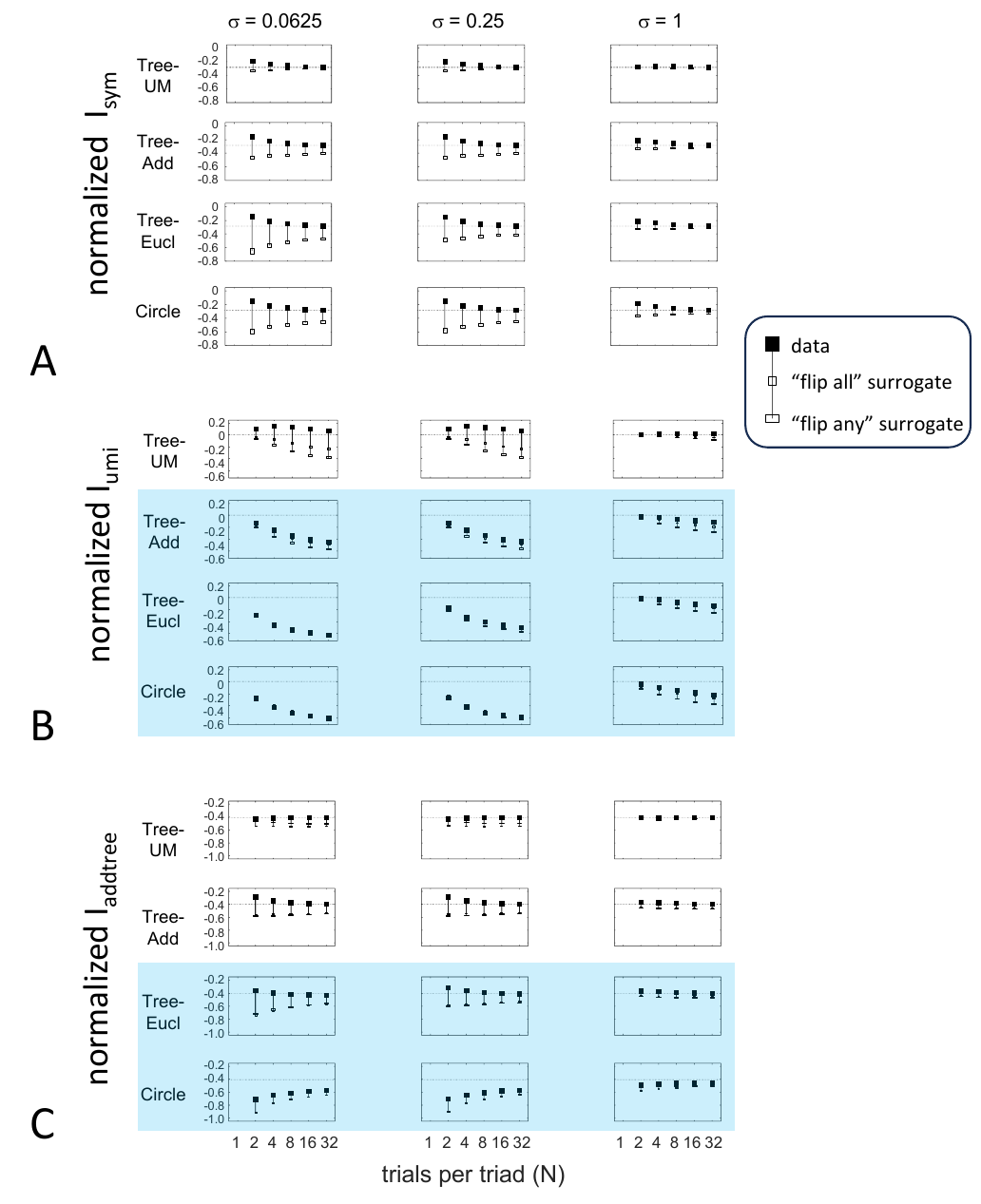}
    \caption{Normalized values of $I_{\text{sym}}$ (panel A), $I_{\text{umi}}$ (panel B), and $I_{\text{addtree}}$ (panel C), for the 15-point configurations, a range of trials per triad ($N$, abscissa), and a range of decision rules ($\sigma$, standard deviation of added noise). Data from Figure~\ref{fig:5}, Figure~\ref{fig:6}, and Figure~\ref{fig:7}, replotted after normalizing the deviation from the \textit{a priori} value by the number of trials per triad: $I_0 + \frac{I-I_0}{N}$, where $I$ is the index from Figures~\ref{fig:5}-\ref{fig:7}, $I_0$ is the \textit{a priori} value, and $N$ is the number of trials per triad. Blue overlay indicates the simulated datasets that are incompatible with the ultrametric property (B) or the addtree property (C). Other graphical conventions as in Figure~\ref{fig:5}, Figure~\ref{fig:6}, and Figure~\ref{fig:7}. See Supplementary Figure~\ref{fig:s5} for additional values of $\sigma$ and results for the $Tree -AddWt$ and $Line$ geometries.}
    \label{fig:9}
\end{figure}

\subsubsection*{Considerations for experimental design}

An experimenter is often faced with the question of the best way to deploy a limited total number of observations: whether to sample a large number of triads with few repeats, or alternatively to concentrate on a small number of triads with proportionately more repeats. To provide some guidance on this point, Figure~\ref{fig:9} replots the indices from Figure~\ref{fig:5} to Figure~\ref{fig:7} after dividing their departure from the \textit{a priori} value by the number of trials in which each triad is sampled, $N$.\par
The values in Figure~\ref{fig:9} can be interpreted as the expected contribution to the log likelihood ratio on a per-observation basis. For the $I_{\text{sym}}$ and $I_{\text{addtree}}$ examples shown, the greatest departures occur for $N=2$. That is, once a set of triads have been sampled twice, it is more informative to observe additional triads, than to observe the same triads more extensively. This optimal number of repeated observations, however, may well be larger if the incompatibility with an addtree geometry is more subtle, as in Figure~\ref{fig:8}. It may therefore be advisable to carry out targeted simulations if there is a particular model to be excluded.\par
For $I_{\text{umi}}$ (Figure $\ref{fig:9}$B), the normalized upwards departure from \textit{a priori} values peaks for $N = 4$ or
$N = 8$ for the compatible geometry $Tree - UM$, and the normalized downwards departure for the incompatible geometries does not appear to saturate. Thus, for a focus on $I_{\text{umi}}$, a more intensive sampling of a limited number of triads is expected to be more informative. \par

\subsubsection*{Dependence on the prior}

Here we explore the influence of the prior for the distribution of choice probabilities. The prior
plays a key role in this approach in that the likelihoods of the underlying choice probabilities are
estimated by combining the observed responses with a prior in a Bayesian fashion. As we will show,
while the prior is necessary to carry out this estimation, it typically has only a minor influence on the
conclusions. \par
\begin{figure}
    \centering
    \includegraphics[width=\linewidth]{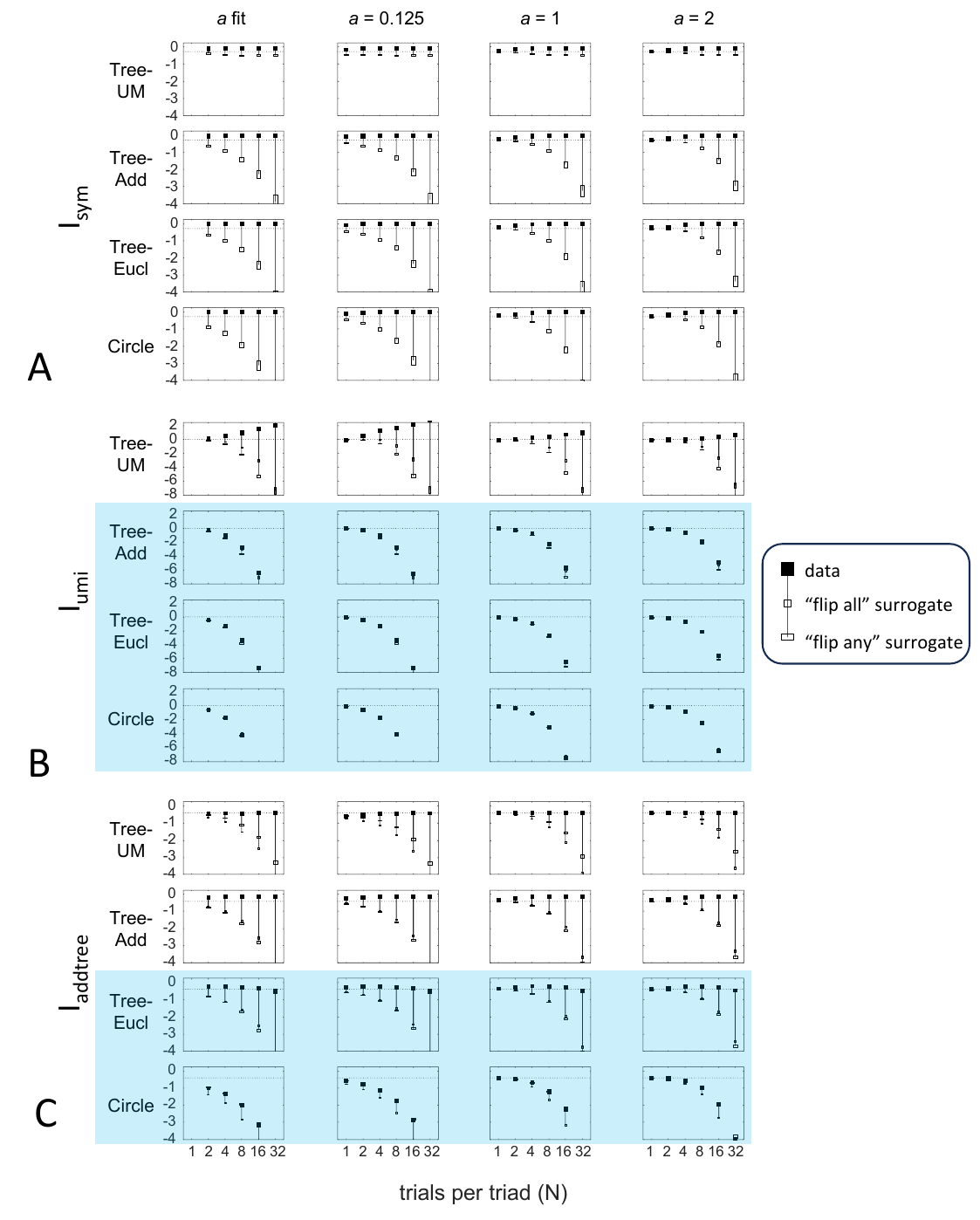}
    \caption{Dependence on the prior’s shape parameter, $a$, for $I_{\text{sym}}$ (panel A), $I_{\text{umi}}$ (panel B), and $I_{\text{addtree}}$ (panel C), for the 15-point configurations and a range of trials per triad ($N$, abscissa). The decision rule $\sigma = 0.25$ is used. Column 1: $a$ is determined by maximum likelihood; other columns: $a$ is assigned the value indicated over each column. The beta-function prior \eqref{eq:3.4} is used for $I_{\text{sym}}$ and $I_{\text{addtree}}$; the modified prior \eqref{eq:3.27} is used for $I_{\text{umi}}$ with $h=0.001$. Blue overlay indicates the simulated datasets that are incompatible with the ultrametric property (B) or the addtree property (C). Other graphical conventions as in Figure~\ref{fig:5}, Figure~\ref{fig:6}, and Figure~\ref{fig:7}. See Supplementary Figure~\ref{fig:s6} for additional fixed values of $a$ and values of the prior parameter $a$ determined by maximum likelihood.}
    \label{fig:10}
\end{figure}
\begin{figure}
    \centering
    \includegraphics[width=\linewidth]{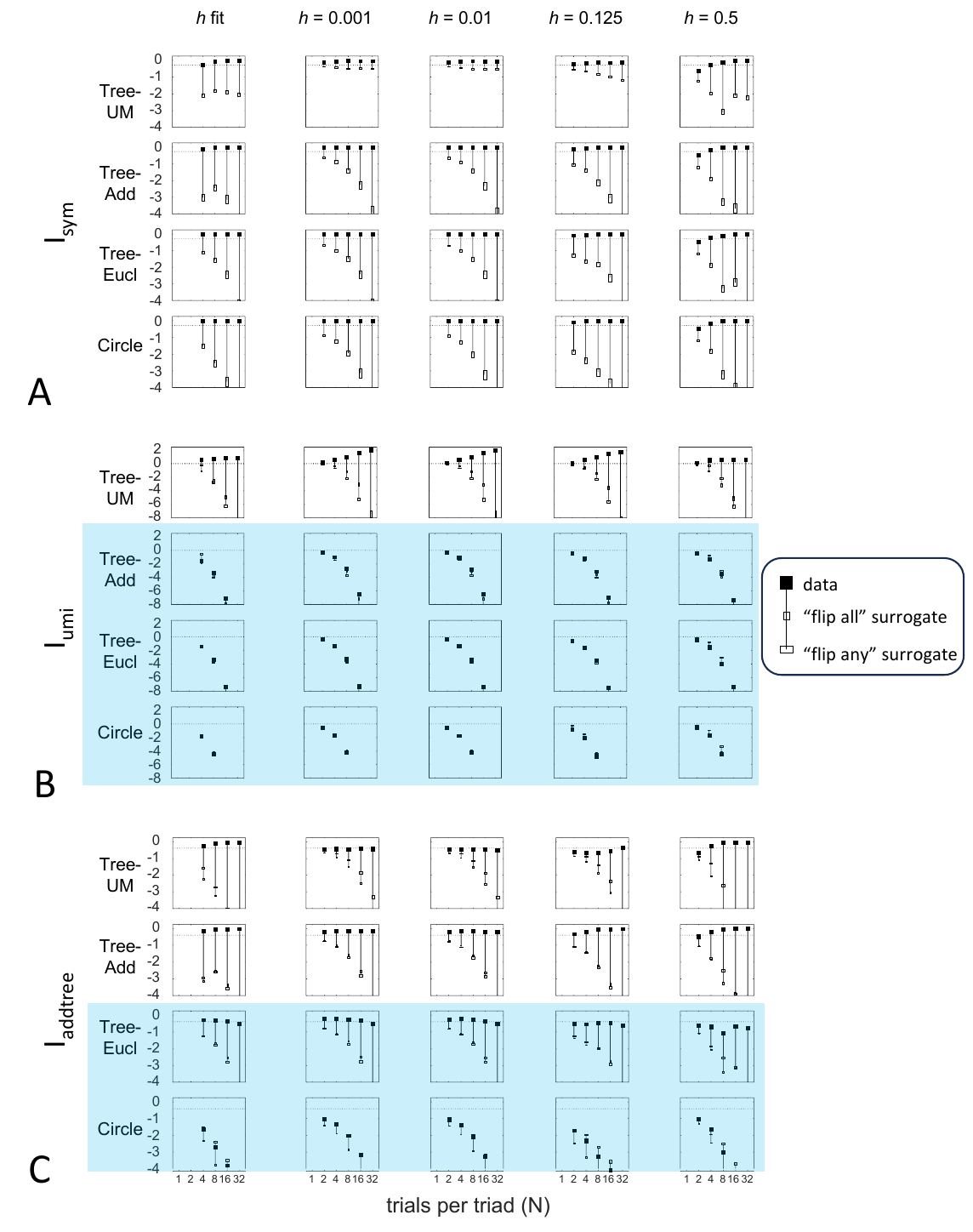}
    \caption{Dependence on the prior’s point mass parameter, $h$, for $I_{\text{sym}}$ (panel A), $I_{\text{umi}}$ (panel B), and $I_{\text{addtree}}$ (panel C), for the 15-point configurations and a range of trials per triad ($N$, abscissa). The decision rule $\sigma = 0.25$ is used. Column 1: $h$ is determined by maximum likelihood; other columns: $h$ is assigned the value indicated over each column. The parameter $a$ is determined by maximum likelihood in all cases. The modified prior \eqref{eq:3.27} is used for all indices. Blue overlay indicates the simulated datasets that are incompatible with the ultrametric property (B) or the addtree property (C). Other graphical conventions as in Figure~\ref{fig:5}, Figure~\ref{fig:6}, and Figure~\ref{fig:7}. See Supplementary Figure~\ref{fig:s7} for additional fixed values of $h$ and the values of the prior parameters $a$ and $h$ determined by maximum likelihood.}
    \label{fig:11}
\end{figure}
Up to this point, we used the prior specified in the Implementation section: the beta-function prior $\eqref{eq:3.4}$ for $I_{\text{sym}}$ and $I_{\text{addtree}}$, in which the shape parameter $a$ was determined by maximum-likelihood, and the modified prior \eqref{eq:3.27} for $I_{\text{umi}}$, with a point mass of weight $h=0.001$ to approximate the limit of eq. $\eqref{eq:3.41}$. Here we consider instead specific choices of the shape parameter $a$ (Figure~\ref{fig:10}) and alternative choices for the point mass $h$, including fitting $h$ by maximum likelihood (Figure~\ref{fig:11}). Note that these choices provided distributions that cover a wide range of shapes: for $h=0$, these choices
specify priors that range from U-shaped ($a< 1$, typical of experimental data, where most judgments are
clearcut) to an inverted U ($a>1$), and that $h>0$ adds a point mass at $\frac{1}{2}$, leading to trimodal distributions for $a<1$. The flat prior corresponds to $a=1$ and $h=0$. We consider four representative geometries: $Tree - UM$ (ultrametric), $Tree - Add$, (addtree), $Tree - Eucl$ (non-addtree but only modestly incompatible), and $Circle$ (non-addtree but highly incompatible), and the decision rule specified by $\sigma = 0.25$. \par
Figure~\ref{fig:10} shows the behavior of the indices when $a$ is set to a fixed value rather than determined by maximum likelihood. Supplementary Figure~\ref{fig:s6} shows their behavior for additional values of $a$ and also gives the values of $a$ determined by the maximum-likelihood fit. As expected, if $a$ is set to a value that is far from the maximum likelihood value, the indices require more data to depart from their \textit{a priori} value. This effect, however, is modest: for a flat prior ($a=1$) and even for a poorly-chosen prior
($a=2$), recovery of the behavior seen for the fitted prior is typically seen with $N =4$ or $N =8$ trials per triad. \par
Figure~\ref{fig:11} shows a parallel analysis of alternative choices for the point mass $h$. Supplementary Figure~\ref{fig:s7} shows their behavior for additional values of $h$ and also gives the values of $a$ and $h$ determined by the maximum-likelihood fit. For $I_{\text{sym}}$ and $I_{\text{addtree}}$, the influence of this choice is again modest. The main finding is that the inability of $I_{\text{addtree}}$ to detect compatibility of the strictly ultrametric dataset $Tree - UM$ with addtree geometry when using a beta-function prior (Figure~\ref{fig:7}, top row) is remedied by allowing $h > 0$ (Figure~\ref{fig:11}C, top row) as this prior allows for capturing the large fraction of underlying choice probabilities that are exactly $\frac{1}{2}$. For $I_{\text{umi}}$, the dependence on the prior is somewhat
complex. For the non-ultrametric geometries, there is a modest dependence on the prior: $I_{\text{umi}}$ descends similarly from its \textit{a priori} value with increasing amounts of data. But for $Tree - UM$, there is a sharper ascent of $I_{\text{umi}}$ when $h$ is chosen to be small ($0.001$ or $0.01$), than when $h$ is close to its fitted value of $\sim 0.4$. This is due to the fact that $I_{\text{umi}}$ is defined by its limiting behavior when $h$ approaches zero. This limiting behavior is accurately estimated with $h=0.001$ or $0.01$, but not larger values.

\section{Application to sample experimental datasets}

Here we demonstrate the utility of the present approach via application to sample datasets from three psychophysical experiments, 
encompassing two methods for acquiring similarity judgments and
spanning low- and high-level visual domains.

\subsection*{Methods}

The first two experiments (``textures'' and ``faces'') make use of the method of Waraich et al. \cite{Waraich2022}: on each trial,
participants rank the eight comparison stimuli \(c_1,\dots, c_8\), in order of similarity to a common reference \(r\). 
These rank-orderings are then interpreted as a set of similarity 
judgments: ranking \(c_i\) as more similar than \(c_j\) to the reference \(r\) is interpreted as a triadic judgment that
\(D(c_j, r) > D(c_i,r)\).
Data are accumulated across all trials in which \(c_i\) and \(c_j\) are presented along with the reference \(r\), leading to an estimate of 
\(R_{obs}(r; c_i, c_j)\). 
Stimulus sets consisted of 24 or 25 items (described in detail with Results below), and 10 sessions of 100 trials each are presented. 
On each trial, stimuli are randomly chosen to be the reference or the comparison stimuli. 
As there were 10 sessions of 100 self-paced trials each and each trial yielded \(\binom{8}{2} = 28\) triadic judgments, each participant’s dataset contained 28,000 triadic judgments.
This corresponds to an average of \(\frac{28000}{25 \cdot \binom{24}{2}} \approx 6.08 \) trials per triad for the 25-item sets and 
\(\frac{28000}{24 \cdot \binom{23}{2}} \approx 6.92 \) trials per triad for the 24-item sets.

For the ``texture'' and ``faces'' datasets (described in detail below), stimuli were generated in MATLAB, and were displayed and sequenced using open-source PsychoPy software on a 22-inch LCD screen (Dell P2210, resolution 1680x1050, color profile D65, mean luminance 52 cd/m\textsuperscript{2}).
The display was viewed binocularly from a distance of 57 cm. 
The visual angle of the stimulus array was 24 degrees;
each stimulus (a texture patch or a face) subtended 4 degrees. 
Tallying of responses and multidimensional scaling as described in \cite{Waraich2022} were carried out via Python scripts.
Computation of the indices and visualization was carried out in MATLAB using code that is publicly available at 
\url{https://github.com/jvlab/simrank}.

The third experiment (``brightness'') uses an odd-one-out paradigm. On each trial, three stimuli are presented, each consisting of a central disk drawn from one of eight luminances, and an annular surround. 
The surround was either of minimal or maximal luminance, and was perceived as black or white, respectively. 
The participant is asked to judge the brightness of the central disk, and to choose which of the three is the outlier. 
We interpret selection of a stimulus \(x_j\) out of a triplet \(\{x_j, x_k, x_l\}\) as a judgment that the pairwise dis-similarities involving this stimulus are larger than the dis-similarity of the two non-outliers, i.e., that \(D(x_k,x_j) > D(x_k, x_l )\) and also 
that \(D(x_l, x_j) > D(x_l, x_k)\). 
Each trial thus contributes to estimates of choice probabilities for two triads, \((x_k; x_j, x_l)\) and \((x_l; x_j, x_k)\), 
and these judgments are tallied across the experiment.
Note though that, in contrast to the ``textures'' and ``faces''
datasets, here the specific triadic comparisons that enter into the tallies depend on the participant's responses.

For the ``brightness'' dataset, stimuli were generated in Python 3.10 and the NumPy library. Stimuli were displayed on a calibrated 24-inch ViewPixx monitor (1920x1080 pixel resolution, mean luminance 70 cd/m\textsuperscript{2}, Vpixx Technologies, Inc.), running custom Python libraries that handle high bit-depth grayscale images (\url{https://github.com/computational-psychology/hrl}). 
Monitor calibration was accomplished using a Minolta LS-100 photometer (Konica Minolta, Tokyo, Japan). 
The display was viewed binocularly from a distance of 76 cm.
The visual angle of the display was 39 degrees; each stimulus subtended 5 degrees, with the central disk subtending 1.67 deg. 
The three stimuli were arranged in a triangular manner, 4 degrees equidistant from the center (Fig.~\ref{fig:15}A). 
There were 16 unique stimuli, consisting of all pairings of 8 values for the luminance of the center disk (14, 33, 55, 78, 104, 131, 163 and 197 cd/m\textsuperscript{2}) and 2 values of luminance for the surrounding annulus (0.77 and 226 cd/m\textsuperscript{2}).
The 16 stimuli generated \(\binom{16}{3} = 560\) possible triplet combinations, which were presented in randomized order and position, constituting one block. 
Each session consisted of two blocks, and each participant ran four
sessions. In total, we collected 4480 trials per participant.
As each trial gives information for two triadic judgments (as mentioned above), there were 8960 triadic judgments per participant. 
The 560 triplets contain 3 triads each, so there were an average of
\(\frac{8960}{3 \cdot 560} = 5.33\) trials per triad.

The texture and faces experiments were performed at Weill Cornell Medical College, in four participants (3F, 1M), ranging in age from 23 to 62. 
Participants MC and SAW (an author) were experienced observers and familiar with the ``texture'' stimulus set from previous studies; participants BL and ZK were naïve observers.
All participated in the textures experiment; 2F (SAW and MC) participated in the faces experiment and neither had prior familiarity with those stimuli.
The brightness experiment was performed at Technische Universität Berlin in three participants (1F, 2M), ranging in age from 31 to 39. 
Participant JP was a naïve observer; participants GA (an author) and JXV were experienced observers. 
All participants had normal or corrected-to-normal vision. 
They provided informed consent following institutional guidelines and the Declaration of Helsinki, according to a protocol that was approved by the relevant institution.

In addition to the calculations described above, we also calculated the indices \(I_{\text{sym}}\), \(I_{\text{umi}}\), and \(I_{\text{addtree}}\) for surrogate datasets, as detailed in the Simulated Datasets section. 
Briefly, the ``flip any'' surrogate was created by randomly selecting individual triads, and flipping the choice probabilities for the selected triads. 
The ``flip all'' surrogate was created by randomly selecting triplets and replacing \(R_{\text{obs}}(r; x, y)\) by \(1-R_{\text{obs}}(r; x, y)\) in the selected triplets.

Finally, we estimated the standard errors for the indices calculated from the original datasets via a jackknife on triplets (for \(I_{\text{sym}}\), \(I_{\text{umi}}\)) or tents (for \(I_{\text{addtree}}\)).
Maximum likelihood parameters \(a\) and \(h\) were not re-calculated for the jackknife subsets, as pilot analyses confirmed that removal of one triplet or tent made very little change in the maximum likelihood value.

\begin{figure}
    \centering
    \includegraphics[width=0.59\linewidth]{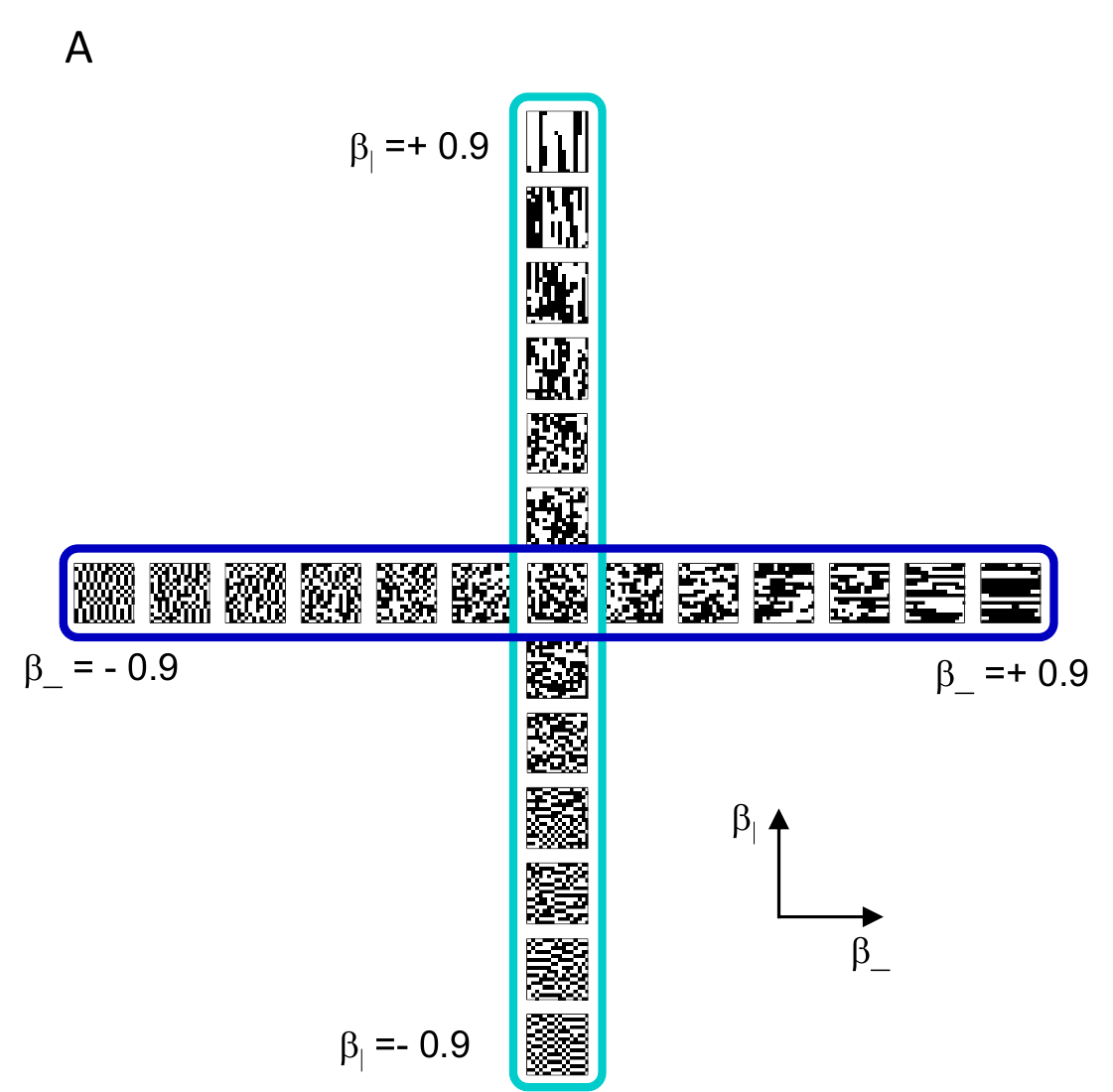}
    \\
    \includegraphics[width=0.59\linewidth]{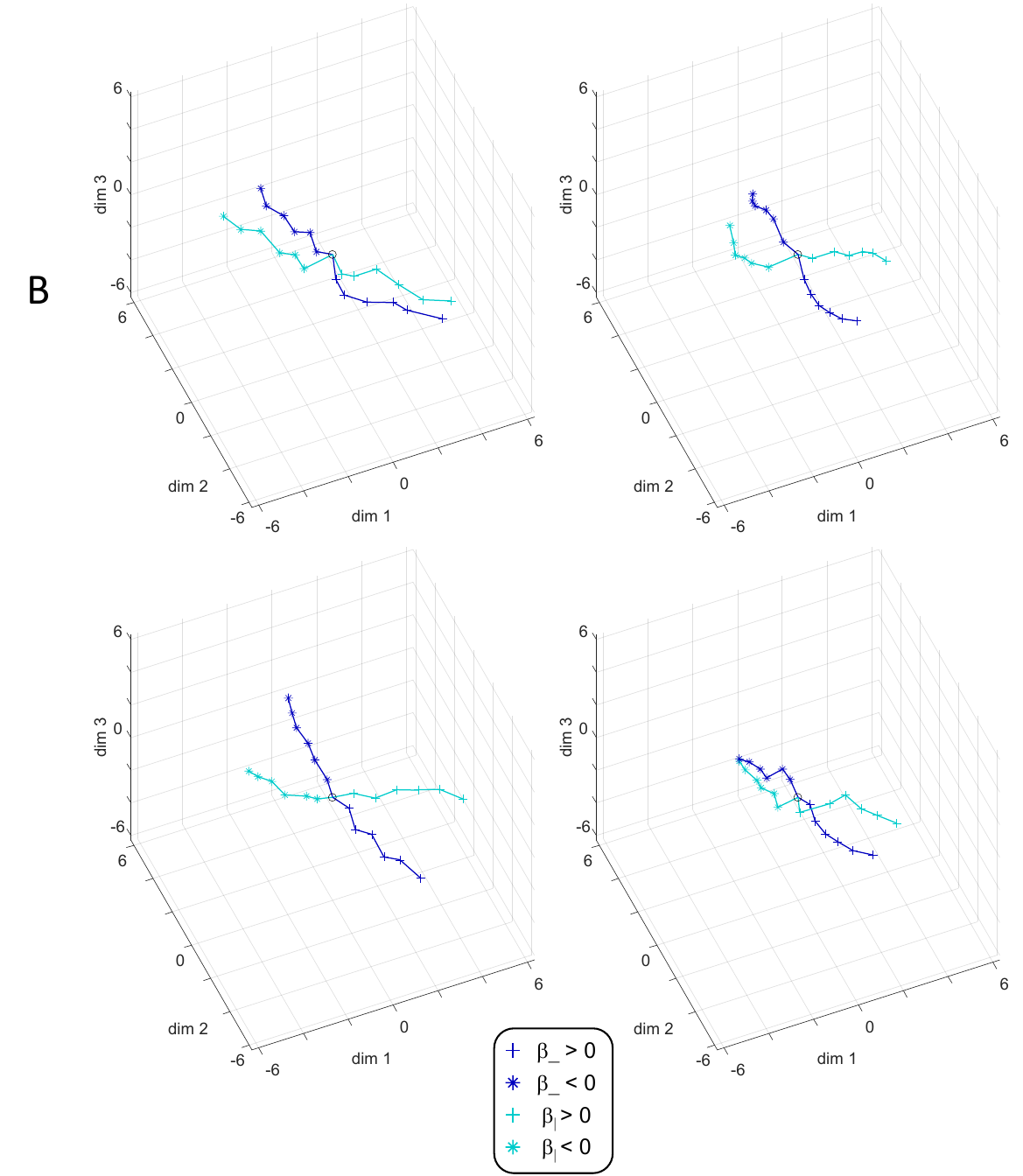}
    \caption{Panel A. Stimuli used in the texture experiment.
    Each stimulus is an array of 16$\times$16 black or white checks. 
    For stimuli enclosed in dark blue, checks are correlated (or anticorrelated) along rows. 
    Correlation strength is parameterized by \(\beta_{-}\), where \(\beta_{-}> 0\) indicates positive correlation and \(\beta_{-} < 0\) indicates negative correlation.
    For stimuli enclosed in light blue, checks are correlated (or anticorrelated) along columns, similarly parameterized by \(\beta_{|}\).
    The full stimulus set consists of 6 equally-spaced values
    positive and negative values of \(\beta_{-}\) and \(\beta_{|}\), and an uncorrelated stimulus (center), where \(\beta_{-}=\beta_{|}=0\).
    Panel B. Multidimensional scaling of similarity judgments for the stimuli in panel A for four participants. 
    The data from each participant have been rotated into a consensus alignment via the Procrustes procedure (without rescaling).
    Lines connect stimuli along each of the rays in Panel A. 
    One unit indicates one just-noticeable difference in an additive noise model \cite{Waraich2022}.
    }
    \label{fig:12}
\end{figure}

\subsection*{Results}
\subsubsection*{Textures}

\begin{figure}
    \centering
    \includegraphics[width=\linewidth]{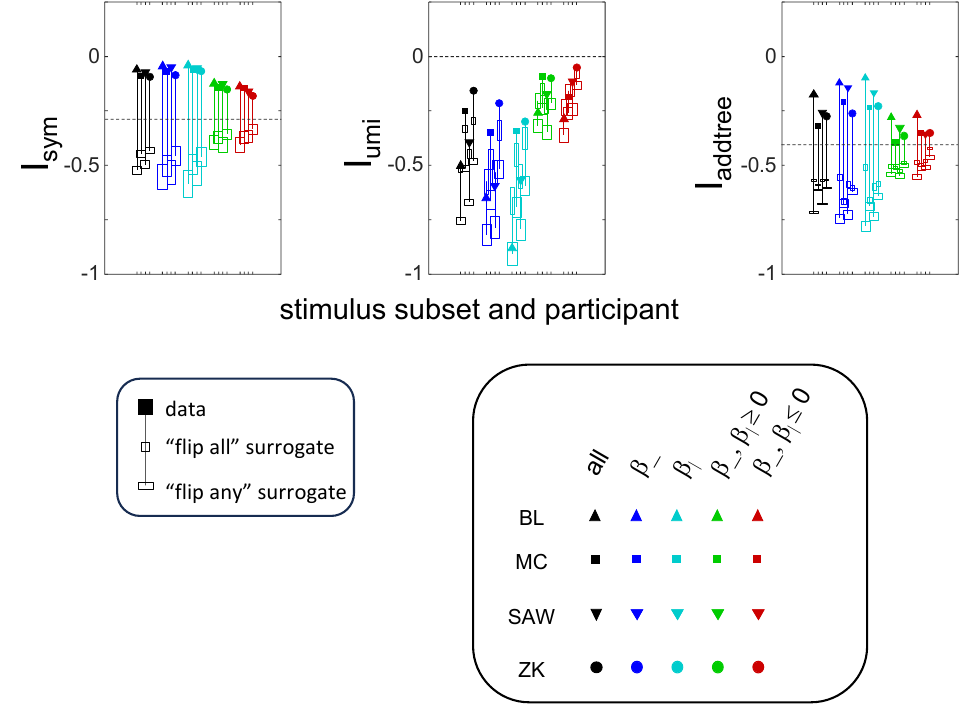}
    \caption{Indices \(I_{\text{sym}}\), \(I_{\text{umi}}\), and \(I_{\text{addtree}}\) for the texture experiment. 
    Stimulus subsets are indicated by symbol color; participants by symbol shape. 
    The vertical extent of the wide boxes indicate \(\pm1\) standard deviation for the ``flip any'' surrogates; the vertical extent of the narrow boxes indicate \(\pm1\) standard deviation for the ``flip all'' surrogates (not plotted for \(I_{\text{umi}}\), as this index is unchanged by the ``flip all'' operation). 
    The thin vertical lines are to aid visualization, and do not represent ranges. Standard errors for the experimental datasets are smaller than the symbol sizes. Null-hypothesis values of the indices are indicated by the horizontal dashed line.
    }
    \label{fig:13}
\end{figure}

The textures experiment made use of the stimulus space described in \cite{Victor2012}, a 10-dimensional space of binary textures with well-characterized discrimination thresholds \cite{Victor2015}. 
We chose a two-parameter component of this domain (Figure~\ref{fig:12}A) that allowed a focus on testing for compatibility for addtree structure.
The two parameters chosen, \(\beta_{-}\) and \(\beta_{|}\), determine the nearest-neighbor correlations in the horizontal or vertical direction: the probability that a pair of adjacent checks have the same luminance (either both black or both white) is 
\((1+\beta)/2\), and the probability that a pair of adjacent checks have the opposite luminance (one black, the other white) is \((1-\beta)/2\). Other than these constraints, the textures are maximum-entropy (see \cite{Victor2012} for details). 
For these experiments, we chose values of \(\beta_{-}\) or \(\beta_{|}\) from -0.9 to 0.9 in steps of 0.15. 
That is, six stimuli had positive values of \(\beta_{-}\)($0.15, 0.30, 0.45, 0.60, 0.75, 0.90$) with \(\beta_{|}= 0\), 
six had the corresponding negative values of \(\beta_{-}\), 
six had positive values of \(\beta_{|}\) with \(\beta_{-}=0\), 
six had negative values of \(\beta_{|}\) with \(\beta_{-}=0\),
and one had \(\beta_{-}=\beta_{|}=0\).
In the experiment, each stimulus example was unique -- that is, a stimulus is specified by a particular \((\beta_{-}, \beta_{|})\) pair, but the texture example used on each trial was a different random sample from that texture.

The rationale for this stimulus set is that we anticipated that certain subsets of stimuli would be more compatible with the addtree model than others. 
The basis for these expectations is shown in Figure~\ref{fig:12}B, which presents non-metric multidimensional scaling of the similarity data.
This analysis, carried out with the procedure detailed in \cite{Waraich2022}, uses a maximum likelihood criterion to place the 25 stimulus points in a space, so that the Euclidean distances between them best account for the choice probabilities (assuming a uniform, additive decision noise). Consistently across participants, the points along each stimulus axis (\(\beta_{-}\) or \(\beta_{|}\)) map to a gradually curving trajectory. 
For this reason, we anticipate that comparison data from the stimuli on one of these trajectories (the 13 points with either \(\beta_{-}\) or \(\beta_{|}\) equal to zero, here called an ``axis'') when analyzed in isolation, will be close to an addtree model.
However, the two trajectories are not perpendicular: rays with same signs of \(\beta\) meet at an acute angle of 45° or less. 
That is, stimuli with strong positive correlations (\(\beta_{-}> 0\)  compared to \(\beta_{|}> 0\)) are seen as relatively similar to each other. 
This is anticipated to make the subset consisting of the 13 points with either \(\beta_{-}\)  or \(\beta_{|}\) positive (a ``vee'') incompatible with the addtree model, as the shortest perceptual path between two points at the end of the positive \(\beta_{-}\) or \(\beta_{|}\) rays is much shorter than a path that traverses each ray back to the origin. 
Similar reasoning indicates that the ``vee'' formed by the two negative rays should also be incompatible with an addtree model. 
Note, though, that this intuition assumes that the Euclidean distances in Figure~\ref{fig:12}B are an accurate account of the perceptual dis-similarities; the analysis via \(I_{\text{addtree}}\) does not make this assumption.

Figure~\ref{fig:13} shows the indices \(I_{\text{sym}}\), \(I_{\text{umi}}\), and \(I_{\text{addtree}}\) computed from the full datasets for each participant, and for the ``axis'' and ``vee'' subsets.
As expected from the above analysis, the addtree index \(I_{\text{addtree}}\) is substantially higher for the ``axis'' subsets than for the ``vee'' subsets -- comparable to the difference between the values of \(I_{\text{addtree}}\) for the simulated 180° and 30° datasets in Figure~\ref{fig:8} -- and the values for the ``axis'' and ``vee'' subsets straddle the value for the full dataset.
Note that ``axis'' and ``vee'' subsets are equated in terms of the number of stimuli, and were collected simultaneously within a single experiment. 
This finding supports the efficacy of \(I_{\text{addtree}}\) in determining compatibility with addtree distances: it is close to zero for data along an axis, which is anticipated to be compatible with an addtree distance, and decreases reproducibly for subsets that form an acute ``vee'', for which the rank orders of similarity expected to be incompatible with an addtree distance. 
Note also that in all cases, it is higher than the \textit{a priori} value, and substantially higher than values computed from surrogate datasets in which choice probabilities are randomly flipped. 
This latter point indicates (not surprisingly) that for the full dataset and the selected subsets, there are portions of the data that are more compatible with an addtree model than datasets with the same choice probabilities, but no relationship between the triads.

For this dataset, values of \(I_{\text{sym}}\) were quite close to zero (usually \(>-0.1\)), indicating that nearly all (\(> e^{-0.1}\)) of the posterior distribution of choice probabilities was compatible with a symmetric dis-similarity.
\(I_{\text{umi}}\), which measures compatibility with the ultrametric model, was typically $-0.25$ or less, substantially below the \textit{a priori} value of zero. But interestingly, the highest values of \(I_{\text{umi}}\) were seen in
the ``vee'' subsets, suggesting a partially hierarchical structure -- e.g., that the two directions of correlation formed categories. 
As was the case for \(I_{\text{addtree}}\), all indices were higher than for the surrogates. 
For \(I_{\text{sym}}\), this is unsurprising, as randomly flipping choice probabilities would be unlikely to lead to a set of symmetric judgments.
For \(I_{\text{umi}}\), this finding indicates that, even though the ultrametric model is excluded, the data has islands of relatively greater compatibility with the ultrametric structure.

The above results were insensitive to the parameters \(a\) and \(h\) of the prior for the distribution of choice probabilities (eqs. \eqref{eq:3.4} and \eqref{eq:3.27}). 
The beta-function parameter \(a\) obtained by maximum likelihood (eq. \eqref{eq:3.8}) ranged from $0.25$ to $1.25$ (with the lowest values for the full texture dataset), but very similar results to Figure~\ref{fig:13} were obtained with \(a=0.5\) for all datasets. 
For \(I_{\text{umi}}\), the limit in \eqref{eq:3.41} was estimated by setting \(h=0.001\) but similar values were obtained for \(h=0.01\).
The findings for \(I_{\text{sym}}\) and \(I_{\text{addtree}}\), here shown for \(h=0\), were not substantially changed when \(h\) was determined by maximum likelihood. 
These values of \(h\) were typically quite small (median, \(7 \cdot10^{-5}\)).

\subsubsection*{Faces}

\begin{figure}
    \centering
    \includegraphics[width=\linewidth]{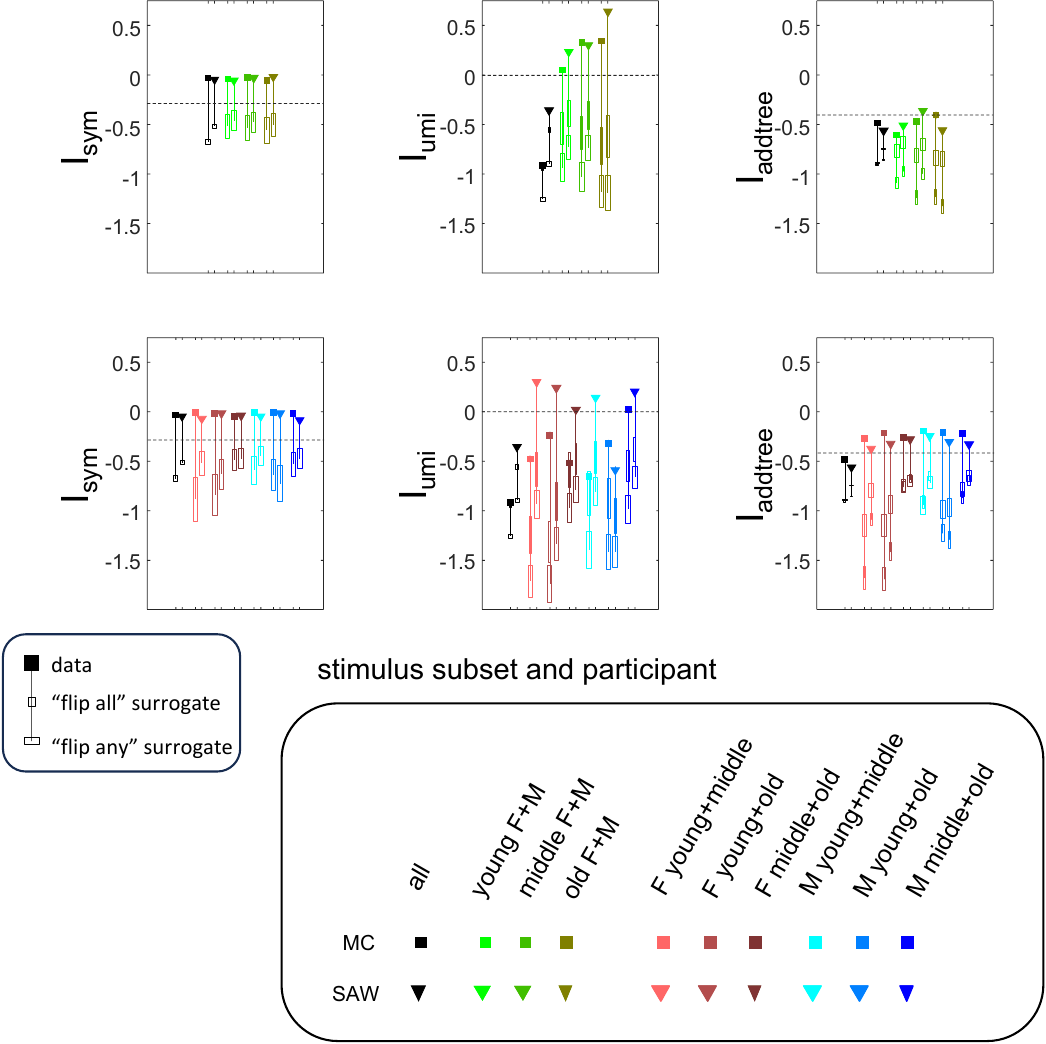}
    \caption{Indices \(I_{\text{sym}}\), \(I_{\text{umi}}\), and \(I_{\text{addtree}}\) for the faces experiment. 
    Stimulus subsets are indicated by symbol color; participants by symbol shape. 
    Upper row: full stimulus set (black symbols), and subsets partitioned by age. 
    Lower row: full stimulus set (black symbols, repeated), and subsets partitioned by gender, with two age ranges each. Other graphical conventions as in Figure~\ref{fig:13}.
    }
    \label{fig:14}
\end{figure}

The faces experiment used stimuli drawn from the public-domain library of faces, at \url{https://faces.mpdl.mpg.de/imeji/}, which contained color photographs of 171 individuals, stratified in three age ranges (``young'', ``middle'', ``old''). We randomly selected two males and two females from each age range, and for each individual in the faces database, used the two example photographs with neutral expressions, for a total of 24 unique images (2 genders $\times$ 3 age ranges $\times$ 2 individuals $\times$ 2 photographs of each).

The rationale for this choice of stimuli was that the above hierarchical organization might lead to a similarity structure close to ultrametric behavior. As shown in Figure~\ref{fig:14}, upper row, while this was not the case for analysis of the full dataset (\(I_{\text{umi}}<0\) the \textit{a priori} level), it was the case for the 8-stimulus subsets within each age bracket (\(I_{\text{umi}}>0\)).
Values of \(I_{\text{umi}}>0\) were also seen in data from some observers, for some subsets subdivided by gender (restricted to two age ranges, to equate the number of stimuli), as shown in Figure~\ref{fig:14}, lower row. 
Values of \(I_{\text{sym}}\) were again quite close to zero (usually \(>-0.1\)), indicating compatibility with a symmetric dis-similarity.
Values of \(I_{\text{addtree}}\) were similar to the \textit{a priori} value, but much larger than for the surrogates. 
As was the case for the texture experiment, these results were insensitive to the parameters \(a\) and \(h\) of the prior for the distribution of choice probabilities. 
Here, values of the beta-function parameter $a$ obtained by maximum likelihood ranged from approximately $0.1$ to $0.5$; results similar to those of Figure~\ref{fig:14} were obtained with setting \(a=0.3\) for all datasets. 
Also as was the case for the texture experiment, findings for \(I_{\text{sym}}\) and \(I_{\text{addtree}}\), were not substantially changed when \(h\) was determined by maximum likelihood -- even though the typical values of \(h\) were larger (median, \(6 \cdot10^{-2}\)), supporting the idea that some underlying choice probabilities were close to 0.5.

\subsubsection*{Brightness}

\begin{figure}
    \centering
    \includegraphics[width=\linewidth]{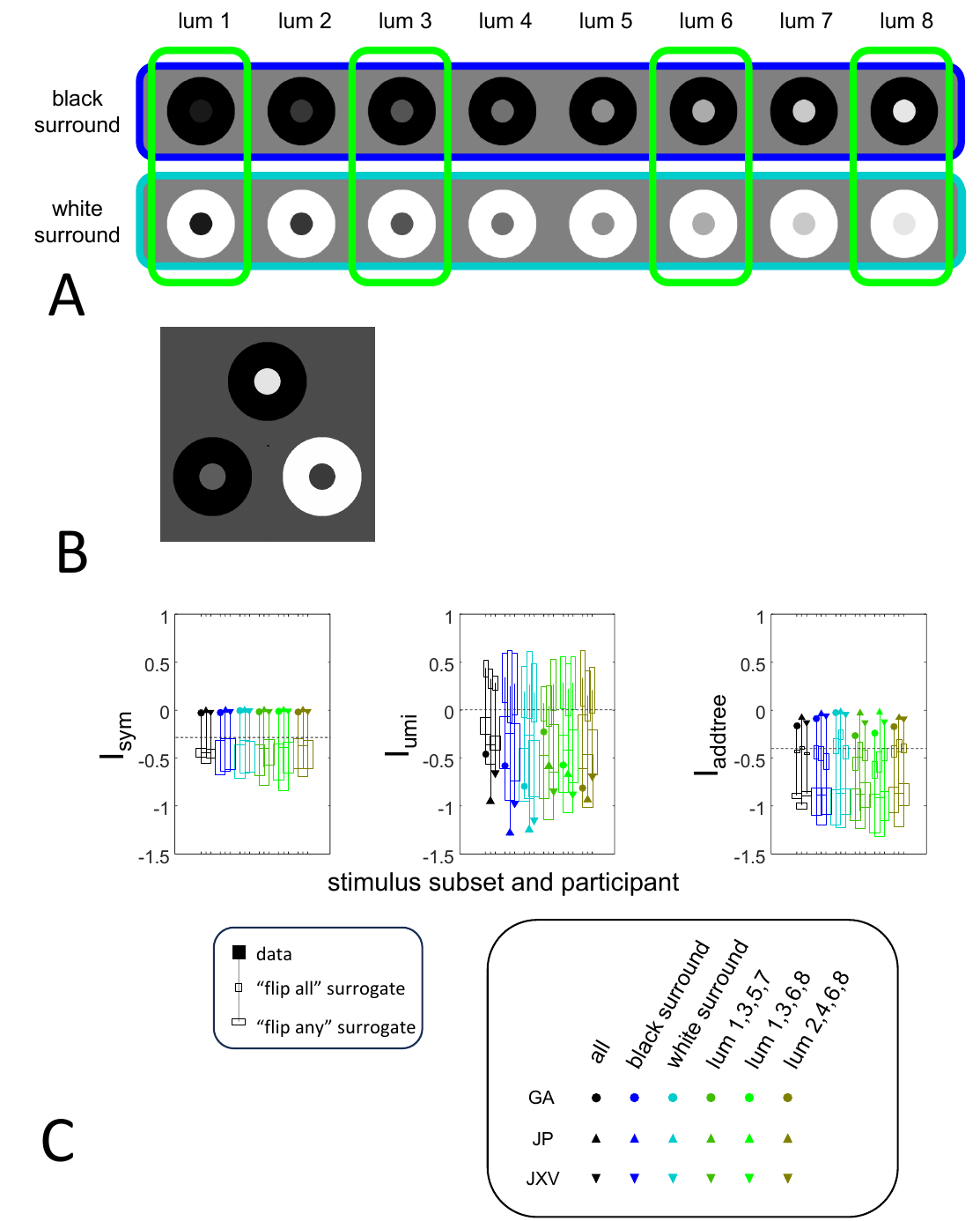}
    \caption{Panel A. Stimuli for the brightness experiment. Each stimulus had a disk-and-annulus configuration, in which the disk had one of 8 luminances (columns) and either a black (upper row) or a white (lower row) surround annulus. The colored lines encircle three of the stimulus subsets used in Panel C. 
    Panel B. A sample trial. 
    Panel C. Indices \(I_{\text{sym}}\), 
    \(I_{\text{umi}}\), and \(I_{\text{addtree}}\) for the brightness experiment calculated from all trials (black symbols), from trials with 8-element subsets with only one of the two kinds of surround (blue symbols), and from 8-element subsets with both surrounds (green symbols). 
    Other graphical conventions as in Figure~\ref{fig:13}.
    }
    \label{fig:15}
\end{figure}

The brightness experiment consisted of judgments of brightness dis-similarity for the set of disk-and-annulus stimuli as shown in Figure~\ref{fig:15}A.
This disk-and-annulus configuration has been extensively used to study the effect of the context surround on the appearance of the inner disk \cite{Gilchrist2006, Wallach1948}. 
A light surround is expected to make the inner disk appear darker, and conversely,
darker surround is expected to make the inner disk appear lighter. While it is generally assumed that this shift in appearance is along a one-dimensional brightness continuum, the evidence is ambiguous \cite{Murray2021}. 
For example, Madigan and Brainard \cite{Madigan2014} found that one dimension suffices to explain brightness similarity ratings, while Logvinenko \& Maloney \cite{Logvinenko2006} found that dis-similarity ratings under different illuminations required a 2-dimensional perceptual space.

This open question motivated the stimuli used in the present experiment: a gamut of 8 disk luminances, presented with either of 2 surround contexts (Figure~\ref{fig:15}A).
Participants judged the brightness of the inner disk for triplets constructed from all possible combinations of disk luminance and surround (Figure~\ref{fig:15}B). 
If brightness is one dimensional, then dis-similarity judgments for the full set of stimuli should be described by a one-dimensional model, which is a special case of an addtree model (Appendix~\ref{appendix2}). 
If, on the other hand, the surround produces differences in appearance that are not one dimensional, then the full set of judgments should be incompatible with an addtree distance. 
Under this hypothesis, restricting the judgments to stimuli with the same surround (the subsets encircled by the dark and light blue lines in Figure~\ref{fig:15}A) should recover a one-dimensional structure and compatibility with an addtree model, while restriction to a same-sized set but with two kinds of surrounds (green lines in Figure~\ref{fig:15}A) should preserve the incompatibility.

Figure~\ref{fig:15}C shows the results. For the full stimulus set (black symbols), \(I_{\text{addtree}}\) is close to zero (\(>-0.17\)), and substantially higher than the \textit{a priori} value, for all three participants. Even higher indices are found for the 8-element stimulus subsets of only black (\(I_{\text{addtree}}> 0.09\)) or only white 
(\(I_{\text{addtree}}> 0.05\) ) surrounds (blue symbols in Figure~\ref{fig:15}C). This is consistent with the notion that, when context (i.e., the surround) is held constant, dis-similarity judgments correspond to distances in a one-dimensional space.
However, when \(I_{\text{addtree}}\) was computed for 8-element subsets of the stimuli in which judgments were made across two surrounds (green symbols in Figure~\ref{fig:15}C), \(I_{\text{addtree}}\) was lower, and varied substantially across participants. Data from GA always yielded the lowest value (\(I_{\text{addtree}}\) was $-0.26$ to $-0.17$) and JP the highest value close to zero (\(I_{\text{addtree}}\) was $-0.08$ to $-0.03$).

These findings show that when the surround context is constant, judgments are compatible with an addtree model, but there is inter-observer variability when judgments are made across two surround contexts. The variability is not surprising, as previous research has shown that individual idiosyncrasies can play a substantial role when disk-in-context stimuli are used to study brightness or color \cite{Radonjic2015}. 
Our method seems to be capturing these inter-individual differences, but -- as we are focusing on a demonstration of the analysis methods -- we do not attempt to probe the basis for this difference here.

Similar to the texture and faces experiments, values of \(I_{\text{sym}}\) are all close to zero for the brightness dataset, indicating compatibility with symmetric dis-similarity judgments. 
Ultrametric indices \(I_{\text{umi}}\) are below the \textit{a priori} value for all cases, indicating incompatibility with an ultrametric model, as expected for a one-dimensional geometry.

Also as in the texture and faces experiments, results were robust to changes of analysis details. 
Values of the beta-function parameter $a$ obtained by maximum likelihood ranged from approximately $0.07$ to $0.22$; 
results similar to those of Figure~\ref{fig:15}C were obtained with setting \(a=0.1\) for all datasets. 
Findings for \(I_{\text{sym}}\) and \(I_{\text{addtree}}\) were not substantially changed when \(h\) was determined by maximum likelihood, yielding values of \(h\) with a median of \(5 \cdot10^{-2}\), comparable to the faces dataset.

\section{Discussion}

The main goal of this paper is to advance a strategy for connecting similarity judgments of a collection of stimuli to inferences about the structure of the domain from which the stimuli are drawn. 
The starting point is an experimental dataset in which the judgments are assumed to be independently drawn binary choices distributed according to the underlying choice probabilities. 
We assume that for each triad (a reference stimulus and two comparison stimuli), the comparison stimulus that is more often judged to be more similar to the reference is at a shorter distance from it, but we do not assume, or attempt to infer, a quantitative relationship between choice probabilities and the distances. 
This approach also takes into account the possibility that each triad may have its own ``pointwise'' transformation that links choice probability and distance. 
While we recognize that judgments may be uncertain, we refrain from postulating a noise model or a decision model -- or even that sensory or decision noise is uniform throughout the space.

Despite the relative paucity of assumptions, we show that it is possible to characterize dis-similarity judgments along three lines: compatibility with symmetry, compatibility with an ultrametric model, and compatibility with an additive tree model. 
These characteristics are functionally significant aspects of a domain’s organization. 
Symmetry (or its absence) has implications for the mechanism(s) by which comparisons are made \cite{Tversky1977, Tversky1982}. 
For symmetric similarity judgments, addtree models, but not ultrametric models, are consistent with the Tversky contrast model \cite{Sattath1977}.
More broadly, semantic domains are anticipated to be consistent with a hierarchical model of similarity judgments (ultrametric or addtree), while domains of features are not \cite{Kemp2008, Saxe2019}.
It is also worth noting that one-dimensional domains are a special case of the addtree model, so the present approach can address whether the apparent ``curvature'' in a one-dimensional perceptual space can be eliminated by alternative choices of the linkage between distance and decision -- a limitation of the analysis in \cite{Bujack2022}.
Furthermore, our method is sensitive enough to reveal inter-individual differences: for some participants data are compatible with the addtree model and for others, not (or less so) -- consistent with other studies of the influence of context \cite{Radonjic2015}, and an interesting area for further investigation.

\subsection*{Comparisons to other methods}

The present strategy, in which a main consideration is to keep assumptions about metrical 
distances to a minimum, is complementary to other ways of analyzing similarity judgments.
A classical and commonly-used approach, non-metric multidimensional scaling \cite{deLeeuw1982, Tsogo2000}, explicitly postulates that the original data (here, the choice probabilities) reflect a monotonic transformation of a metric distance. 
The distance is taken to be the Euclidean distance, but distances in a hyperbolic or spherical geometry can also be used. 
An important related approach for one-dimensional models is maximum likelihood difference scaling \cite{Knoblauch2008, Maloney2003}, which -- via a decision model -- takes into account the noisy nature of psychophysical judgments. 
This approach can also be extended to multidimensional models, but the need for a decision model remains \cite{Victor2017, Waraich2022}.

The spirit of our approach is similar in some ways to that of \cite{Tversky1986}, who used analyses of the statistics of nearest neighbors (``centrality'' and ``reciprocity'') to assess models of perceptual spaces. 
Their study showed that tree-like models are preferred to low-dimensional metrical models for ``cognitive'' categories (furniture, musical instruments), but not for low-level domains (lights, shapes, tones). 
As in the current approach their analyses depended only on rank orders of distances, rather than metrical relationships. 
However, the starting point for their analysis is a complete set of pairwise distances with a definite rank order. 
In contrast, the present approach begins with a set of triadic judgments and is explicit that these serve only as estimates of an underlying choice probability.

Our approach also is related to that of topological data analysis (TDA) via persistent homologies \cite{Dabaghian2012, Giusti2015, Singh2008, Zhou2018}. 
Like our approach, TDA avoids the need to postulate a specific relationship between dis-similarities and distances, as the Betti numbers are calculated from a sequence of graphs that depend only on the rank order of similarity judgments. 
However, construction of this sequence of graphs requires a globally uniform linkage between triadic judgments and relative distance, and also, that every pairwise distance is included in the measured triads. 
The characterizations yielded by TDA are also complementary: they focus on dimensionality and homology class, rather than the characterizations considered here.

Finally, we note another approach that directly seeks to identify features of ultrametric behavior in neural data. 
Treves \cite{Treves1997} developed an information-theoretic measure of dis-similarity of neural responses to faces. 
The strategy for seeking evidence of ultrametric behavior was to examine the ratio, within each triplet, of the middle distance to the largest. 
This ratio, which would be 1 for an ultrametric, was found in that study to be closer to 1 than for expected by chance. 
Nonparametric generalizations of this approach may permit a relaxation of the assumed linkage between extensions the information-theoretic measure and dis-similarity, and even an evaluation of addtree models -- but in contrast to our approach, it begins with a set of responses to each stimulus, rather than a sampling of triadic comparisons.

\subsection*{Experimental design}

The analysis of simulated and experimental datasets carry implications for experimental design.
From the point of view of demonstrating departure of the indices from their \textit{a priori} value, the simulated datasets show that as few as 2 observations per triad suffice under some circumstances, and may even be optimal (Figure~\ref{fig:9}). 
However, this optimal number of samples per trial is likely to be dependent on how severely the properties tested by each index are violated and the actual geometry of the underlying space: for subtle departures (e.g., Figure~\ref{fig:8}), a larger number of observations (8 or more) may be needed.
Note, however, that observations contribute to the indices only if the triads are in the same triplet or tent, a consideration that may have bearing on the way that triads are selected for presentation in an experiment.

The experimental datasets show that analyses of subsets of the full dataset can be revealing: in the texture dataset (Figure~\ref{fig:13}), the responses to the subset of stimuli contained in the ``vees'' are less compatible with an addtree distance than for subsets contained in the ``axes''; in the faces dataset (Figure~\ref{fig:14}), data from images restricted to one gender and including the three age ranges were less compatible with an ultrametric distance than data restricted to one age range but including both genders; for the brightness dataset (Figure~\ref{fig:15}), subsets that included both luminance contexts were less compatible with an addtree distance than subsets that included only one luminance context. These observations suggest that the current approach can be incorporated into an adaptive design strategy, in which calculation of indices within subsets can be used to select stimuli or triads for more intensive study. This is a potential area for future work.

\subsection*{Limitations and caveats}

Keeping assumptions to a minimum necessarily leads to certain limitations. The indices for 
symmetry and addtree structure reflect necessary, but not sufficient, conditions for global compatibility. 
Moreover, though the indices can be used to compare the behavior of the data with behavior of a model, they are not measures of goodness of fit: the indices merely measure to what extent the data act to concentrate the \textit{a priori} distribution of choice probabilities into the subset of choice probabilities that have a particular characteristic. 
Critically, the extent of concentration will depend on the typical coverage of each triad: a greater number of trials of each triad provides better estimates of the underlying choice probabilities, and thus, can move the indices further from their \textit{a priori} values. 
For this reason, the examples above (``axes'' vs. ``vees'' in Figure~\ref{fig:13}; subdivision by age vs. gender in Figure~\ref{fig:14}, same context vs. different contexts in Figure~\ref{fig:15}C) focus on comparisons of indices between datasets with a similar number of stimuli, and a similar coverage of each.

Because the approach relies on rank-order comparisons and does not attempt to estimate 
perceptual distances in a metrical sense, it cannot distinguish an addtree geometry from a Euclidean one unless the geometries lead to differences in the rank orders of distances. Thus, while the incompatibility of a circular stimulus space with an addtree geometry is readily detected by \(I_{\text{addtree}}\) (Figure~\ref{fig:7}, bottom row), the Euclidean distance in a two-ray stimulus space (Figure~\ref{fig:8}) is only distinguishable from an addtree geometry when the angle between the rays is acute.

A necessary component of putting this approach into practice is the need to infer underlying choice probabilities from a finite set of experimental observations, and this step -- even if principled -- entails caveats. Here, we use a Bayesian framework and a two-parameter family of priors: the beta-function prior of eq. \eqref{eq:3.4}, which suffices for \(I_{\text{sym}}\) and \(I_{\text{addtree}}\), and its generalization that includes point mass, eq. \eqref{eq:3.27}, which is needed for \(I_{\text{umi}}\). 
While this choice of prior has some theoretical justification, it is primarily based on heuristics and computational advantages. 
For the model scenarios considered here, simulations (Figure~\ref{fig:10}) show that within this family (which includes concave, convex, and flat distributions) the influence of the prior is mild, especially when there are several observations per triad. 
The need to consider priors with a point mass at \(\frac{1}{2}\) for \(I_{\text{umi}}\) arises because a strictly ultrametric geometry requires that some choice probabilities are exactly \(\frac{1}{2}\); had we only considered the beta-function prior, this has a posterior probability of zero, for any finite dataset. 
So, in order to formalize a useful sense in which the data have an ultrametric character, we ask how rapidly this probability moves away from zero as we allow the prior to include a fraction \(h\) of choice probabilities that are exactly \(\frac{1}{2}\).
As we show in the simulations (Figure~\ref{fig:6}) and also in the analysis of the faces dataset (Figure~\ref{fig:14}), this is an effective strategy -- but there may be other approaches more suitable in other specific circumstances.

The analysis of the synthetic ultrametric dataset $Tree- UM$ reveals one situation in which the choice of the prior is important. In this dataset, the underlying choice probabilities have a substantial point mass at \(\frac{1}{2}\). 
Excluding this possibility from the prior (Figure~\ref{fig:7}, top row) fails to reveal
compatibility with addtree structure, while including it (Figure~\ref{fig:11}C, top row) demonstrates compatibility. 
With this in mind, it may be prudent to analyze experimental datasets both with assuming the beta-function prior (\(h=0\)) and allowing \(h\) to be set by maximum likelihood. 
In the experimental datasets presented here, this did not result in a substantial change.

Our approach also provides two strategies to help ensure that the indices are dominated by the data, rather than by the prior. First, each index has a definite \textit{a priori} value, independent of the prior for the choice probabilities, provided only that it is symmetrical: \(\log (3/ 4)\) (eq. \eqref{eq:3.25}) for \(I_{\text{sym}}\), 0 (eq. \eqref{eq:3.46}) for \(I_{\text{umi}}\), and \(\log (2/ 3)\) (eq.\eqref{eq:3.58}) for \(I_{\text{addtree}}\).
Second, the two kinds of surrogates (``flip any'' and ``flip all'') have the same \textit{a priori} probability as the data, though they lack the inter-relationships among choice probabilities within triplets or tents. 
Computed values of the indices that do not deviate substantially from the \textit{a priori} values, or do not deviate substantially from indices computed from the surrogates, should be viewed with caution, as they suggest that the indices are dominated by the prior, rather than by the data. The ``flip all'' surrogate provides a stronger test, since it retains not only the \textit{a priori} choice probabilities, but also within-triplet pairwise correlations.

\subsection*{Variations, extensions and open issues}

There are a number of variations and alternative analytic strategies that might be used to modify and enhance the present framework, as well as some open issues that would enable it to be further extended.

One such variation relates to how the locally-computed likelihood ratios are combined to form an overall characterization of the dataset. Here, we elected to simply average their logs, i.e., to compute an index that can be interpreted as an average measure of local compatibility. Instead, one could look at other aspects of the distribution of these ratios. For example, one could also transform the likelihood ratio computed for each triplet or triad into a Bayes factor, via \(BF = \frac{LR}{1-LR}\) 
(see comment following eq. \eqref{eq:3.23}), and then combine the Bayes factors -- yielding a Bayes factor that compares the hypotheses that all triplets are compatible, vs. no triplets are compatible.

One might also consider circumventing the difficulties related to underlying choice probabilities that are exactly \(\frac{1}{2}\) by allowing subjects to respond that the two comparison stimuli are equally dis-similar. 
At first glance, this sounds promising, but the problem is that, by allowing subjects to be explicitly indecisive, it (i) reduces the extent to which their responses will reveal subtle but systematic differences in perceived dis-similarity, and (ii) introduces a major confound, namely, that subjects may differ substantially in their criterion for using this alternative. 
The situation worsens if subjects change their criterion during the experiment, which introduces yet another uncontrolled confound. 
So although some psychophysicists have studied the usefulness of allowing an ``uncertain'' response (e.g. \cite{Watson1973}), 
it is not recommended (e.g. \cite{Kingdom2016, Wickens2001}),
and consequently, it is rarely used in psychophysical studies. 
Accordingly, our approach and our sample datasets all used forced-choice paradigms without the possibility of an ``uncertain'' response.

There are also alternative strategies for surrogate generation. The ``flip any'' and ``flip all'' surrogates used here rely on replacing the observed responses for a triad by their complement. Alternatively, one could replace the data from each triad with the data from a randomly-selected triad. 
As is the case for the ``flip any'' and ``flip all'' surrogates, this permuted surrogate would have the same \textit{a priori} probability as the data. 
Correlations within a triad be destroyed, and, in addition, judgments that were more certain (i.e,. observed choice probabilities near 0 or 1) would be randomly replaced by judgments that were less certain. 
Thus, a permutation test would sample the independent prior more evenly than the ``flip'' surrogates, but it would not determine whether the values of the indices could be due to co-occurrence of easy vs. hard judgments within triplets.

While the present approach aims at determining compatibility with an ultrametric or addtree distance, it stops short of attempting to determine the specific ultrametric or addtree structure. 
Existing methods for taking this step require a complete set of dis-similarity measures \cite{Abraham2015, Barthelemy1991, Hornik2005, Sattath1977}, along with the assumption that the transformation from triadic choice probabilities into dis-similarities is uniform across the space. 
Choice probabilities provide constraints even without this assumption -- for example certain relationships among the triadic judgments involving four points are sufficient for a pointwise addtree model -- but it is unclear whether these constraints are sufficient for a global model, or how such a global model can be determined.

The central strategy of the present framework is to translate the consequences of metrical constraints on distances and their sums into constraints on rank orders of dis-similarities. 
Here, we have chosen to implement this strategy in its simplest form. In considering compatibility with a symmetric distance, we only considered triadic comparisons within a triplet -- even though (see eq. \eqref{eq:2.6}) there are further constraints on triadic comparisons within larger sets of triads. 
On larger stimulus sets, a complete analysis would also need to take into account the constraints of transitivity (eq. \eqref{eq:3.53}). 
If these constraints are taken into account, the theoretical benefit that each of the local contributions to \(I_{\text{sym}}\) is determined from a non-overlapping set of triads (eq. \eqref{eq:3.24} would be lost. 
Relatedly, an alternative strategy for assessing addtree structure is to consider all 12 of the triadic comparisons among four stimuli together, rather than just the six-triad subsets that constitute a tent. 
It is unclear whether the additional complexity entailed by any of these considerations would translate into practical benefit.

We speculate that the overall framework of relating graphical distance models to rank-order comparisons can be extended to more complex graphs. Specifically, the observation that a metric that obeys the four-point condition can always be realized by the path metric on a weighted acyclic graph \cite{Buneman1974} suggests the possibility of a succession of further characterizations. 
By definition, acyclic graphs have no three-cycles. 
An isolated 3-cycle with nodes \(a_1\), \(a_2\), and \(a_3\) can always be removed by adding a node \(c\) , with distances 
\(d(c,a_1) = (d(a_1,a_2) + d (a_1,a_3) - d(a_2, a_3))/2\), etc.; 
this quantity is guaranteed to be non-negative via the triangle inequality, and \(d(c, a_1) + d(c,a_2) = d(a_1, a_2)\).
Thus, ruling out compatibility with an addtree model via Proposition~\ref{pr:3} also implies that the dis-similarity structure is incompatible with distances on a weighted acyclic graph, or on a weighted graph with only isolated three-cycles. 
Consequently, a graph with two non-disjoint three-cycles or a four-cycle, is required. Perhaps more elaborate conditions analogous to those of Proposition~\ref{pr:3} could then rule out compatibility with distances on graphs with some greater level of cycle structure or connectivity.

In this regard, it is interesting to note that the ultrametric condition and the four-point condition have a similar structure: both state that among three numbers (three single distances for the ultrametric, three pairwise sums for the four-point condition), the largest two must be identical. This intriguing similarity raises the further possibility of a sequence of analogous conditions, each specifying a progressively less-restrictive aspect of a set of dis-similarity judgments -- such as compatibility with planar graphs (and, more generally, dimensionality), or other topological characterizations, such as statements about Betti numbers.

\appendix
\titleformat{\section}{\normalfont\large\bfseries}{Appendix \thesection: }{0em}{}

\section{Metric Spaces}
\label{appendix1}

This appendix demonstrates that pointwise or setwise compatibility of a set of choice probabilities with a symmetric dis-similarity
\(D(x,y)\) implies pointwise or setwise compatibility with a metric-space structure.

A metric space is defined to be a set of points \(\{x, y, z, \dots \}\), along with a \textit{metric} \(d(x,y)\) that associates each pair of points with a real number, and that satisfies three properties:

\begin{equation}\label{eq:4.1}
d(x,y) \geq 0 \quad \text{and} \quad d(x,y)=0 \iff x=y,
\end{equation}

\begin{equation}\label{eq:4.2}
\text{symmetry}: d(x,y) = d(y,x),
\end{equation}
and
\begin{equation}\label{eq:4.3}
\text{the triangle inequality}: d(x,y) \leq d(x, z) + d(y,z).
\end{equation}

As we have generically assumed (in eq. \eqref{eq:1.4}) that dis-similarity \(D(x, y)\)  satisfies \eqref{eq:4.1} and we now add that it is symmetric (eq. \eqref{eq:4.2}), we need to show that we can replace \(D(x, y)\) -- which need not satisfy the triangle inequality -- by a metric \(d(x, y)\) which both satisfies the triangle inequality and also accounts for the choice probabilities via

\begin{equation}\label{eq:4.4} 
    R(r;x, y) > \frac{1}{2} \iff d(r, x) < d(r, y).
\end{equation}
It suffices to find a monotonic transformation of the dis-similarity
\begin{equation}\label{eq:4.5}
    d(x, y) = G(D(x, y)).
\end{equation}
for which \(d(x, y)\) satisfies the triangle inequality. A suitable transformation for this purpose is 
\begin{equation}\label{eq:4.6}
    G(u)  = 
     \left\{\begin{aligned}0, \quad u=0 \\
     1 + \frac{u}{1+u}, u> 0
     \end{aligned}
     \right.
\end{equation}
For distinct elements $a$ and $b$, $D(a, b) > 0$ and consequently $d(a, b) = G(D(a, b))$ is between 1 and 2. So the left-hand side of \eqref{eq:4.3} is always $<2$ and each term on the right-hand side is $>1$, implying that the triangle inequality holds. 

\section{Addtree Models and the Four-Point Condition}
\label{appendix2}

This Appendix demonstrates some well-known and basic \cite{Buneman1974, Dobson1974, Sattath1977} relationships between the four-point condition, the triangle inequality, and the ultrametric inequality.

To show that the four-point condition \eqref{eq:2.13}
implies the triangle inequality, we set \(w=x\). Then \eqref{eq:2.13} becomes 
\begin{equation}\label{eq:5.1}
    \text{None of the three quantities} \quad \left\{
    \begin{aligned}
           &d(u, v) \\
        d(u, x) &+ d(v, x) \\
        d(u, x) &+ d(v, x)
    \end{aligned}
    \right\} \quad \text{is strictly greater than the other two.}
\end{equation}

That is, $d(u, v) \leq d(u, x) + d(v, x)$, which is the triangle inequality. \\
To show that the ultrametric inequality \eqref{eq:2.8} implies the four-point condition, we first relabel the points \(u, v, w,\) and $x$ so that $d(u, v)$ is the smallest distance, and 
\begin{equation}\label{eq:5.2}
    d(u, v) \leq d(u, w) \leq d(u, x).
\end{equation}

Consider the triangle with vertices $u, v,$ and $w$. Since $d(u, v) \leq d(u, w)$, the ultrametric inequality, which says that the two longest sides must have the same length, requires that the length of the third side, $d(v, w)$, is equal to $d(u, w)$:
\begin{equation}\label{eq:5.3}
    d(u, v) \leq d(u, w) \implies d(v, w) = d(u, w). 
\end{equation}

Similarly, applied to the triangle with vertices $u, v,$ and $x$, the ultrametric inequality yields 
\begin{equation}\label{eq:5.4}
    d(u, v) \leq d(u, x) \implies d(v, x) = d(u, x).
\end{equation}

Combining these two yields
\begin{equation}\label{eq:5.5}
    d(u, w) + d(v, x) = d(v, w) + d(u, x). 
\end{equation}
Applied to the triangle with vertices $u, w,$ and $x,$ the ultrametric inequality yields, 
\begin{equation}\label{eq:5.6}
    d(u, w) \leq d(u, x) \implies d(w, x) = d(u, x).
\end{equation}
Combining this with the assumption that $d(u, v)$ is the shortest distance, \eqref{eq:5.2} yields 
\begin{equation}\label{eq:5.7}
    d(u, v) + d(w, x) = d(u, v) + d(u, x) \leq d(v, w) + d(u, x). 
\end{equation}
Together, \eqref{eq:5.5} and \eqref{eq:5.7} constitute the four-point condition. \par
\textbf{Examples:} \par
For four distinct points on a line and distances given by the ordinary Euclidean distance $d(y,z) = \vert z-y \vert$, the addtree conditions hold. Taking $u<v<w<x$,
\begin{equation}\label{eq:5.8}
    d(u, w) + d(v, x) = (w-u) + (x-v) = (w-v) + (x-u) = d(u, x) + d(v, w),
\end{equation}
while
\begin{equation}\label{eq:5.9}
    \begin{aligned}
            d(u, v)+ d(w, x) = (v-u) + (x-w) = (v+x) - (u+w) < \\(w+x) - (u+v) = (x-u) + (w-v) = d(u, x) + d(v, w).
    \end{aligned}
\end{equation} \par
However, the ultrametric inequality does not in general hold, since \(d(u, v) < d(u, w) < d(u, x)\).\par
For the standard distance between four distinct points in general position in a plane, the four-point condition does not hold, since the three pairwise sums are typically distinct. 

\section{Setwise Sufficiency for the Four-Point Condition}\label{appendix3}

Here, we prove a partial converse to Proposition~\ref{pr:3}: that falsification of the conjunction \eqref{eq:2.16}
suffices to ensure that the dis-similarities are setwise-compatible with an addtree model. More precisely:

\begin{proposition}[sufficient ordinal conditions for addtree]\label{pr:4}
    If, for each relabeling of a set of four points \(\{z, a, b, c\}\), \eqref{eq:2.16} does not hold, and all of the dis-similarities are unequal, then there is a strictly monotonic-increasing transformation $d(x, y) = F(D(x, y))$ of the dis-similarities into distances, for which the four-point condition \eqref{eq:2.13} holds. 
\end{proposition}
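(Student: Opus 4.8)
The plan is to recast the four-point condition \eqref{eq:2.13} in terms of the three perfect matchings of the four points $\{z,a,b,c\}$ and then to build $F$ so that the transformed distances realize a four-leaf tree. Writing the matchings as $\{z,a\}\,|\,\{b,c\}$, $\{z,b\}\,|\,\{a,c\}$, and $\{z,c\}\,|\,\{a,b\}$, each contributes exactly one of the three pairwise sums in \eqref{eq:2.13}, and the six dis-similarities are partitioned into these three disjoint pairs. The defining feature of a metric realized by a four-leaf tree is that the matching corresponding to the internal split has the smallest sum, while the other two (``cross'') matchings have equal sums, the excess being twice the internal edge length. Thus it suffices to (i) select one matching $\tau$ to play the role of the split, (ii) produce a strictly increasing $F$ that makes the two non-$\tau$ matching-sums exactly equal while keeping the $\tau$-sum no larger, and (iii) confirm that $F$ respects the rank order of the six (distinct) dis-similarities. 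A $d=F(D)$ meeting (i)--(iii) gives setwise compatibility with an addtree distance, and \emph{a fortiori} pointwise compatibility.

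The engine of step (ii) is an elementary fact about equalizing two sums by an order-preserving reparametrization. Given the four dis-similarities lying outside a chosen $\tau$, the two remaining matchings pair them in a fixed way; order these four values as $w<x<y<z'$. A strictly increasing $F$ can force the two pair-sums to coincide if and only if the forced pairing is $\{w,z'\}$ against $\{x,y\}$ (smallest-with-largest versus the two middle values), since in each of the other two pairings one pair dominates the other term-by-term and no monotone $F$ can overcome this. When the pairing is of the admissible $\{w,z'\}$-versus-$\{x,y\}$ type, one only needs the increment of $F$ above $y$ to match the increment below $x$, which is always achievable and leaves ample freedom to keep the $\tau$-sum smallest and to honor the remaining order relations. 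I would make this explicit with a piecewise-linear $F$ in the spirit of the bounded transformation \eqref{eq:4.6}, assigning target values to the six dis-similarities in increasing order and solving the single linear equation that enforces the tie.

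The heart of the argument, and the step I expect to be the main obstacle, is showing that the hypothesis (that \eqref{eq:2.16} fails for every relabeling of $\{z,a,b,c\}$) is precisely what guarantees the existence of a workable $\tau$. I would use Proposition~\ref{pr:3} in contrapositive form: its proof shows that when \eqref{eq:2.16} holds for some labeling, the larger in-matching distance exceeds the two distances adjacent to it in the tripod and the other in-matching distance exceeds the two adjacent base distances, so that matching-sum is strictly dominant (the chain leading to \eqref{eq:2.18} and \eqref{eq:2.19}); that matching is then pinned at the top and its two companions cannot both be raised to tie it, obstructing realizability. Conversely, I would argue that if no labeling triggers \eqref{eq:2.16}, then for the smallest-sum matching $\tau$ the induced pairing of the four remaining distances is forced to be of the admissible max-with-min type, so $\tau$ works. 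The delicate part is the bookkeeping: I must check all assignments of the extreme dis-similarities to matchings and verify that the negation of \eqref{eq:2.16} across the finitely many relabelings rules out exactly the domination patterns that block equalization, while simultaneously confirming that the equalized cross-sums can be held at or above the $\tau$-sum. Because ``all dis-similarities unequal'' makes the six values distinct, every comparison is strict and the case analysis is finite; the remaining risk is merely checking that a single global $F$ can satisfy all constraints at once, which the degrees-of-freedom count above strongly suggests it can.
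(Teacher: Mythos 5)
Your reduction of the four-point condition to the three perfect matchings of $\{z,a,b,c\}$, and your criterion for when a strictly increasing $F$ can equate two matching-sums (possible precisely when the largest and the smallest of the four values lie in the \emph{same} matching, since otherwise one pair dominates the other term-by-term under any monotone map), are both correct and do capture the real mechanism; the paper's proof uses the same mechanism implicitly. The gap is exactly in the step you identify as the heart of the argument: your claim that the smallest-sum matching always serves as the held-out $\tau$ is false. Take $D(z,c)=10$, $D(z,a)=9.9$, $D(b,c)=9.8$, $D(z,b)=9.7$, $D(c,a)=9.6$, $D(a,b)=0.1$. Because $D(z,c)$ is the unique global maximum, every instance of \eqref{eq:2.16} whose distinguished pair is not $\{z,c\}$ versus $\{a,b\}$ fails automatically (it would require some other dis-similarity to exceed $D(z,c)$), and the four instances built on that pair fail because $D(a,b)=0.1$ is smaller than the relevant cross dis-similarities; so the hypothesis of Proposition~\ref{pr:4} holds and all six values are distinct. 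The pairsums are $10.1$, $19.7$, and $19.3$, so your recipe selects $\tau=\{zc,ab\}$ and asks you to equalize the matchings carrying $\{9.9,9.8\}$ and $\{9.7,9.6\}$ --- which, by your own domination criterion, no strictly increasing $F$ can do. The choice that works here holds out the \emph{middle}-sum matching $\{zb,ca\}$ and equalizes the smallest-sum matching with the largest (e.g., $F(10)=19.6$ and $F$ the identity elsewhere gives $19.6+0.1=9.9+9.8\geq 9.7+9.6$).

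The correct organizing principle, which is what the paper's proof uses, is not the ordering of the three sums but the location of the globally largest dis-similarity: the matching containing it is always taken as one of the two to be equalized, because raising the top value (or the values above a threshold) is the one move that never threatens monotonicity. The case split is then on whether that matching's pairsum is already the largest: if not, bump only the maximum dis-similarity (your Case-A analogue, which handles my example); if so, the hypothesis that \eqref{eq:2.16} fails for every relabeling is used to show that one of the other two matchings has \emph{both} of its dis-similarities exceeding $D(a,b)$, so a single threshold bump raises that pairsum by $2k$ while raising the top pairsum by only $k$, allowing a tie. That second step is precisely the finite case analysis (the paper's Cases $\mathrm{B}_1$--$\mathrm{B}_3$) that you defer as ``bookkeeping''; it, together with the correct choice of which matching to hold out, is where the hypothesis actually enters, so as written the proposal does not yet constitute a proof.
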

\begin{proof}
    We choose $F$ to be of the form 
    \begin{equation}\label{eq:6.1}
        F(D) = \left\{ 
        \begin{aligned}
            D + k, D \geq D_0 \\
            D, D < D_0
        \end{aligned}
        .\right.
    \end{equation}
    so the demonstration rests on finding an appropriate choice of $D_0$ and $k$. We note that we can ignore whether the quantity that results from the monotonic transformation satisfies the triangle inequality, since (as is well-known) Appendix~\ref{appendix2} shows that the four-point condition implies the triangle inequality. \par
    We refer to the three quantities in the four-point condition \eqref{eq:2.13} and the corresponding sums of pairs of dis-similarities as ``pairsums.'' For convenience, we reproduce \eqref{eq:2.16} below, replacing the non strict inequalities by strict ones, since the dis-similarities are assumed to be unequal:
    \begin{equation}\label{eq:c_2}       
    	 \begin{aligned}        
    	 	D(z, c) \geq D(z, b) \quad\text{and}\quad D(z,c) \geq D(z, a) \quad\text{and}\\        
    	 	D(a, b) \geq D(c, a) \quad\text{and}\quad D(a, b) \geq D(b, c).        
    	 \end{aligned}    
     \end{equation}

WLOG, relabel the points so that $D(z, c)$ is the greatest dis-similarity. \par

\textbf{\textit{Case $\mathbf{A.}$}} $D(z, c) + D(a, b)$ is not the largest pairsum. Choose $D_0 = D(z, c)$ in \eqref{eq:6.1}. Then $d(z, c) + d(a, b) = F(D(z, c)) + F(D(a, b)) = D(z, c) + D(a, b) + k$, but the other pairsums are unchanged by $F$ since their components are all strictly less than $D_0$. Setting $k$ equal to the difference between $D(z, c)+D(a, b)$ and the next-largest pairsum yields a transformation to distances that satisfies the four-point condition. \par
\textbf{\textit{Case $\mathbf{B.}$}} $D(z,c)$ is the largest dis-similarity and $D(z, c) + D(a, b)$ is the largest pairsum. Take $D_0 > D(a, b)$ but smaller than the next-largest dis-similarity. Then again \par
\(d(z,c) + d(a,b) = F(D(z,c)) + F(D(a,b)) = D(z,c) + D(a,b) +k\). We show that applying $F$ increases at least one of the other two pairsums by $2k$. Then, choose $k$ to be the difference between $D(z,c) + D(a,b)$ and the next-largest pairsum. Since $F$ cannot increase a pairsum by more than $2k$, applying $F$ results in equating the two largest pairsums. \par
To show that applying $F$ increases at least one of the other two pairsums by at least $2k$, we need to show that for at least one of the other two pairsums, both terms are greater than $D(a, b)$. Since $D(z, c)$ is the largest dis-similarity, the second-largest dis-similarity cannot be $D(a, b)$, since then \eqref{eq:2.16} would hold. So the second-largest dis-similarity must be between a point in $\{z, c\}$ and a point in $\{a, b\}$. We then relabel the points (switching $a$ and $b$ if necessary, and switching $z$ and $c$ if necessary) so that the second-largest dis-similarity is $D(z, a)$. In this relabeling, $D(a, b)$ cannot be the third-largest dis-similarity, since then \eqref{eq:c_2} would hold. Thus, $D(z,c)$ is the largest dis-similarity, $D(z,a)$ is the second-largest, and there are three possibilities for the third-largest dis-similarity: $D(z, b), D(a, c)$, and $D(b, c)$ (Figure~\ref{fig:16}).\par
\begin{figure}
    \centering
    \includegraphics[width=0.75\linewidth]{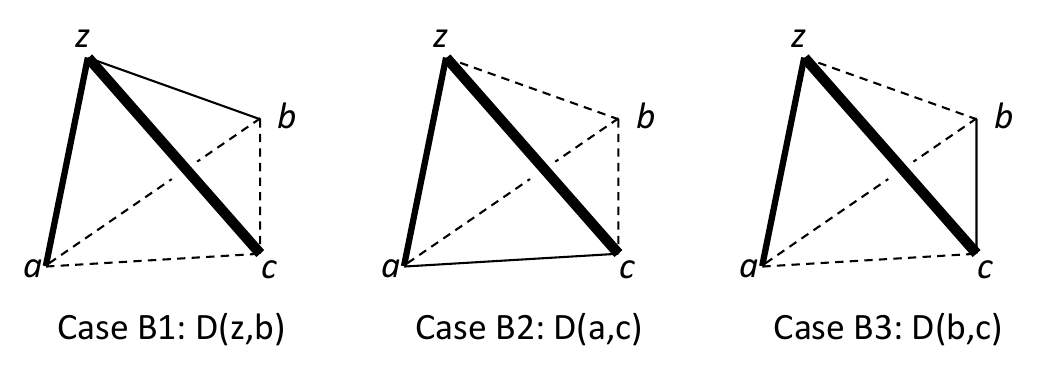}
    \caption{Three rank-orderings of the dis-similarities among four points consistent with falsification of the conjunction \eqref{eq:2.16}  used in the proof of Proposition~\ref{pr:4} (Appendix~\ref{appendix3}). In all cases, $D(z, c)$ is largest (heavy line) and $D(z, a)$ is second-largest (intermediate line). The third-largest dis-similarity can be either $D(z, b)$, $D(a, c)$, or $D(b, c) $ (thin solid lines). See Appendix~\ref{appendix3} for details.}
    \label{fig:16}
\end{figure}
\textbf{\textit{Case $\mathbf{B_1}$.}} The third-largest dis-similarity is $D(z, b)$: Either $D(a, c) > D(a, b)$ or $D(b, c) > D(a, b)$ since otherwise \eqref{eq:c_2} would hold. Since $D(z, b)$ and $D(z, a)$ are both greater than $D(a, b)$, both terms of at least one of \(D(z, b) + D(a, c)\) or \(D(z, a) + D(b, c)\) are greater than $D(a, b).$\par
\textbf{\textit{Case $\mathbf{B_2}$}}. The third-largest dis-similarity is $D(a, c)$: $D(a, b)$ cannot be larger than both $D(z, b)$ and $D(b, c)$, because otherwise, a conjunction like \eqref{eq:c_2} would hold for the tent $\{b;z,c,a\}$, as $D(b, a)= D(a, b)$ would be the largest of the tripod at $b$, and $D(z, c)$ would be the largest of the base. So at least one of $D(z, b) > D(a, b)$ or $D(b, c) > D(a, b)$. Thus, both terms of either pairsum $D(z, b) + D(a, c)$, or $D(z, a) + D(b, c)$ are greater than $D(a, b).$\par
\textbf{\textit{Case $\mathbf{B_3}$.}} The third-largest dis-similarity is $D(b, c)$: Both terms of $D(z, a) + D(b, c)$ are larger than $D(a, b)$.
\end{proof}

\renewcommand{\thefigure}{S\arabic{figure}}
\renewcommand{\theHfigure}{S\arabic{figure}}
\setcounter{figure}{0}

\begin{supplement}
\stitle{Supplementary figures}
\sdescription{Supplement to main text Figures~\ref{fig:5} - \ref{fig:11}}

\begin{figure}
    \centering
    \includegraphics[width=\linewidth]{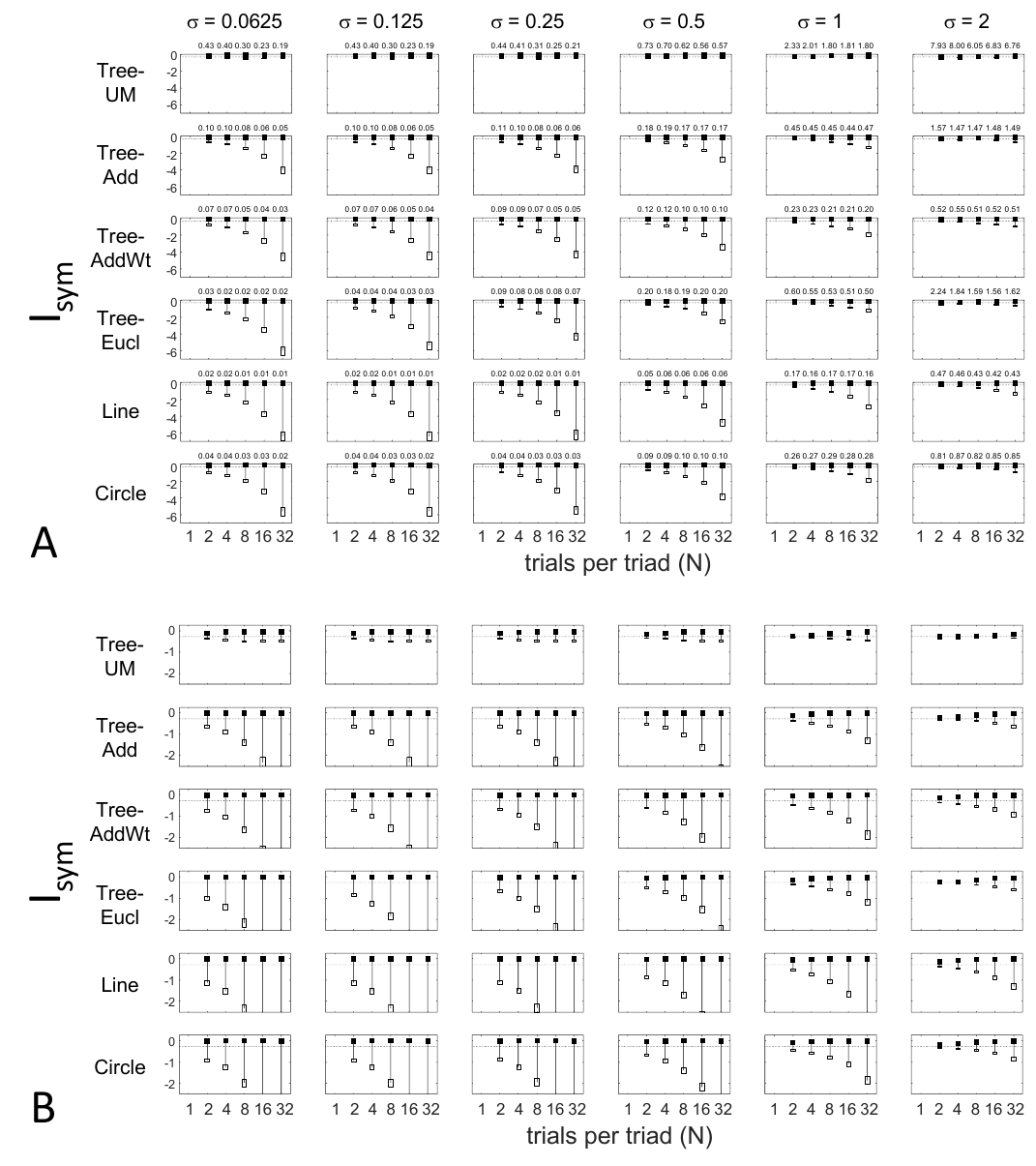}
    \caption{(supplement to main text Figure~\ref{fig:5}).
    Panel A. Behavior of $I_{\text{sym}}$ for the 15-point configurations, a range of trials per triad ($N$, abscissa), and a range of decision rules ($\sigma$, standard deviation of added noise). 
    Solid symbols indicate values of $I_{\text{sym}}$ computed from simulated decisions. 
    The dashed horizontal line indicates the \textit{a priori} value for $I_{\text{sym}}$. 
    Hollow boxes indicate values of $I_{\text{sym}}$ computed from ``flip any'' surrogates, showing the $\pm1$ standard deviation range. 
    The thin vertical lines are to aid visualization, and do not represent ranges. The beta-function prior \eqref{eq:3.4} is used, 
    with the value of the parameter $a$ (shown in fine print over each data point) determined by maximum likelihood. 
    Panel B expands the vertical scale.}
    \label{fig:s1}
\end{figure}

\begin{figure}
    \centering
    \includegraphics[width=\linewidth]{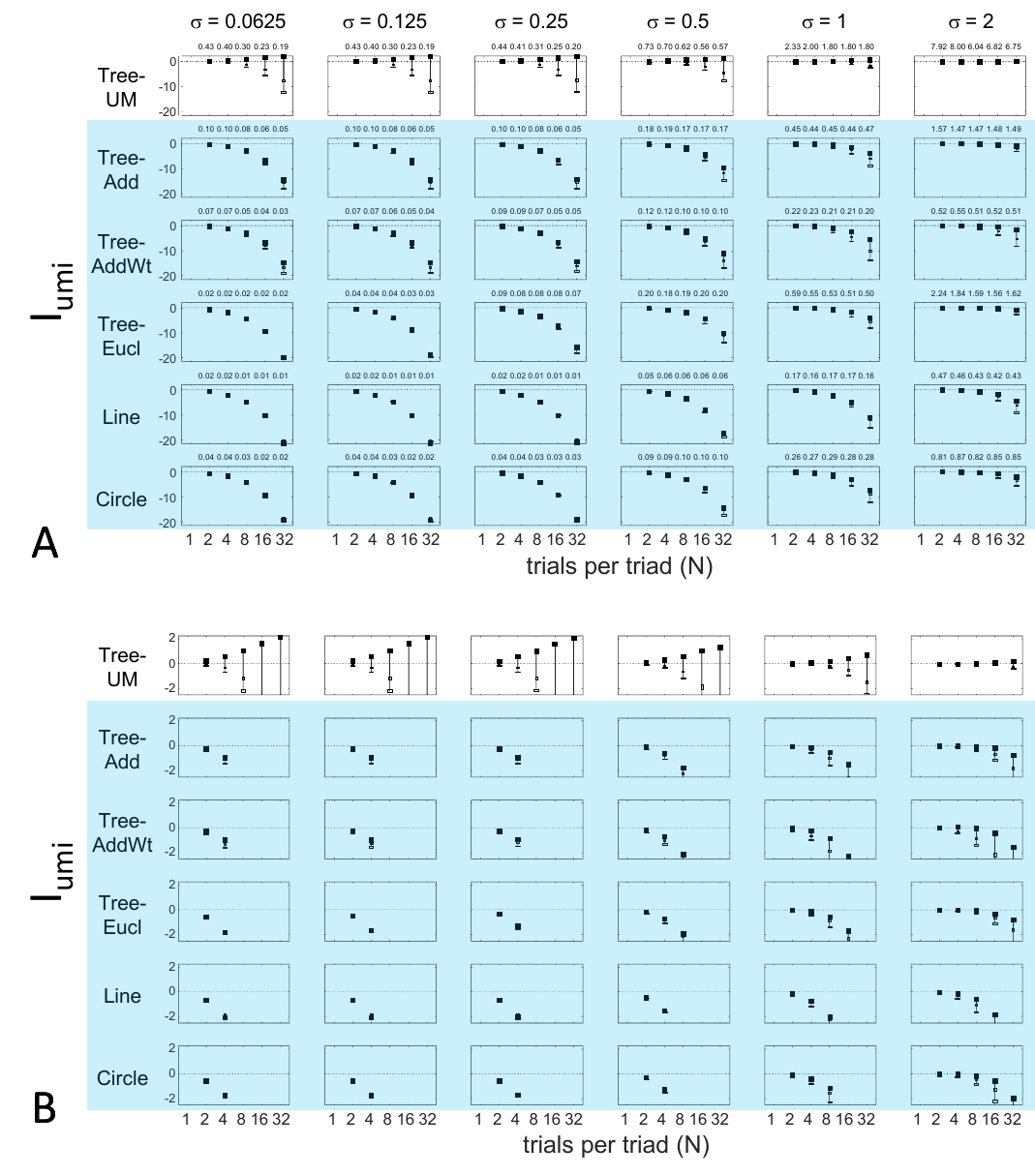}
    \caption{(supplement to main text Figure~\ref{fig:6}).
    Panel A. Behavior of $I_{\text{umi}}$ for the 15-point configurations, a range of trials per triad ($N$, abscissa), and a range of decision rules ($\sigma$, standard deviation of added noise). 
    Solid symbols indicate values of $I_{\text{umi}}$ computed from simulated decisions. 
    The dashed horizontal line indicates the \textit{a priori} value for $I_{\text{umi}}$. 
    Hollow boxes indicate values of $I_{\text{umi}}$ computed from surrogates, showing the $\pm1$ standard deviation range:
    wider boxes for ``flip any'' surrogates, narrower boxes for ``flip all'' surrogates. 
    The thin vertical lines are to aid visualization, and do not represent ranges. 
    The modified prior \eqref{eq:3.27} is used, with the value of the parameter $a$ (shown in fine print over each data point) determined by maximum likelihood and $h = 0.001$. 
    Panel B expands the vertical scale. Blue overlay indicates the simulated datasets that are incompatible with the ultrametric property.
    }
    \label{fig:s2}
\end{figure}

\begin{figure}
    \centering
    \includegraphics[width=\linewidth]{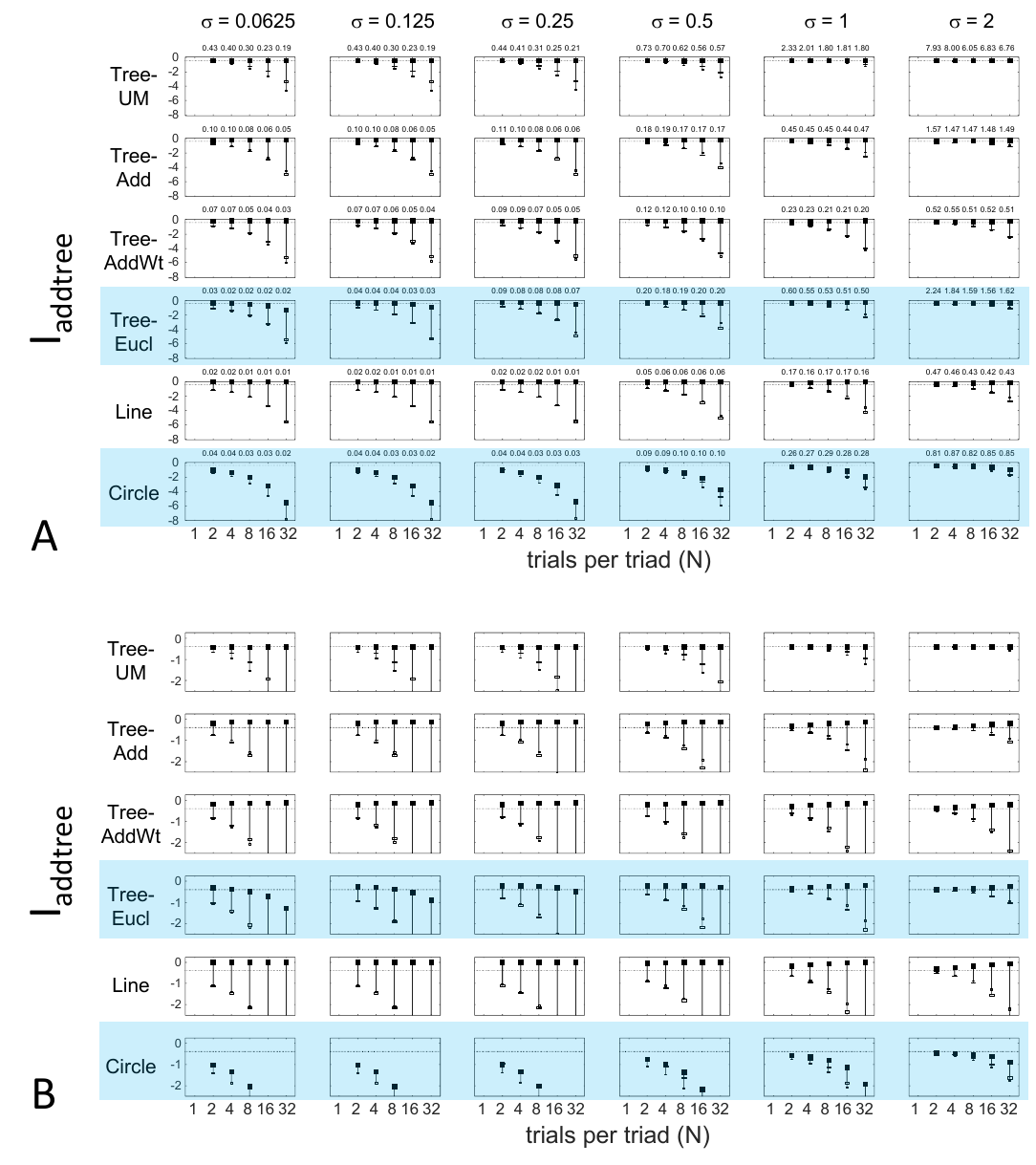}
    \caption{(supplement to main text Figure~\ref{fig:7}).
    Panel A. Behavior of $I_{\text{addtree}}$ for the 15-point configurations, a range of trials per triad ($N$, abscissa), and a range of decision rules ($\sigma$, standard deviation of added noise). 
    Solid symbols indicate values of $I_{\text{addtree}}$ computed from simulated decisions. 
    The dashed horizontal line indicates the \textit{a priori} value for $I_{\text{addtree}}$. 
    Hollow boxes indicate values of $I_{\text{addtree}}$ computed from surrogates, showing the $\pm1$ standard deviation range:
    wider boxes for ``flip any'' surrogates, narrower boxes for ``flip all'' surrogates. 
    The thin vertical lines are to aid visualization, and do not represent ranges.
    The beta-function prior \eqref{eq:3.4} is used, 
    with the value of the parameter $a$ (shown in fine print over each data point) determined by maximum likelihood. 
    Panel B expands the vertical scale. Blue overlay indicates the simulated datasets that are incompatible with the addtree property.
    }
    \label{fig:s3}
\end{figure}

\begin{figure}
    \centering
    \includegraphics[width=\linewidth]{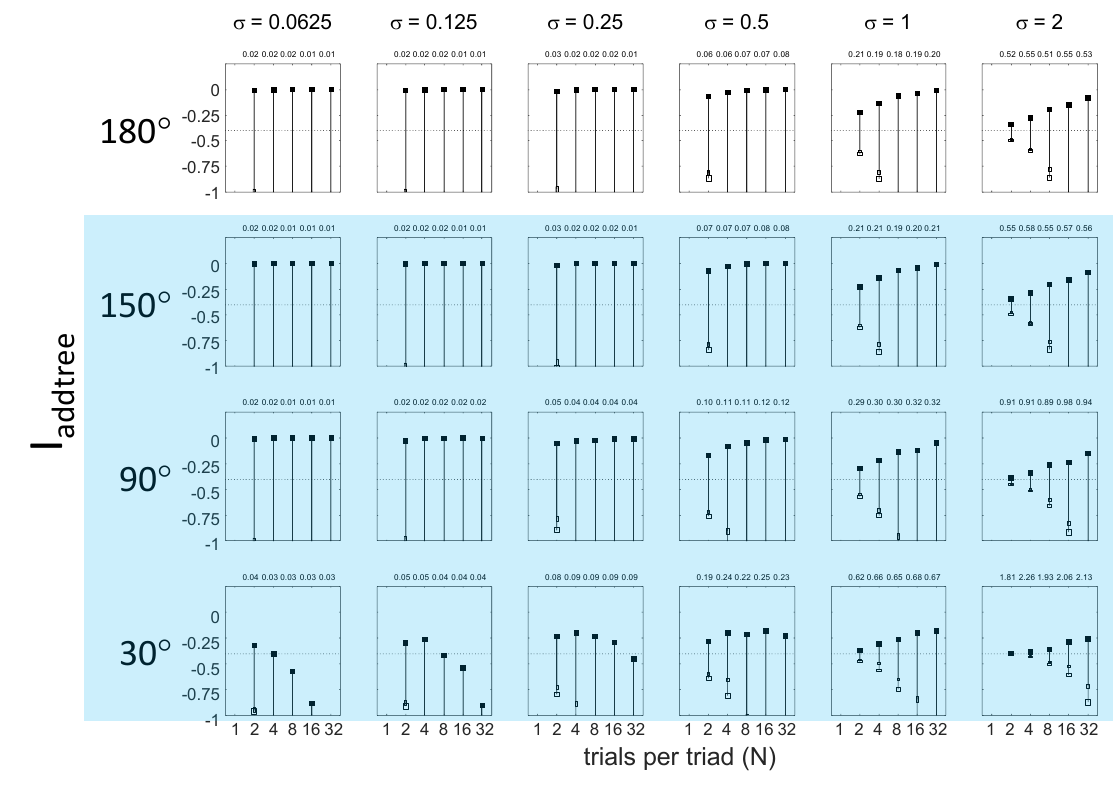}
    \caption{(supplement to main text Figure~\ref{fig:8}).
    Behavior of $I_{\text{addtree}}$ for the 13-point configurations, 
    a range of trials per triad (abscissa), and a range of decision rules ($\sigma$, standard deviation of added noise). 
    Blue overlay indicates the simulated datasets that are incompatible with the addtree property. Other graphical conventions as in Supplementary Figure~\ref{fig:s3}.}
    \label{fig:s4}
\end{figure}

\begin{figure}
    \centering
    \includegraphics[width=\linewidth]{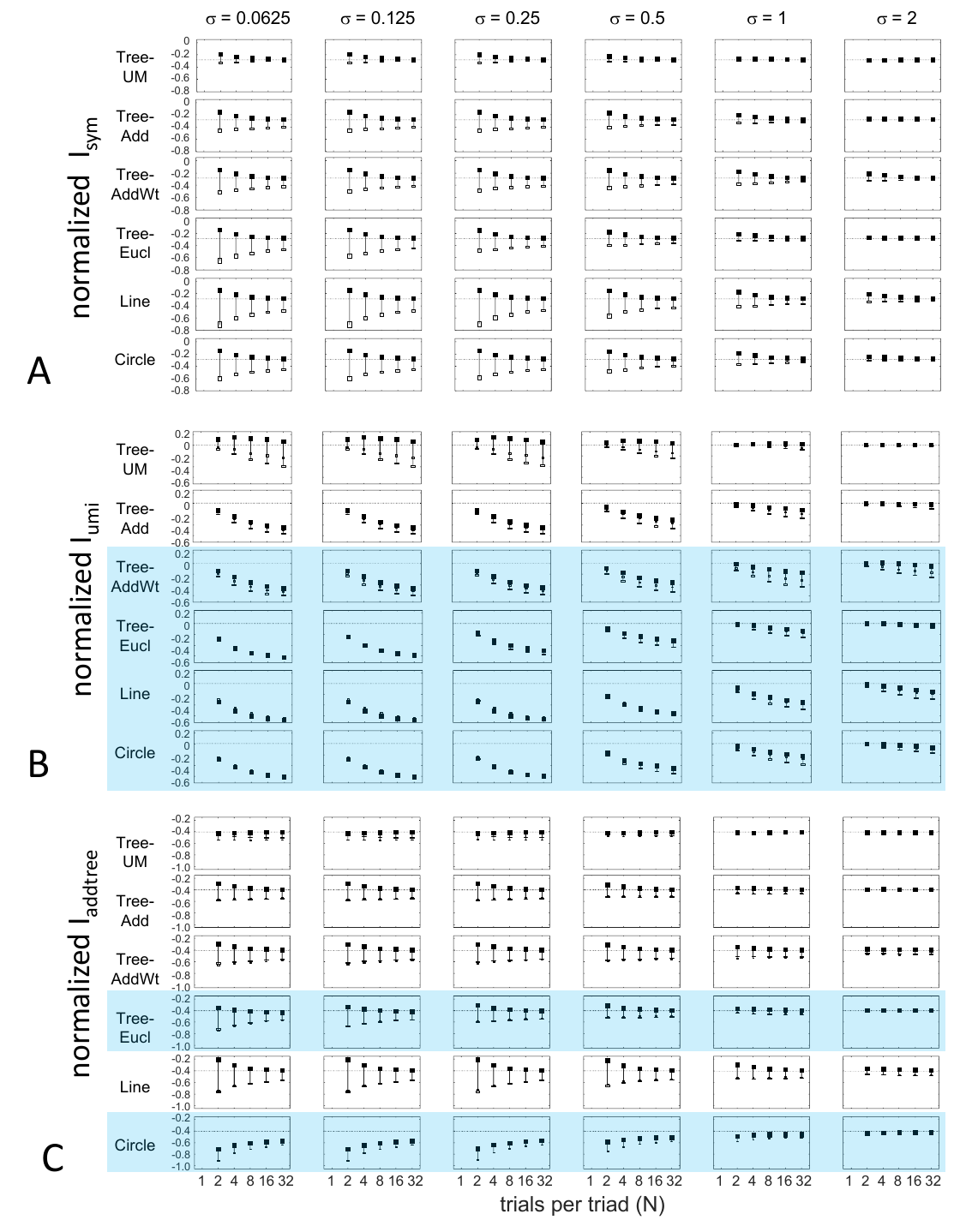}
    \caption{(supplement to main text Figure~\ref{fig:9}).
    Normalized values of $I_{\text{sym}}$ (panel A), $I_{\text{umi}}$ (panel B), and $I_{\text{addtree}}$ (panel C), for the 15-point configurations, a range of trials per triad ($N$, abscissa), and a range of decision rules ($\sigma$, standard deviation of added noise).
    Data from Supplementary Figures~\ref{fig:s1}-\ref{fig:s4}, replotted after normalizing the deviation from the a priori value by the number of trials per triad: $I_0 + (I - I_0) / N$, where 
    $I$ is the index from Supplementary Figures~\ref{fig:s1}-\ref{fig:s4}, $I_0$ is the \textit{a priori} value, and $N$ is the number of trials per triad. 
    Blue overlay indicates the simulated datasets that are incompatible with the ultrametric property (B) or the addtree property (C). Other graphical conventions as in Supplementary Figures~\ref{fig:s1}-\ref{fig:s3}.}
    \label{fig:s5}
\end{figure}

\begin{figure}
    \centering
    \includegraphics[width=\linewidth]{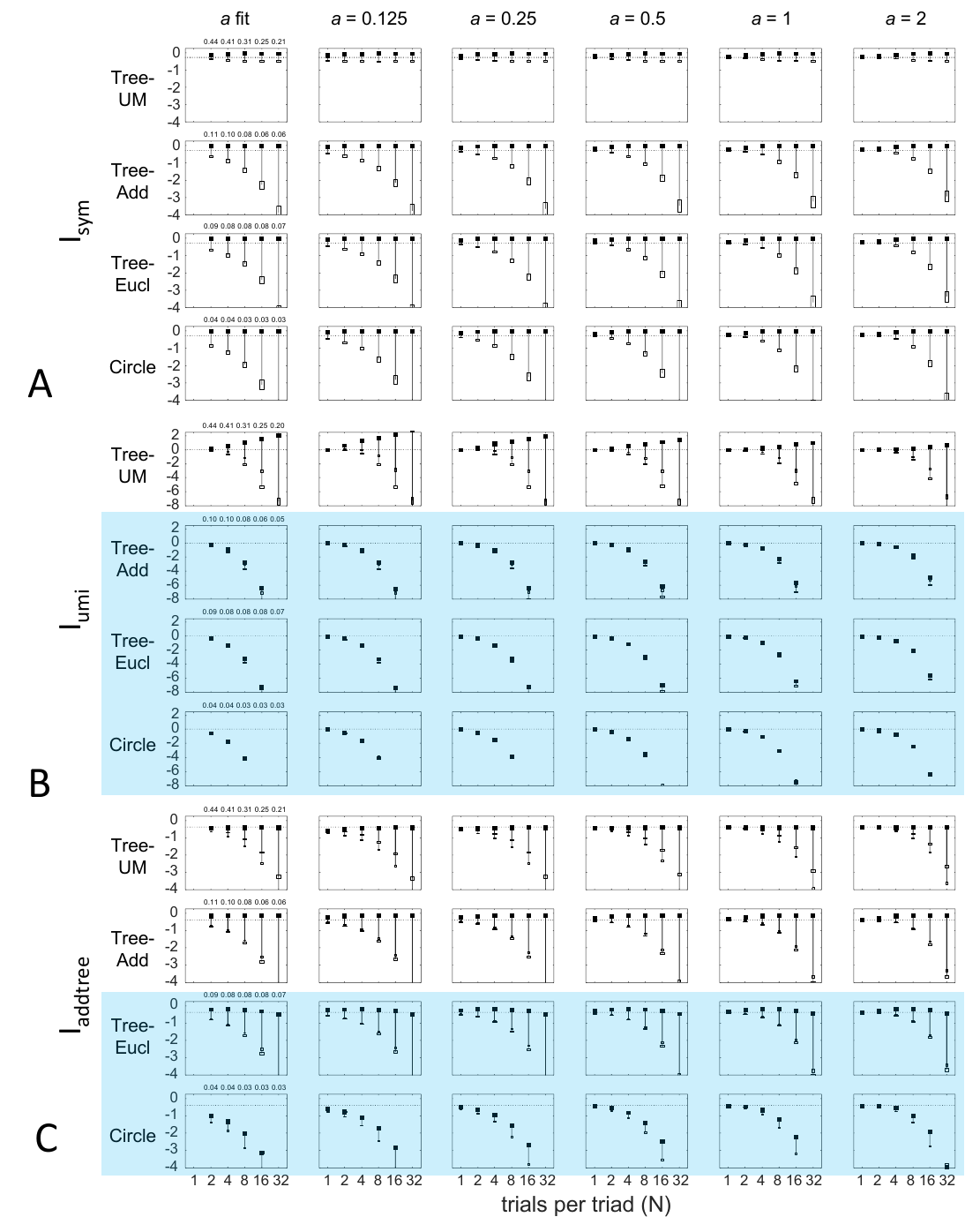}
    \caption{(supplement to main text Figure~\ref{fig:10}).
    Dependence on the prior’s shape parameter, $a$, for $I_{\text{sym}}$ (panel A), $I_{\text{umi}}$ (panel B), and $I_{\text{addtree}}$ (panel C), for the 15-point configurations and a range of trials per triad ($N$, abscissa). 
    The decision rule $\sigma=0.25$ is used. 
    Column 1: $a$ is determined by maximum likelihood; other columns: $a$ is assigned the value indicated over each column. 
    The beta-function prior \eqref{eq:3.4} is used for $I_{\text{sym}}$ and $I_{\text{addtree}}$; the modified prior \eqref{eq:3.27} is used for $I_{\text{umi}}$ with $h=0.001$.
    Blue overlay indicates the simulated datasets that are incompatible with the ultrametric property (B) or the addtree property (C). Other graphical conventions as in Supplementary Figures~\ref{fig:s1}-\ref{fig:s3}.
    }
    \label{fig:s6}
\end{figure}

\begin{figure}
    \centering
    \includegraphics[width=0.93\linewidth]{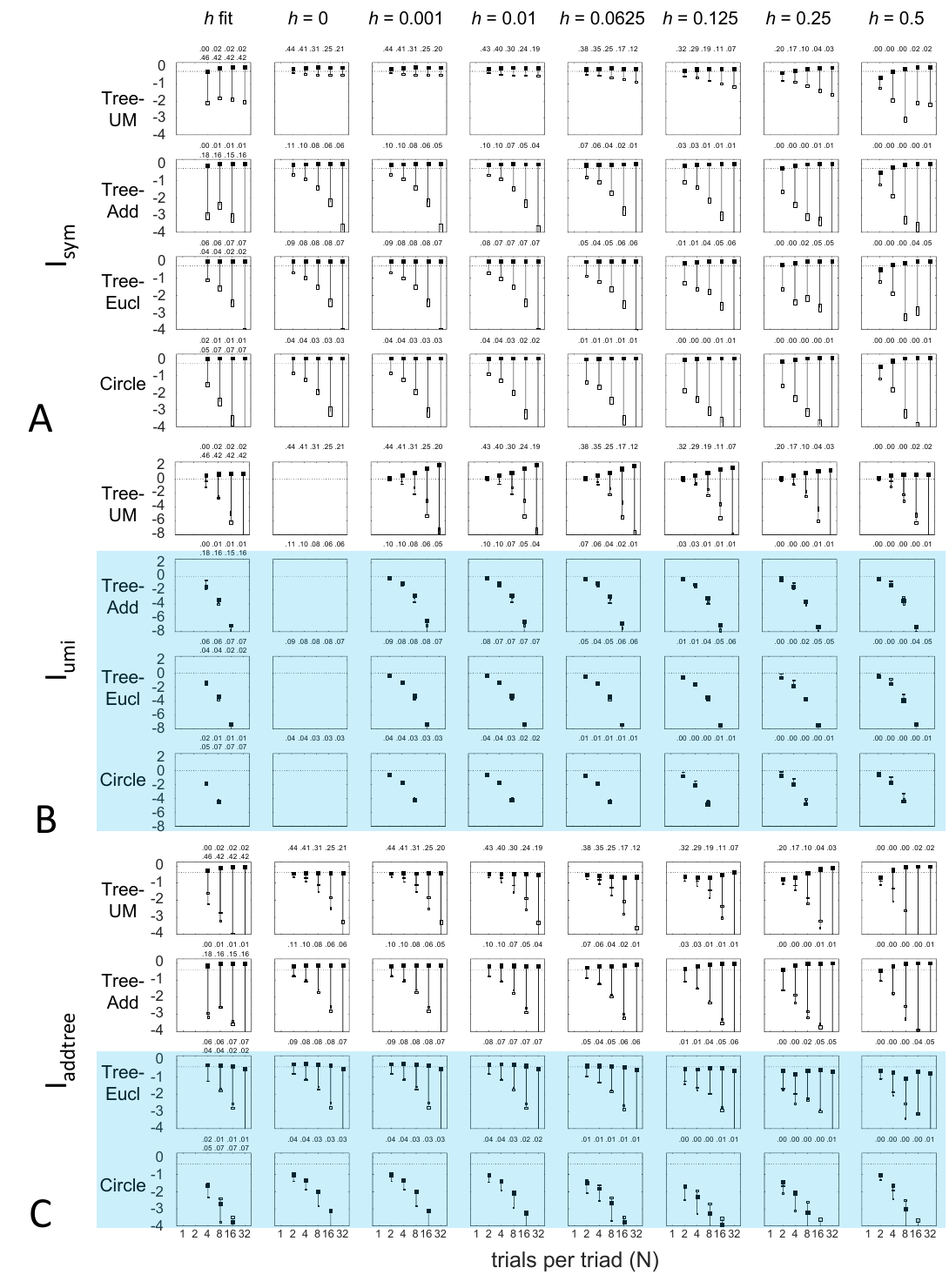}
    \caption{(supplement to main text Figure~\ref{fig:11}).
    Dependence on the prior’s point mass parameter, $h$, for $I_{\text{sym}}$ (panel A), $I_{\text{umi}}$ (panel B), and $I_{\text{addtree}}$ (panel C), for the 15-point configurations and a range of trials per triad ($N$, abscissa). 
    The decision rule $\sigma=0.25$ is used. 
    Column 1: $h$ is determined by maximum likelihood (shown in fine print in lower row over each data point); 
    other columns: $h$ is assigned the value indicated over each column. 
    The parameter $a$ (shown in fine print over each data point) is determined by maximum likelihood in all cases.
    The modified prior \eqref{eq:3.27} is used for all indices.
    Note that $I_{\text{umi}}$ is undefined for $h=0$ (panel B, second column). Blue overlay indicates the simulated datasets that are incompatible with the ultrametric property (B) or the addtree property (C).
    Other graphical conventions as in Supplementary Figures~\ref{fig:s1}-\ref{fig:s3}.
    }
    \label{fig:s7}
\end{figure}

\end{supplement}

\newpage


%
\providecommand{\bysame}{\leavevmode\hbox to3em{\hrulefill}\thinspace}
\providecommand{\MR}{\relax\ifhmode\unskip\space\fi MR }
\providecommand{\MRhref}[2]{%
  \href{http://www.ams.org/mathscinet-getitem?mr=#1}{#2}
}
\providecommand{\bysame}{\leavevmode\hbox to3em{\hrulefill}\thinspace}



\ACKNO{This work was supported by NIH NEI EY7977 (JV), NSF: 2014217 (JV), and the Fred Plum
Fellowship in Systems Neurology and Neuroscience (SAW). GA would like to thank Marianne
Maertens for all her support. We thank Laurence T. Maloney for his many helpful discussions,
especially concerning the addtree model.}

\nocite{*}
\end{document}